\documentclass[a4paper,USenglish,cleveref, thm-restate, numberwithinsect]{lipics-v2021}

\hideLIPIcs  

\title{Search-space Reduction for Boolean MinCSPs via Essential Constraints}

\titlerunning{Search-space Reduction for Boolean MinCSPs via Essential Constraints}

\author{Bart M.\,P. Jansen}{Eindhoven University of Technology, The Netherlands}{b.m.p.jansen@tue.nl}{https://orcid.org/0000-0001-8204-1268}{} 

\author{Ruben F.\,A. Verhaegh}{Eindhoven University of Technology, The Netherlands}{r.f.a.verhaegh@tue.nl}{https://orcid.org/0009-0008-8568-104X}{}

\authorrunning{B.M.P.~Jansen and R.F.A.~Verhaegh}

\Copyright{Bart M.P.~Jansen and Ruben F.A.~Verhaegh}

\ccsdesc[300]{Theory of computation~Graph algorithms analysis}
\ccsdesc[500]{Theory of computation~Approximation algorithms analysis}
\ccsdesc[500]{Theory of computation~Parameterized complexity and exact algorithms}
\ccsdesc[500]{Theory of computation~Constraint and logic programming}

\keywords{fixed-parameter tractability, constraint satisfaction problems}

\category{} 

\relatedversion{} 

\nolinenumbers 

\EventEditors{Pierre Fraigniaud}
\EventNoEds{1}
\EventLongTitle{20th Scandinavian Symposium on Algorithm Theory (SWAT 2026)}
\EventShortTitle{SWAT 2026}
\EventAcronym{SWAT}
\EventYear{2026}
\EventDate{June 17--19, 2026}
\EventLocation{Copenhagen, Denmark}
\EventLogo{}
\SeriesVolume{370}
\ArticleNo{29}
\newcommand{\problem}[3]{
  \vspace{1mm}
  \noindent\fbox{
    \begin{minipage}{0.96\textwidth}
      \begin{tabularx}{\textwidth}{@{\hspace{\parindent}}l X}
        \multicolumn{2}{@{\hspace{\parindent}}l}{\textsc{#1}} \\
        \textbf{Input:} & #2 \\
        \textbf{Task:} & #3 \\
      \end{tabularx}
    \end{minipage}
  }
  \vspace{1mm}
}

\crefname{claim}{Claim}{Claims}
\crefname{lemma}{Lemma}{Lemmas}

\newcommand{\Oh}{\mathcal{O}}
\newcommand{\eps}{\varepsilon}
\newcommand{\F}{\mathcal{F}}
\newcommand{\X}{\mathbf{X}}
\newcommand{\Y}{\mathbf{Y}}
\newcommand{\Z}{\mathbf{Z}}
\newcommand{\C}{\mathcal{C}}
\newcommand{\R}{\mathbb{R}}
\newcommand{\Rone}{\R_{\geq 1}}
\newcommand{\fneq}{\F_{\mathrm{\neq}}}

\usepackage{xspace}
\usepackage{makecell}
\newcommand{\csp}{\textsc{MinCSP}}
\newcommand{\mincsp}[1]{\csp(#1)\xspace}
\newcommand{\minF}{\mincsp{$\F$}}
\newcommand{\wmincsp}[1]{\textsc{Weighted} \mincsp{#1}}
\newcommand{\wminF}{\wmincsp{$\F$}}
\newcommand{\opt}{\mathrm{opt}}
\newcommand{\inv}{^{-1}}

\newcommand{\imp}{\xrightarrow{\mathrm{ess.}}}

\newcommand{\stack}[2]{\begin{matrix} #1 \\ #2 \end{matrix}}
\newcommand{\comma}{\stack{}{,}~}

\newcommand{\true}{\mathrm{\textsc{True}}}
\newcommand{\false}{\mathrm{\textsc{False}}}
\newcommand{\nae}{\mathrm{NAE}}
\newcommand{\orr}{\mathrm{OR}}
\newcommand{\nor}{\mathrm{NOR}}
\newcommand{\xor}{\mathrm{XOR}}
\newcommand{\xnor}{\mathrm{XNOR}}
\newcommand{\neqq}{[\neq]}
\newcommand{\eqq}{[=]}
\newcommand{\horn}{\mathrm{HORN}}

\begin{document}

\maketitle
\begin{abstract}
For a fixed set~$\F$ of Boolean constraint types, a \mincsp{$\F$}-instance consists of a formula~$F$ that applies~$m$ constraints from~$\F$ to a set of~$n$ Boolean variables. The goal is to remove a minimum subset of constraint applications from~$F$ to make the remaining formula satisfiable. 
Previous work characterized how the choice of~$\F$ affects its polynomial-time solvability and approximability. 
We extend a recently introduced preprocessing framework for graph problems to the problem above. Rephrased in the context of CSPs, this framework defines a constraint application from a given formula~$F$ as~\emph{$c$-essential} if it is contained in all~$c$-approximate solutions to~$F$. Being able to efficiently detect these essential parts of a solution reduces the search space of any follow-up FPT algorithms parameterized by the solution size and yields an immediate asymptotic improvement to the runtime of such algorithms. In this work, we present a dichotomy theorem that distinguishes constraint sets~$\F$ for which $c_\mathcal{F}$-essential constraint applications can be detected efficiently for some~$c_{\mathcal{F}} \in \Oh(1)$, from those for which this task is intractable under established complexity-theoretic conjectures. Our results show that for any set~$\F$ of \emph{bijunctive} constraints, there is a polynomial-time algorithm that detects~$\Oh(1)$-essential constraint applications. 
This contrasts the fact that constant-factor approximating a bijunctive \minF-problem is intractable under the Unique Games Conjecture.
\end{abstract}

\clearpage

\section{Introduction}
\label{sec:introduction}
Constraint Satisfaction Problems (CSPs) form a broad class of problems that describe many well-known computational tasks and allow us to study them in a unified language~\cite{DFUVol7}. A \emph{constraint} of arity~$r$ over a domain~$D$ is a function~$f \colon D^r \to \{0,1\}$, and a \emph{constraint set}~$\F$ is a set of such constraints over the same domain and possibly of different arity. In this work, we focus on Boolean CSPs, and assume the domain~$D$ to be~$\{0,1\}$ from now on. For a fixed constraint set~$\F$, a CSP instance contains a formula consisting of multiple applications of constraints from~$\F$ over a set of variables~$\X$. Such a \emph{constraint application} is a pair consisting of a constraint~$f \in \F$ and a sequence~$\X'$, whose length equals the arity of~$f$, of variables from~$\X$ that~$f$ is applied to. A formula consisting of several constraint applications is called an~\emph{$\F$-formula} if all constraint applications come from a constraint in~$\F$.

Though many variants of CSPs exist, the goal in such problems is typically related to the satisfiability of a given formula~$F$. An \emph{assignment}~$s:\X \to \{0,1\}$ over variable set~$\X$ is a function that maps every variable in this set to a value in the domain (in our setting the Boolean domain~$\{0,1\}$). Then, an assignment~$s$ over variable set~$\X$ is said to \emph{satisfy} a constraint application~$(f, \X')$ over~$\X' \subseteq \X$ if~$f(s(\X')) = 1$. We say that an assignment~$s$ satisfies a formula~$F$ if the assignment satisfies all constraint applications in~$F$. Finally, a formula is \emph{satisfiable} if there exists at least one assignment that satisfies it. 

Perhaps the most well-known class of CSPs is the one where, for a given~$\F$-formula~$F$, the task is to determine whether or not~$F$ is satisfiable. Varying the pool of allowed constraints~$\F$ impacts the difficulty of the problem and, in 1978, Schaefer gave a dichotomy theorem that characterizes for which constraint sets~$\F$ the resulting problem is polynomial-time solvable and for which this is NP-hard~\cite{Schaefer78}. Since then, many other variants of CSPs have been studied. We continue this line of research for the following class of optimization problems.

\problem{\wminF}
{An~$\F$-formula~$F$ in which each constraint application has a positive integer weight encoded in unary.}
{Find a minimum-weight set~$X$ of constraint applications in~$F$ s.t.~$F-X$ is satisfiable.}

For an instance~$F$, we denote the weight of an optimal solution by~$\opt(F)$. Additionally, we define the unweighted variant~\minF of the above problem as the restriction in which the weights in the input are all~$1$. Then, the resulting task is equivalent to finding a minimum-\emph{size} set~$X$ for which~$F-X$ is satisfiable.

This problem is NP-hard for many constraint classes~$\F$. As such, e.g.\ approximation algorithms and fixed-parameter tractable (FPT) algorithms parameterized by the solution size~$t$ (i.e. algorithms that decide whether~$\opt(F) \leq t$ in~$g(t) \cdot \mathrm{poly}(n)$ time on formulas of size~$n$, for some computable function~$g$) have been studied. This has resulted in dichotomy theorems for the problem's constant-factor approximability in polynomial time~\cite{KhannaSTW00}, its solvability in FPT time~\cite{KimKPW25}, and its constant-factor approximability in FPT time~\cite{BonnetEM16}.

We study a class of preprocessing algorithms for this problem. Various preprocessing algorithms are known to be very useful in practice as observed in SAT solvers for example~\cite{AchterbergBGRW16}, motivating a study of the extent to which polynomial-time preprocessing can aid in solving \wminF. Preprocessing with the goal of reducing to an instance whose size is polynomially bounded in~$\opt(F)$ has been studied for several problems through the notion of kernelization~\cite{FominLSM19}. These studies include, for example, the \textsc{Min $2$CNF-Deletion} problem (a.k.a. \textsc{Almost $2$-SAT}~\cite{KratschW20}). Here, we take a different perspective on preprocessing with the goal of investigating how preprocessing can help for inputs whose solutions are large, that is, of size comparable to the total input size.

\subparagraph{Essential constraint applications} We extend a recent framework by Bumpus et al.~that introduces the notion of~$c$-essential objects~\cite{BumpusJK24}. Although originally defined within the context of vertex selection problems on graphs, we can rephrase their original definition to apply to~\minF problems. Formally, for a given constant~$c \in \Rone$ and an~$\F$-formula~$F$, we call a constraint application in this formula~\emph{$c$-essential} if it is contained in all~$c$-approximate solutions to the~\textsc{(Weighted)}~\minF instance~$F$. 

One could expect such~$c$-essential objects to somehow stand out in the solution space, since we cannot even form a~$c$-approximation without them. This might make it feasible to identify such objects during preprocessing, which would be useful for problems like \minF in which the goal is to arrive at a certain structure by deleting a minimum number of objects. All essential objects are in particular part of all optimal solutions, so that, after finding and reducing them from the input, it suffices to solve an instance that looks for a strictly smaller solution. Since many FPT algorithms have a running time that scales exponentially with the solution size, each element of the solution we find during preprocessing decreases the running time by a multiplicative factor. As such, we are interested in preprocessing that finds essential objects, if there are any. This algorithmic task is formalized in the following problem statement, defined for a fixed constraint set~$\F$ and constant~$c \in \Rone$.

\problem{$c$-essential Detection for \wminF}
{An~$\F$-formula~$F$ with associated weights, and an integer~$k$}
{Find a subset~$S$ of constraint applications in~$F$ such that:
\begin{enumerate}[(G1)]
    \item \label{itm:G1} if~$\opt(F) \leq k$, then there is an optimal solution to~$F$ containing all of~$S$, and
    \item \label{itm:G2} if~$\opt(F) = k$, then~$S$ contains all~$c$-essential constraint applications.
\end{enumerate}
}

Again we define the unweighted variant of the problem by requiring all weights in the input to be~$1$. Note that this problem is, in two ways, easier than the task of simply returning a set containing exactly the~$c$-essential constraint applications in an input. First, the output set is allowed to contain more constraint applications, as long as they belong to an optimal solution. Secondly, the algorithm only needs to give a meaningful output if the integer~$k$ in its input provides a correct guess or upper bound on the optimal solution of the input. Nonetheless, being able to solve this detection task suffices to speed up FPT algorithms parameterized by solution size, in a black-box fashion.
In particular, if for a constraint set~$\F$ and a constant~$c \in \Rone$, (1) there is a polynomial-time algorithm for~\textsc{$c$-essential Detection for \minF}, and (2) there is an algorithm that outputs a solution to \minF of size at most~$p$, if one exists, in~$g(p) \cdot \mathrm{poly}(n)$ time for some computable function~$g$, then there is an algorithm that computes an optimal solution of \minF in~$g(p-q) \cdot \mathrm{poly}(n)$ time, where~$q$ is the number of~$c$-essential constraint applications in the input and~$p = \opt(F)$. Bumpus et al.\ have proven this guarantee in the setting of graph problems, but their proof is easily seen to be valid for \minF as well \cite[Theorem 5.1]{BumpusJK24}.

Using this framework, several positive results and hardness results have been achieved for graph problems such as \textsc{Vertex Multicut} and \textsc{Directed Feedback Vertex Set}~\cite{BumpusJK24,JansenV26}. Here, we extend the framework to achieve new results for CSPs.

\subparagraph{Our results and other dichotomies} We present a dichotomy theorem that characterizes for which constraint sets~$\F$ there is a constant~$c$ such that~\textsc{$c$-essential Detection for \minF} is polynomial-time solvable. Specifically, we show that this is true for constraint sets~$\F$ that are~$0$-valid,~$1$-valid, IHS-B, or bijunctive (these terms are defined in \cref{sec:preliminaries}). The dichotomy reveals that all other constraint sets~$\F$ yield a \minF problem for which $c$-essential detection is hard for every~$c \in \Rone$, under established hardness assumptions. 

\begin{theorem} \label{thm:dichotomy}
    Let $\F$ be a finite Boolean constraint set.
    \begin{enumerate}
        \item \label{itm:dichotomy-1} If $\F$ is $0$-valid, $1$-valid, or $2$-monotone, then there is a polynomial-time algorithm for \textsc{$c$-essential Detection for \minF} for any ~$c \in \Rone$, since even \wminF is polynomial-time solvable in this case~\cite{KhannaSTW00}.
        \item \label{itm:dichotomy-2} Otherwise, if $\F$ is IHS-B or bijunctive, then there is a constant $c_\F$ such that \textsc{$c_\F$-essential Detection for \wminF} is solvable in polynomial time.
        \item \label{itm:dichotomy-3} Otherwise, if $\F$ is affine, then \textsc{$c$-essential Detection for \minF} is NP-hard for every constant~$c \in \Rone$ under the Unique Games Conjecture.
        \item \label{itm:dichotomy-4} Otherwise, if $\F$ is weakly positive or weakly negative, then \textsc{$c$-essential Detection for \minF} is NP-hard for every constant $c \in \Rone$.
        \item \label{itm:dichotomy-5} Otherwise, \textsc{$c$-essential Detection for \minF} is NP-hard for every constant $c \in \Rone$, even on formulas~$F$ with~$\opt(F) \leq 1$.
    \end{enumerate}
\end{theorem}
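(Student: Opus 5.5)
The proof splits along the five cases, and the order of the cases makes them mutually exclusive, so each can be handled on its own. Case~\ref{itm:dichotomy-1} is immediate. If \wminF is polynomial-time solvable by~\cite{KhannaSTW00}, we compute an optimal solution~$X$ and output~$S = X$. Then (G1) holds trivially. For (G2), every $c$-essential application lies in every $c$-approximate solution, in particular in the optimal solution~$X$, so $S$ contains all of them.

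For the hardness cases~\ref{itm:dichotomy-3}--\ref{itm:dichotomy-5}, the plan is to reduce from the hardness of constant-factor approximation, using the observation (as in~\cite{BumpusJK24}) that essential detection lets us decide a gap problem. Given an instance~$F$, we pad it with a gadget: a set of~$M$ disjoint copies of a small unsatisfiable sub-formula whose unique minimal fix is a single designated application. Those designated applications are then $c$-essential exactly when $\opt(F)$ is small relative to~$M$, and they are not essential when a cheaper alternative exists. Running the detection algorithm with the guess~$k$ and checking whether it outputs the padding applications separates the yes and no cases of the gap problem.
\begin{itemize}
\item For affine~$\F$, we use UGC-hardness of constant-factor approximating \textsc{MinCSP} for affine sets that are not 2-monotone, IHS-B or bijunctive (e.g.\ \textsc{Min-Uncut}-type or Nearest Codeword, following~\cite{KhannaSTW00}).
\item For weakly positive or negative~$\F$, we use the NP-hardness of $O(1)$-approximation; these contain \textsc{Horn Deletion}, which is hard to approximate within polylogarithmic factors.
\item For the remaining sets, Schaefer's theorem gives NP-hardness of deciding satisfiability. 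With~$\opt(F)\le 1$ we add one fresh unsatisfiable gadget: deciding whether the gadget application is essential, versus some original application being deletable instead, encodes satisfiability.
\end{itemize}
Expressing the gadgets requires that~$\F$ implement the needed relations. We would obtain this through standard implementation (co-clone) arguments, handling the lack of constants via non-$0$/$1$-validity.

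The core is case~\ref{itm:dichotomy-2}. IHS-B sets reduce to a hitting-set / directed-cut-like structure, for which we would adapt the vertex-multicut essential-detection technique of~\cite{BumpusJK24}, based on LP rounding. The main obstacle is the bijunctive case, where $O(1)$-approximation itself is UG-hard. The plan is to reduce the bijunctive \wminF instance to \textsc{Min 2CNF-Deletion} via implementations with bounded blowup, and then to the implication graph, where violated 2-clauses correspond to $x\to\neg x\to x$ paths. Detection would proceed by solving the natural LP relaxation (or a flow/cut relaxation) and taking applications whose fractional value is at least a constant threshold, justified by an exchange argument: if an application~$a$ with large LP weight is not in some optimal solution, then swapping in~$a$ yields an optimal solution containing~$S$ (giving G1). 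Conversely, a $c$-essential application must receive LP value above~$1/c'$, since otherwise a rounding restricted to a neighborhood of~$a$ would produce a $c$-approximate solution avoiding~$a$ (giving G2). The delicate point is making this local rounding succeed despite the global UG-hardness. I expect to exploit that only $O(\opt)$ mass near essential applications must be rounded, for instance by region growing around $x$ and $\neg x$ in the implication graph.
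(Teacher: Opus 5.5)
Your Case~1 argument is correct, but the two substantive parts have genuine gaps.

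\textbf{Bijunctive case.} The exchange step for (G1) does not work. If $a\notin X$ for an optimal $X$, then $X\cup\{a\}$ is not optimal, and nothing makes an element of $X$ redundant. So a large fractional value on $a$ does not put $a$ into an optimal solution, let alone all selected applications at once. For (G2) you need an $O(1)$-loss rounding, which is exactly what the UG-hardness rules out in general, and you do not show why localizing around $a$ escapes this.

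The missing idea is that no approximation of the global optimum is needed. For each application $C$ separately, compute a set $P_C$ with $C\notin P_C$ and $|P_C|\le c\cdot|X_C|$, where $X_C$ is a cheapest solution avoiding $C$, such that $(X\setminus\{C\})\cup P_C$ is a solution for \emph{every} solution $X$. Then output exactly those $C$ with $|P_C|>ck$ (\cref{lem:main-positive-ingredient}).
\begin{itemize}
\item (G1) holds because, for an optimal $X$ with $|X|\le k$, any $C\notin X$ has $|P_C|\le c|X|\le ck$, so $S\subseteq X$.
\item (G2) holds because $(X\setminus\{C\})\cup P_C$ is a $(c+1)$-approximate solution avoiding $C$.
\end{itemize}
For 2CNF, such a $P_C$ comes from min-cuts in the implication graph. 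For every clause $(a_i\vee b_i)$ of $C$, break all $(a_i,\neg a_i)$-paths or all $(b_i,\neg b_i)$-paths, using only arcs not belonging to $C$.
\begin{itemize}
\item Every solution avoiding $C$ must do this, since otherwise $a_i=b_i=0$ is forced. This gives the $2d^2$ bound.
\item After deleting $P_C$, any satisfying assignment of $F-X$ can be repaired clause by clause. Set to $1$ all literals reachable from whichever of $a_i,b_i$ has no path to its negation.
\end{itemize}
For IHS-B you need no LP rounding either. A constant-factor approximation of the weighted problem, run with a prohibitive weight on $C$, already provides such a $P_C$ (\cref{lem:weighted-approx-means-essential-detection}).

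\textbf{Hardness cases.} Padding with gadgets that are unsatisfiable on their own cannot separate the gap cases, because of how (G1) and (G2) are quantified.
\begin{itemize}
\item (G2) constrains the output only when $k=\opt$ exactly, and that value is unknown on yes-instances.
\item (G1) only forces $S$ into some optimal solution, and those contain your designated applications anyway.
\item For $k<\opt$ nothing is guaranteed at all.
\end{itemize}
The same objection defeats the fresh unsatisfiable gadget in Case~5. With $k=1$, an unsatisfiable $F$ gives $\opt\ge 2$, so the output is unconstrained.

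What you need is a single \emph{switch} application whose deletion alone satisfies the entire formula. It gets weight exactly the guess $k$, and all other gadget applications are made heavy. Examples are $(z_2=z_3)$ threaded through all $\xor_4$ applications, $\false(z)$ in the Set Cover encoding, and $(w_1=w_2)$ inside every $\nae_5$. On no-instances the switch alone is the unique optimal solution, of value exactly $k$, and it is $c$-essential, so (G2) forces it into $S$. On yes-instances $\opt<k$, so (G1) excludes it.

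Transferring hardness to general $\F$ also cannot go through co-clones or pp-definitions. These preserve satisfiability, not the space of $c$-approximate solutions. You need implementations with a single violable head whose remaining applications are satisfiable under every assignment ((I1) and (I2) of \cref{def:essential-implementation}), together with heavy weights on non-heads.

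Finally, handling missing constants via non-$0$/$1$-validity fails for complementive sets such as $\{\nae_3\}$. These implement only complementive relations, hence never $\true$ or $\false$. So $\nae_3$ must be essentially implemented from $\{\F,\neqq\}$ directly, as in \cref{lem:complementive-imps-nae3}.
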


As mentioned, other dichotomies exist that characterize the efficient solvability and approximability of \minF. \cref{tab:comparison} shows a comparison between some of these that closely relate to our work. It focuses on the boundary between positive and hardness results. The table includes, in left-to-right order, results on constant-factor approximating the problem in polynomial time~\cite{KhannaSTW00}, exactly solving the problem in FPT time, parameterized by solution size~\cite{KimKPW25}, and constant-factor approximating the problem in FPT time~\cite{BonnetEM16}. A summary of our results from \cref{thm:dichotomy} is included in the final column. In a given row, the table indicates whether the specified tasks for constraint sets from that category --- and not from any category from a previous row --- are tractable under a suitable hardness assumption.

\begin{table}[t]
\centering
\begin{tabular}{p{11em}||m{4.5em}|m{7em}|m{4.5em}||m{4em}}
Restriction on~$\F$ & \makecell{Poly-time\\ approx.~\cite{KhannaSTW00}} & \makecell{FPT \\ exact~\cite{KimKPW25}} & \makecell{FPT\\ approx.~\cite{BonnetEM16}} & \makecell{Essential\\ detection}   \\ \hline\hline
$0$-valid, $1$-valid, \mbox{$2$-monotone} & \makecell{\emph{easy}}& \makecell{\emph{easy}}      & \makecell{\emph{easy}} & \makecell{\emph{easy}} \\ \hline
IHS-B                               & \makecell{\emph{easy}} & \centering{\emph{sometimes easy*}} & \makecell{\emph{easy}} & \makecell{\emph{easy}} \\ \hline
Bijunctive                         & \makecell{hard}        & \centering{\emph{sometimes easy*}} & \makecell{\emph{easy}} & \makecell{\emph{easy}} \\ \hline
None of the above                  & \makecell{hard}        & \makecell{hard}                        & \makecell{hard}  & \makecell{hard}               
\end{tabular}
\caption{A comparison of dichotomies for results on solving or approximating \minF. *Solving the problem on IHS-B or bijunctive constraint sets is FPT if an underlying graph (respectively the arrow graph or Gaifman graph) of the constraint set is~$2K_2$-free. It is~W[$1$]-hard otherwise~\cite{KimKPW25}.} \label{tab:comparison}
\end{table}

We point out that~$c$-essential detection appears to be a strictly easier task than approximating the problem in polynomial-time or exactly solving the problem in FPT time. It is in particular interesting to see that all cases that admit an exact FPT algorithm also admit a polynomial-time~$c$-essential detection algorithm. Based on previous work on~$c$-essential objects mentioned earlier, this implies that all these FPT algorithms for~\minF can be updated to yield an improved runtime dependence on the complexity parameter. We also note that our results show the same boundary of hardness as the existing dichotomy on constant-factor approximating the problem in FPT time.

Finally, we mention some of the existing work on preprocessing, specifically kernelization and compression, for \csp\ and related problems. Unlike kernelization, compression allows problem instances to be transformed to an instance of a possibly different problem. Jansen and Włodarczyk have characterized the best-possible compression size achievable for \minF in terms of the number of variables~\cite{JansenW24}. Related CSPs include determining whether a given~$\F$-formula can be satisfied by an assignment that sets at least, or at most, a given number~$k$ of variables to~$1$. Kratsch and Wahlström characterized for which constraint sets the minimization problem admits a kernel of size polynomial in~$k$~\cite{KratschW10}. Later, such a characterization was also given for the maximization problem~\cite{KratschMW16}.

\subparagraph{Techniques} We proceed by giving a brief overview of the techniques we use to obtain our results. Our algorithmically most interesting positive result is the polynomial-time~$c$-essential detection algorithm on bijunctive constraint sets. To achieve it, we use the insight that bijunctive formulas admit a representation in graph form. Then, for each constraint application in the input formula, we solve a sequence of separation problems on this graph to determine whether to put it in the output set of our~$c$-essential detection algorithm or not.

Next, for each of our hardness results, we provide a two-step approach. To show, for a given category of constraint sets, that~$c$-essential detection is hard for \minF, we start by formulating a canonical constraint set from that category and showing that the problem is hard for that set. Then, we show that the corresponding problem for all other constraint sets in that category is as hard as the problem for the canonical constraint set.

\newcommand{\Fcan}{\F_{\mathrm{can}}}
To achieve this second step, we show that an \emph{essential implementation} of the canonical set~$\Fcan$ for a category can be made via any constraint set~$\F$ in that category. That is, for every constraint application over~$\Fcan$, there is a conjunction of constraint applications over~$\F$ that is equivalent to it, while satisfying an additional property as defined in \cref{def:essential-implementation}. The latter ensures that for every constant~$c$, a reduction can be made from~\mincsp{$\Fcan$}to~\wminF that preserves the entire space of~$c$-approximate solutions. 

We remark that our definition of essential implementations is strictly stronger than the very similar notion of strong and perfect implementations by Khanna et al.~\cite{KhannaSTW00}. An implementation from their framework also yields a reduction between CSP instances over different constraint sets. They showed that this reduction preserves the constant-factor approximability of the involved problems, but, in contrast to our result, it need not preserve the exact approximation ratio. We were surprised to see that most of the implementations we encountered in related work (also beyond strong and perfect implementations) happen to fit our definition of an essential implementation as well. This reveals even stronger links between problems that have already been shown to be reducible to one another.

\subparagraph{Organization} The rest of the paper is organized as follows. In \cref{sec:preliminaries}, we define some fundamental concepts and notation. \cref{sec:positive} contains our positive results, proving Case~\ref{itm:dichotomy-2} of \cref{thm:dichotomy}. \cref{sec:hardness} contains our hardness results, starting with an introduction to essential implementations. Then, in \cref{ssec:affine,ssec:wpos-wneg,ssec:other}, we prove Cases~\ref{itm:dichotomy-3},~\ref{itm:dichotomy-4}, and~\ref{itm:dichotomy-5} of \cref{thm:dichotomy} respectively. We conclude with open questions in \cref{sec:conclusion}.

\section{Preliminaries}
\label{sec:preliminaries}
A constraint is a Boolean function~$f \colon \{0,1\}^p \to \{0,1\}$ for some integer~$p$, which is its \emph{arity}. 
We assume all individual Boolean constraints to be satisfiable: unsatisfiable constraint applications in a formula~$F$ must be part of every solution to~$F$ and could thus be filtered in a simple preprocessing step. We call two constraints~$f_1$ and~$f_2$ \emph{equivalent} if they have the 22same arity and are satisfied by the same set of assignments. We denote this as~$f_1 \equiv f_2$.

Next, we define some standard Boolean constraints that are used throughout the paper. We write~$\true$ or~$\false$ for the arity-$1$ constraint that is satisfied if and only if its input variable is respectively~$1$ or~$0$. For non-negative integers~$p$ and~$q \leq p$, we denote the \mbox{arity-$p$} constraint that takes the logical OR over $p$ literals, of which $q$ are negated, by~$\orr_{p,q}$. Hence,~$\orr_{p,q}$ denotes the constraint~$(\neg x_1 \vee \ldots \vee \neg x_q \vee x_{q+1} \vee \ldots \vee x_p)$. We use shorthands~$\orr_p$ for~$\orr_{p,0}$, and~$\nor_p$ for~$\orr_{p,p}$. We denote the constraint~$(x_1 \oplus \ldots \oplus x_p = 1)$ by~$\xor_p$ and the constraint~$(x_1 \oplus \ldots \oplus x_p = 0)$ by~$\xnor_p$. We use shorthands~$\neqq$ for~$\xor_2$ and~$\eqq$ for~$\xnor_2$. We denote the constraint of arity~$p$ that is satisfied if and only if its~$p$ input variables are not all equal by~$\nae_p$.

Now, we can define different classes of constraints. First, we call a constraint~\emph{$0$-valid} or~\emph{$1$-valid} if it is satisfied by the all-$0$ or the all-$1$ assignment respectively. We call a constraint \emph{IHS-B$^+$} if it can be expressed as a conjunction of~$\orr_p$ constraints for integers~$p$, the constraint~$\orr_{2,1}$, and the arity-$1$ constraint~$\false$. Likewise, we call constraints \emph{IHS-B$^-$} if they can be expressed as conjunction of~$\nor_p$ constraints for integers~$p$, the constraint~$\orr_{2,1}$, and the arity-$1$ constraint~$\true$. We call a constraint \emph{bijunctive} if it can be expressed in $2$CNF (i.e.: as conjunction of~$\orr_2$,~$\orr_{2,1}$, and~$\nor_2$ constraints). As subset of the bijunctive constraints, $2$-monotone constraints are defined as those that can be expressed in the form~$(a_1 \wedge \ldots \wedge a_p) \vee(\neg b_1 \wedge \ldots \wedge \neg b_q)$ for some~$p,q \geq 0$. We call a constraint \emph{affine} if it can be expressed as conjunction of linear constraints mod~$2$ (i.e.: as conjunction of~$\xor_p$ and~$\xnor_q$ constraints for integers~$p\geq1$ and integers~$q\geq1$). We call a constraint \emph{weakly positive} (resp.\ \emph{weakly negative}) if it can be expressed in CNF with all clauses containing at most one negated (resp.\ positive) variable.

We call a constraint set~$\F$~$0$-valid / $1$-valid / bijunctive / $2$-monotone / affine / weakly positive / weakly negative if every constraint~$f \in \F$ satisfies the corresponding property. We call~$\F$ IHS-B if all its constraints are IHS-B$^+$ or all its constraints are IHS-B$^-$ and we call a single constraint~$f$ IHS-B if it is either IHS-B$^+$ or IHS-B$^-$.

For an assignment~$s$ to~$\X$, we define its \emph{complement} as the assignment~$\overline{s}$ to~$\X$ that satisfies~$\overline{s}(x) = 1-s(x)$. If the complement of every satisfying assignment for a constraint (application)~$f$ also satisfies~$f$, we call it a \emph{complementive} constraint (application). If all constraints in a constraint set~$\F$ are complementive, we call~$\F$ complementive as well. 

We denote the all-zero (resp. all-one) assignment to~$\X$ by~$K_0$ (resp.~$K_1$). For a subset~$V \subseteq \X$, we denote the assignment that assigns~$1$ (resp.~$0$) to all variables in~$V$ and~$0$ (resp.~$1$) to all variables in~$\X \setminus V$ by~$K_{0,V}$ (resp.~$K_{1,V}$).

For a constraint~$f$ over~$\X$, pairwise disjoint subsets~$V_1, \ldots, V_p$, and variables~$v_1, \ldots, v_p$ (that are not necessarily in~$\X$), we use~$f\left[ \stack{V_1}{v_1} \comma \ldots \comma \stack{V_p}{v_p} \right]$ to denote the constraint obtained by replacing every occurrence of a variable from some~$V_i$ by~$v_i$. If a set~$V_i$ is a singleton set, we may replace it by its singleton element in this notation instead. We also allow a variable~$v_i$ to be either of the constants~$0$ or~$1$ in this notation, and we use this to denote a constraint obtained from~$f$ by replacing some of this variables by the respective constant.

\subsection{Weighted CSPs}
Next, we note that we can use a reduction that transforms integer-weighted formulas into unweighted ones by duplicating constraint applications as many times as their weight, to show the following.

\begin{lemma}
\label{lem:unweighted-to-weighted}
    Let~$\F$ be a constraint set and~$c \in \Rone$. If \minF admits a polynomial-time $c$-essential detection algorithm, then so does \wminF if all weights are integers that are polynomial in the number of constraint applications.
\end{lemma}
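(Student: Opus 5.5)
Given a weighted instance $(F,k)$, we build the unweighted instance $F'$ by replacing every constraint application $C$ of weight $w(C)$ with $w(C)$ distinct copies $C^{(1)},\ldots,C^{(w(C))}$ over the same variable sequence. Since the weights are integers polynomial in the number $m$ of constraint applications, $F'$ has polynomially many applications and can be built in polynomial time. We then run the assumed $c$-essential detection algorithm for \minF on $(F',k)$, obtaining a set $S'$. As output we return $S=\{C \in F : C^{(i)} \in S' \text{ for some } i\}$.

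The first step is to show that solutions correspond closely. Every solution $X$ of $F$ yields a solution $X'=\{C^{(i)} : C\in X,\ 1\le i\le w(C)\}$ of $F'$ with $|X'| = w(X)$, because $F'-X'$ consists of copies of the constraints of $F-X$. Conversely, take a solution $Y$ of $F'$ and let $X_Y=\{C : \text{all copies of } C \text{ lie in } Y\}$. Then $F-X_Y$ is satisfied by any assignment satisfying $F'-Y$: each surviving $C$ has a surviving copy, which enforces the same constraint. Also $w(X_Y)\le |Y|$. Hence $\opt(F)=\opt(F')$. Moreover, an optimal (or $c$-approximate) $Y$ that contains only some copies of $C$ can be shrunk to $X_Y'$ without loss. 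So if $Y$ is optimal, it contains either all or none of the copies of each $C$, since otherwise $|X_Y'|<|Y|$. Optimal solutions therefore correspond exactly, via $X\mapsto X'$.

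For (G1), assume $\opt(F)\le k$. Then $\opt(F')\le k$, so some optimal solution $Y$ of $F'$ contains $S'$. By the previous step $Y=X'$ for an optimal solution $X$ of $F$. Whenever some copy of $C$ is in $S'\subseteq X'$, we get $C\in X$, so $S\subseteq X$. For (G2), assume $\opt(F)=k$, so $\opt(F')=k$, and let $C$ be $c$-essential in $F$. We must show that some copy $C^{(1)}$ is $c$-essential in $F'$. Suppose instead that some solution $Y$ with $|Y|\le c\cdot k$ avoids $C^{(1)}$. Then $X_Y$ is a solution of $F$ with $w(X_Y)\le|Y|\le c\cdot\opt(F)$, and $X_Y$ does not contain $C$ because not all copies of $C$ lie in $Y$. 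This contradicts $C$ being $c$-essential. Hence $C^{(1)}\in S'$ and so $C\in S$.

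The only delicate point is the asymmetry in the converse mapping. We must map $Y$ to $X_Y$, the constraints with \emph{all} copies deleted, rather than those with some copy deleted. This choice is what keeps the weight bounded by $|Y|$ and makes essentiality transfer. Everything else is routine bookkeeping, and the polynomial bound on the weights is needed only for the running time.
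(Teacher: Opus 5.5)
Your proof is correct and takes essentially the same route as the paper. Both duplicate each constraint application $w(C)$ times, show $\opt(F)=\opt(F')$ with optimal solutions taking all or none of the copies, and pull back the detected set by keeping every $C$ that has some copy in $S'$. Your map $Y \mapsto X_Y$ (the constraints with \emph{all} copies deleted) is a slightly cleaner replacement for the paper's appeal to inclusion-wise minimal solutions when transferring $c$-approximate solutions in (G2).
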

\begin{proof}
    We prove the statement by providing a reduction from the weighted to the unweighted setting. Since all weights in the weighted setting are integers, we replace every constraint application~$C$ with weight~$w(C)$ in the weighted formula $F$ by~$w(C)$ unweighted copies~$C_1, \ldots, C_{w(C)}$ of that same constraint application. This yields an equivalent unweighted formula~$F'$. First, we prove the following two claims, which show that the solution space is, in some way, preserved.

    \begin{claim} \label{clm:opt-uses-all-copies}
        Let~$S'$ be an inclusion-wise minimal solution for~$F'$, and $C$ a constraint application in $F$. Then,~$S'$ either contains all or none of the copies~$C_1, \ldots, C_{w(C)}$.
    \end{claim}
    \begin{claimproof}
        Suppose, for contradiction, that~$S'$ contains some but not all copies of~$C$. Then, since~$S'$ is a solution, there is an assignment~$s$ that satisfies all constraint applications in~$F'-S'$. This includes at least one copy of $C$, since $S'$ does not contain all of them.~$S'$ does contain at least one copy of~$C$, which is clearly also satisfied by~$s$, so removing this copy from~$S'$ yields a strictly smaller solution. This contradicts the minimality of~$S'$.
    \end{claimproof}

    \begin{claim} \label{clm:weight-reduction-preserves-solutions}
        A set~$S$ is an (optimal) solution for~$F$ if and only if the set~$S' := \bigcup_{C \in S} \{C_1, \ldots, C_{w(C)}\}$ is an (optimal) solution for~$F'$. Hence, $\opt(F) = \opt(F')$.
    \end{claim}
    \begin{claimproof}    
        It is evident that~$F-S$ is satisfiable if and only if~$F'-S'$ is satisfiable: both formulas contain the exact same set of constraint applications.~$F'-S'$ might contain more copies of the same constraint application, but the two formulas are equisatisfiable nonetheless. 
        
        Moreover, by \cref{clm:opt-uses-all-copies}, if a solution to~$F'$ is optimal, it is always of this specific form: consisting, for a given set of constraint applications in~$F$, precisely of their copies in~$F'$. Since the number of copies of each constraint application in~$F'$ equals the weight of that constraint application in~$F$, the total weight of~$S$ equals the size of~$S'$. Hence, $\opt(F) = \opt(F')$.
    \end{claimproof}

    Now, to prove the correctness of our reduction, we use it to give a polynomial-time $c$-essential detection algorithm for the original weighted formula. Given a weighted formula~$F$ and an integer~$k$, we first apply the reduction to turn it into an unweighted formula~$F'$ as above. Then, let~$D'$ be the result of a polynomial-time $c$-essential detection algorithm on~$F'$ with the given integer~$k$. By assumption, such an algorithm exists. Then, we construct the set~$D$ as the set containing all constraint applications~$C$ in~$F$ for which at least one of its copies~$C_1, \ldots, C_{w(f)}$ is in~$D'$. The remainder of this proof is used to show that outputting the resulting set~$D$ yields a correct $c$-essential detection algorithm. This proves the lemma as the proposed procedure can clearly be executed in polynomial -time.

    To show that~$D$ satisfies Property~\textbf{(G\ref{itm:G1})}, suppose that $k \leq \opt(F) = \opt(F')$. Then, since~$D'$ satisfies Property~\textbf{(G\ref{itm:G1})}, there is an optimal solution~$S'$ of~$F'$ that is a superset of~$D'$. By \cref{clm:opt-uses-all-copies},~$S'$ contains none or all of the copies of each constraint application. Then, by \cref{clm:weight-reduction-preserves-solutions}, the set~$S$ of constraint applications in~$F$ for which~$D'$ contains its copies in~$F'$ is an optimal solution for~$F$. Since~$S'$ is a superset of~$D'$, it follows by construction of~$D$ that~$S$ is a superset of~$D$. Hence,~$D$ satisfies Property~\textbf{(G\ref{itm:G1})}.

    To show that~$D$ satisfies Property~\textbf{(G\ref{itm:G2})}, suppose that~$k = \opt(F) = \opt(F')$. Suppose, for contradiction, that there is some $c$-essential constraint application~$C$ that is not in~$D$. Then, note that there is also at least one copy~$C_i$ of it that is not in~$D'$. Since~$D'$ satisfies Property~\textbf{(G\ref{itm:G2})}, it contains all $c$-essential constraint applications of $F'$, meaning that~$C_i$ is not $c$-essential. As such, there is a $c$-approximate solution for~$F'$ that does not contain~$C_i$. Let~$A'$ be such a solution that also satisfies the property of containing none or all of the copies of each constraint application in~$F$. It follows from \cref{clm:opt-uses-all-copies}, that such a solution exists. Now, let~$A$ be the set of constraint applications in~$F$ of which their copies in~$F'$ are contained in~$A'$. By \cref{clm:weight-reduction-preserves-solutions},~$A$ is a solution for~$F$. Note that~$A$ does not contain~$C$. Moreover, since~$A'$ (whose size equals the weight of $A$) is a $c$-approximate solution for~$F'$ and $\opt(F) = \opt(F')$, we find that~$A$ is a $c$-approximate solution for~$F$. Since it does not contain~$C$, this contradicts the assumption that $C$ is $c$-essential. As such,~$D$ contains all $c$-essential constraint applications, thereby satisfying Property~\textbf{(G\ref{itm:G2})}.
\end{proof}

\subsection{Characterizations of constraint classes} \label{ssec:prelim-characterizations}
To prove Case~\ref{itm:dichotomy-5} of \cref{thm:dichotomy} in \cref{ssec:other}, we use several known characterizations of constraint classes. These are given below in \cref{lem:wpos-wneg-characterization,lem:affine-characterization,lem:bijunctive-characterization}. Before stating them, we introduce the required notation.

Let~$f$ be a constraint over~$\X$ and let~$V \subseteq \X$. We say that~$V$ is~\emph{$0$-consistent} (resp.~$1$-consistent) if there is an assignment that satisfies~$f$ and sets all variables in~$V$ --- and possibly some others --- to~$0$ (resp.~$1$). We say that~$V$ is~\emph{$0$-closed} (resp.~\emph{$1$-closed}) if, for every variable~$x \in \X \setminus V$, there is an assignment that satisfies~$f$ and sets all variables in~$V$ to~$0$ (resp.~$1$) and~$x$ to~$1$ (resp.~$0$). Alternatively,~$V$ is also called~$0$-closed (resp.~$1$-closed) if no satisfying assignment to~$f$ sets all variables in~$V$ to~$0$ (resp.~$1$). However, in this work, we happen to only refer to~$0$-closed (resp.~$1$-closed) sets that are also~$0$-consistent (resp.~$1$-consistent). If~$s$ is a satisfying assignment for~$f$, we say that~$V$ is a \emph{change set} for~$(f,s)$ if~$s \oplus K_{1,V}$ (i.e. the assignment to~$\X$ that only differs from~$s$ in the variables~$V$) also satisfies~$f$.

Using this notation, we can give three characterizations of different constraint classes.

\begin{lemma}[{\cite[Lemma 3.1W]{Schaefer78}}] \label{lem:wpos-wneg-characterization}
    Let~$f$ be a Boolean constraint over variables~$\X$. Then,~$f$ is weakly positive if and only if for all subsets~$V \subseteq \X$ that are both $0$-closed and $0$-consistent the assignment~$K_{0,V}$ satisfies~$f$. Likewise,~$f$ is weakly negative if and only if for all subsets~$V \subseteq \X$ that are both $1$-closed and $1$-consistent the assignment~$K_{1,V}$ satisfies~$f$.
\end{lemma}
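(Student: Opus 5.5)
The plan is to pass through the standard algebraic characterization of weakly positive constraints: $f$ is weakly positive if and only if its set of satisfying assignments is closed under componentwise OR. In weakly positive CNF every clause contains at most one negated variable, so this is the dual-Horn case. With that characterization in hand, both directions reduce to short combinatorial arguments about the assignment that sets all variables of~$V$ to~$0$ and all variables of~$\X\setminus V$ to~$1$. I read the assignment $K_{0,V}$ in the statement as this one, since it is what the definitions of $0$-closed and $0$-consistent naturally produce. The weakly negative half then follows by the symmetric argument with AND in place of OR, or equivalently by flipping every variable, which swaps the roles of $0$ and~$1$.

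\emph{Step 1 (OR-closure characterization).} If $f$ is a conjunction of clauses each with at most one negated literal, then OR-closure is immediate. Take satisfying $s,t$ and a clause $(\neg x\vee y_1\vee\dots\vee y_r)$. If $(s\vee t)(x)=0$, then $s(x)=t(x)=0$, and $s\vee t$ dominates $s$ on the $y_i$, so the clause holds. Clauses without a negated literal are monotone and hence also preserved.

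Conversely, suppose the satisfying set is OR-closed. Write $f$ as the conjunction of all its prime implicates, that is, the inclusion-minimal clauses implied by~$f$. Suppose some prime implicate $\neg a\vee\neg b\vee P$ had two negated variables $a\neq b$. By minimality, dropping $\neg a$ yields a clause not implied by~$f$. So there is a satisfying $s$ with $s(b)=1$, $P$ false, and (since $s$ satisfies the original clause) $s(a)=0$. Symmetrically, there is a satisfying $t$ with $t(a)=1$, $t(b)=0$ and $P$ false. Then $s\vee t$ satisfies~$f$ but violates the clause, a contradiction. Hence every prime implicate has at most one negated literal, and $f$ is weakly positive.

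\emph{Step 2 (weakly positive $\Rightarrow$ condition).} Let $V$ be $0$-closed and $0$-consistent. If $V=\X$, then $0$-consistency says the all-zero assignment satisfies~$f$, which is exactly the desired assignment. Otherwise, for each $x\in\X\setminus V$, $0$-closedness gives a satisfying $s_x$ with $s_x(V)=0$ and $s_x(x)=1$. Taking the OR of all $s_x$ gives, by Step~1, a satisfying assignment that is $0$ on~$V$ and $1$ on every $x\notin V$, which is the required assignment.

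\emph{Step 3 (condition $\Rightarrow$ weakly positive).} By Step~1 it suffices to show OR-closure. Let $s,t$ satisfy~$f$, set $u=s\vee t$, and let $V=u\inv(0)$. Then $V$ is $0$-consistent, witnessed by~$s$. For $x\notin V$ we have $s(x)=1$ or $t(x)=1$, and both $s$ and $t$ vanish on~$V$, so one of them witnesses $0$-closedness for~$x$. The hypothesis now gives that the assignment which is $0$ on $V$ and $1$ elsewhere, namely~$u$, satisfies~$f$. The only genuinely nontrivial step is the prime-implicate argument in Step~1; everything else is bookkeeping. The weakly negative statement is obtained verbatim with AND, $1$-closed and $1$-consistent sets.
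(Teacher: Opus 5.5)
The paper gives no proof of this lemma; it is quoted from Schaefer, so there is no in-paper argument to match yours against. Your proof is correct and self-contained.

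\textbf{Two small remarks.} First, your reading of $K_{0,V}$ as ``$0$ on $V$, $1$ on $\X\setminus V$'' is not just natural but necessary. With the definition as literally written in the preliminaries ($1$ on $V$, $0$ elsewhere), the statement already fails for $f=\true(x)$ and $V=\emptyset$. The proof of \cref{lem:not-wp} in fact uses $K_{1,V}$ as the assignment that is $1$ exactly on $V$, so the preliminaries simply have the two names swapped. Second, there is a slip in the forward half of Step~1. The case you write out, $(s\vee t)(x)=0$, is the trivial one, since $\neg x$ is then true. The case that needs your domination remark is $(s\vee t)(x)=1$: say $s(x)=1$; this forces $s(y_i)=1$ for some $i$, and $s\vee t\geq s$ keeps that literal true.

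\textbf{On the route.} You factor both directions through OR-closure and put the real work into the prime-implicate argument. That detour can be avoided by building a weakly positive CNF directly from the condition.
\begin{itemize}
\item For each falsifying $u$, let $V=u^{-1}(0)$. If $V$ is not $0$-consistent, then $f$ implies the positive clause $\bigvee_{v\in V}v$.
\item Otherwise $u$ is exactly the target assignment for $V$ and fails $f$, so $V$ is not $0$-closed. Hence some $x\notin V$ has no satisfying assignment that is $0$ on $V$ and $1$ on $x$, and $f$ implies $\neg x\vee\bigvee_{v\in V}v$.
\item Either clause is falsified by $u$, so the conjunction of these clauses over all falsifying $u$ is a weakly positive CNF equivalent to $f$.
\end{itemize}
Conversely, given a weakly positive CNF, you can check the target assignment clause by clause.
\begin{itemize}
\item A clause $\neg x\vee y_1\vee\dots\vee y_r$ with $x\in V$ holds via $\neg x$.
\item If $x\notin V$, it holds because the $0$-closedness witness for $x$ must set some $y_i$, necessarily outside $V$, to $1$.
\item A positive clause holds because the $0$-consistency witness sets one of its variables, necessarily outside $V$, to $1$.
\end{itemize}
This direct argument is shorter and never mentions OR. Your version costs the prime-implicate step but yields the polymorphism (OR-closure) characterization as a by-product. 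That has the same shape as the closure characterization of affine constraints in \cref{lem:affine-characterization}.
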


\begin{lemma}[{\cite[Lemma 3.1A]{Schaefer78}}] \label{lem:affine-characterization}
    A Boolean constraint~$f$ is affine if and only if for all satisfying assignments~$s_1, s_2, s_3$ the assignment~$s_1 \oplus s_2 \oplus s_3$ also satisfies~$f$.
\end{lemma}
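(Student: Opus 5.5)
Both directions are proved with the standard Galois-connection view: the affine relations over $\{0,1\}$ are exactly those closed under the ternary Mal'tsev operation $(a,b,c)\mapsto a\oplus b\oplus c$. Let $R=f^{-1}(1)\subseteq\{0,1\}^p$ denote the set of satisfying assignments of $f$; by our standing assumption $R\neq\emptyset$.

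\textbf{Affine implies closure.} Suppose $f$ is affine, so $f$ is a conjunction of $\xor$ and $\xnor$ constraints. Each conjunct has the form $\bigoplus_{i\in I}x_i=b$ for some index set $I$ and bit $b$. If $s_1,s_2,s_3$ satisfy this equation, then $s_1\oplus s_2\oplus s_3$ gives
\[
\bigoplus_{i\in I}(s_1\oplus s_2\oplus s_3)_i=b\oplus b\oplus b=b,
\]
so it satisfies the conjunct as well. Since every conjunct is preserved, the whole conjunction $f$ is preserved.

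\textbf{Closure implies affine.} Suppose $R$ is closed under $(s_1,s_2,s_3)\mapsto s_1\oplus s_2\oplus s_3$, and fix some $s_0\in R$.
\begin{enumerate}
\item Consider the translate $L=\{s\oplus s_0 : s\in R\}$. It contains $0$, since $s_0\oplus s_0=0$.
\item $L$ is closed under addition. For $u=s\oplus s_0$ and $v=t\oplus s_0$ in $L$, we have $u\oplus v=(s\oplus t\oplus s_0)\oplus s_0$, and $s\oplus t\oplus s_0\in R$ by closure. Over $\mathrm{GF}(2)$, closure under addition together with $0\in L$ makes $L$ a linear subspace.
\item Hence $R=s_0\oplus L$ is an affine subspace of $\mathrm{GF}(2)^p$.
\item Every affine subspace is the solution set of a linear system $Ax=b$ over $\mathrm{GF}(2)$. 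Take the rows of $A$ to be a basis of the orthogonal complement $L^\perp$, and set $b=As_0$.
\item Each equation $\bigoplus_{i\in I}x_i=b_j$ is a $\xor_{|I|}$ constraint if $b_j=1$ and a $\xnor_{|I|}$ constraint if $b_j=0$. A row with $I=\emptyset$ cannot occur, because basis rows are nonzero. If $L^\perp=\{0\}$, then $R=\{0,1\}^p$; $f$ is then the empty conjunction, or equivalently $\xnor_2(x_1,x_1)$, and this is still affine.
\end{enumerate}
This exhibits $f$ as a conjunction of $\xor$ and $\xnor$ constraints, so $f$ is affine.

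The only delicate points are the use of $s_0$ to pass from an affine coset to a linear subspace, and the degenerate case where $L^\perp$ is trivial. The rest is linear algebra over $\mathrm{GF}(2)$.
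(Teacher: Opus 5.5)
Your proof is correct; the paper gives no proof of its own here and simply cites Schaefer's Lemma 3.1A, and your argument is the standard coset proof of that fact: translate $R$ by a fixed $s_0\in R$ to get a linear subspace $L$, then cut out $s_0\oplus L$ by a basis of $L^\perp$. The one step you leave implicit, $(L^\perp)^\perp=L$ over $\mathrm{GF}(2)$, follows from nondegeneracy of the dot product together with a dimension count.
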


\begin{lemma}[{\cite[Lemma 3.1B]{Schaefer78}}] \label{lem:bijunctive-characterization}
    A Boolean constraint~$f$ is bijunctive if and only if for all satisfying assignments~$s$ and change sets~$U, V$ for~$(f,s)$ the intersection~$U \cap V$ is also a change set for~$(f,s)$.
\end{lemma}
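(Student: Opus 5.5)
The plan is to route everything through closure under the coordinate-wise majority operation $\mathrm{maj}(s_1,s_2,s_3)$, which sets each variable to the value held by at least two of the three assignments. I will prove two equivalences. First, the change-set condition in the statement holds if and only if the set of satisfying assignments of~$f$ is closed under $\mathrm{maj}$. Second, $f$ is bijunctive if and only if its set of satisfying assignments is closed under $\mathrm{maj}$. Combining the two gives the lemma.

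For the first equivalence, the key identity is
\[
\mathrm{maj}\bigl(s,\ s\oplus K_{1,U},\ s\oplus K_{1,V}\bigr) = s\oplus K_{1,U\cap V}.
\]
To see it, check each variable~$x$. If $x\in U\cap V$, then two of the three arguments flip $s(x)$, so the majority is flipped. If $x$ lies in at most one of $U$ and $V$, then at least two arguments keep $s(x)$, so the majority keeps it. Hence majority-closure implies that $U\cap V$ is a change set whenever $U$ and $V$ are.

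For the converse, let $s_1,s_2,s_3$ satisfy~$f$. Take $s=s_1$, let $U$ be the set of variables on which $s_1$ and $s_2$ differ, and let $V$ be the set on which $s_1$ and $s_3$ differ. Then $U$ and $V$ are change sets for $(f,s_1)$. By the identity, $\mathrm{maj}(s_1,s_2,s_3)=s_1\oplus K_{1,U\cap V}$, and this assignment satisfies~$f$ by the hypothesis.

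For the second equivalence, the direction from bijunctive to majority-closed is a pigeonhole argument. It suffices to show that every single $2$-clause is majority-closed, since conjunctions of majority-closed constraints are majority-closed. Suppose $\mathrm{maj}(s_1,s_2,s_3)$ falsifies a clause $\ell_1\vee\ell_2$. Then $\ell_1$ is false in at least two of the three assignments, and so is $\ell_2$. Two subsets of size at least two inside a three-element set must intersect, so some $s_i$ falsifies both literals. This contradicts that $s_i$ satisfies the clause.

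The reverse direction is the main obstacle. Here I let $f'$ be the conjunction of all clauses with at most two literals over the variables of~$f$ that are implied by~$f$. Clearly $f$ implies $f'$, so the task is to show that $f'$ implies $f$. Fix an assignment $t$ satisfying $f'$. I prove by induction on $k$ that for every set $K$ of $k$ variables, some satisfying assignment of $f$ agrees with $t$ on~$K$.

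For $k\le 2$, suppose no satisfying assignment of $f$ agrees with $t$ on $K$. Then the clause of at most two literals that is falsified exactly by $t$'s values on $K$ is implied by $f$. So this clause belongs to $f'$, and $t$ falsifies it, a contradiction. (This uses that $f$ is satisfiable, as assumed in the paper.)

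For $k\ge 3$, pick three distinct variables $x,y,z\in K$. By the induction hypothesis there are satisfying assignments $a$, $b$, $c$ agreeing with $t$ on $K\setminus\{x\}$, on $K\setminus\{y\}$, and on $K\setminus\{z\}$ respectively. On every variable of $K$, at least two of $a,b,c$ agree with $t$, so $\mathrm{maj}(a,b,c)$ agrees with $t$ on all of $K$. By majority-closure, $\mathrm{maj}(a,b,c)$ satisfies $f$, which completes the induction step.

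Taking $K$ to be the full variable set gives that $t$ itself satisfies $f$. Hence $f\equiv f'$, and $f$ is expressible in $2$CNF, i.e. bijunctive.
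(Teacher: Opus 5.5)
Your proof is correct. The paper gives no argument of its own for this statement; it imports it from Schaefer, so there is no in-text proof to match, and what you supply is a self-contained derivation.

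Your route factors the lemma through closure of the satisfying assignments under coordinate-wise majority. The pieces are:
\begin{itemize}
\item The identity $\mathrm{maj}(s, s\oplus K_{1,U}, s\oplus K_{1,V}) = s\oplus K_{1,U\cap V}$ converts the change-set condition into majority-closure and back. You are right to read $s\oplus K_{1,V}$ as ``flip exactly $V$'', as the paper's parenthetical intends. Under the paper's literal definition of $K_{1,V}$ (the indicator of the complement of $V$), the identity would produce $U\cup V$ instead.
\item The pigeonhole argument handles single $2$-clauses.
\item The induction on $|K|$ over all implied clauses of width at most two is the standard $2$-decomposability argument. Its base case $K=\emptyset$ rests on the satisfiability of $f$, which you correctly flag.
\end{itemize}

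What your route buys is that the lemma becomes the exact analogue of \cref{lem:affine-characterization}, with majority in place of the minority operation $s_1\oplus s_2\oplus s_3$. It also explains why \cref{lem:not-bijunctive} extracts precisely $h(0,0,0,0)=h(0,0,1,1)=h(0,1,0,1)=1$ and $h(0,0,0,1)=0$: the last tuple is the majority of the first three. In the same way, the excluded tuple $(0,1,1,0)$ in \cref{lem:not-affine} is their XOR.

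The change-set phrasing is the form the paper actually consumes in \cref{lem:not-bijunctive}. There, a single base assignment $s$ is fixed, and variables are negated around it via \cref{lem:negated-substitution-lemma}. Your first equivalence shows the two formulations are interchangeable at no cost.
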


\section{Positive results}
\label{sec:positive}
To achieve our positive results, we establish conditions on a constraint set~$\F$ that imply the existence of a~$c$-essential detection algorithm for \minF. The statement below admits a proof similar to that of Theorem 4.1 in the earlier work by Bumpus et al.\ on essential vertices~\cite{BumpusJK24}.
\begin{lemma}
\label{lem:main-positive-ingredient}
    Let~$\F$ be a constraint set and let~$c \in \Rone$ be a constant. Suppose there is a polynomial-time algorithm that takes as input an $\F$-formula~$F$ and a constraint application~$C$ in it, and outputs a set of constraint applications~$P_C$ such that:
    \begin{enumerate}
        \item \label{itm:positive-prerequisite-1} $C \notin P_C$; and 
        \item \label{itm:positive-prerequisite-2} if~$X_C$ is a smallest set of constraints from~$F-C$ such that~$F-X_C$ is satisfiable, then $|P_C| \leq c \cdot|X_C|$; and
        \item \label{itm:positive-prerequisite-3} for every set~$X$ such that~$F-X$ is satisfiable,~$F-((X \setminus \{C\}) \cup P_C)$ is also satisfiable,
    \end{enumerate}
    then, there is a polynomial-time $(c+1)$-essential detection algorithm for \minF.
\end{lemma}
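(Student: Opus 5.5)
The algorithm runs the assumed procedure on $(F,C)$ for every constraint application $C$ of $F$, obtains $P_C$, and outputs
\[
S := \{\, C \in F : |P_C| \le c\cdot k \,\}.
\]
This mirrors the argument of Bumpus et al. It clearly runs in polynomial time. Both properties rest on one exchange observation. If $X$ is any solution containing $C$, then by Property~\ref{itm:positive-prerequisite-3} the set $X' := (X\setminus\{C\})\cup P_C$ is also a solution. By Property~\ref{itm:positive-prerequisite-1} it avoids $C$, and $|X'| \le |X|-1+|P_C|$.

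For (G1), assume $\opt(F)\le k$ and $C\in S$. We argue that some optimal solution contains $C$, and then upgrade this to a single optimal solution containing all of $S$ at once. Let $X_C$ be a smallest solution avoiding $C$ (which exists since $C$ alone is satisfiable). Property~\ref{itm:positive-prerequisite-2} only bounds $|P_C|$ from above by $c|X_C|$, so the threshold $|P_C|\le ck$ does not directly yield $|X_C| > \opt(F)$. I will therefore restructure the test as follows, which I expect to be the main obstacle.

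First I take the contrapositive view: if $C$ is in no optimal solution, then $X_C$ is optimal, so $|X_C| = \opt(F) \le k$. This gives $|P_C|\le ck$, which is the wrong direction for excluding $C$. The fix is to select the $C$ with $|P_C|$ \emph{large}, that is, to output
\[
S := \{\, C : |P_C| > c\cdot k \,\}.
\]
If $|P_C| > ck \ge c\,\opt(F)$, then $c|X_C| \ge |P_C| > c\,\opt(F)$, so every solution avoiding $C$ is non-optimal and $C$ lies in every optimal solution. Hence any optimal solution contains all of $S$, which proves (G1).

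For (G2), assume $\opt(F)=k$ and let $C$ be $(c+1)$-essential; we must show $|P_C| > ck$. Suppose instead that $|P_C|\le ck$. Take an optimal solution $X$, which must contain $C$ since $C$ is essential. The exchanged set $X'$ avoids $C$ and has size at most $k-1+ck < (c+1)k$, so it is a $(c+1)$-approximate solution avoiding $C$. This contradicts essentiality, so $C\in S$.

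The delicate points are choosing the correct direction of the threshold and using Property~\ref{itm:positive-prerequisite-2} only as an upper bound, which is exactly what (G1) needs. The weighted case is analogous, using weights in place of sizes, or can be reduced via \cref{lem:unweighted-to-weighted}.
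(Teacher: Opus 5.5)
Your proof is correct and follows the paper's proof: the same algorithm $S=\{C : |P_C| > c\cdot k\}$, (G1) via $|X_C|>\opt(F)$ for each $C\in S$, and (G2) via the exchange $(X\setminus\{C\})\cup P_C$. The paper phrases (G1) contrapositively, fixing an optimal $X$ and showing $|P_C|\le c\cdot k$ for every $C\notin X$. In a final write-up, delete your first definition $S=\{C : |P_C|\le c\cdot k\}$, since it is wrong and you later supersede it.
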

\begin{proof}
    Let~$F$ be an $\F$-formula, let~$k \geq 0$ be an integer, and consider the following algorithm to compute a set~$S$ of constraint applications. We iterate over all constraint applications~$C$ in~$F$ to compute, in polynomial time, a set~$P_C$ satisfying the three properties specified in the lemma statement. Starting with an empty set~$S$, we add all constraint applications~$C$ to it for which $|P_C| > c\cdot k$. Finally, we return the set~$S$.

    Clearly, this is a polynomial-time algorithm. To show that it is a~$(c+1)$-essential detection algorithm for \minF, we proceed by proving that~$S$ satisfies Properties~\textbf{(G\ref{itm:G1})} and~\textbf{(G\ref{itm:G1})}. To this end, let~$X$ be an optimal solution to~$F$.

    For Property~\textbf{(G\ref{itm:G1})}, suppose that~$\opt(F) \leq k$, so that~$|X| \leq k$. We show that~$S \subseteq X$ by arguing that every~$C\notin X$ is not in~$S$. Let~$C \notin X$ be arbitrary. Since~$X$ is a smallest set of constraint applications such that~$F-X$ is satisfiable, it is in particular a smallest such set that does not include~$C$. Therefore, by property~\ref{itm:positive-prerequisite-2} of~$P_C$, we find that~$|P_C| \leq c \cdot |X| \leq c\cdot k$. As we only added constraints~$C$ to~$S$ for which~$|P_C| > c\cdot k$,~$C$ is not in~$S$.

    For Property~\textbf{(G\ref{itm:G2})}, suppose that~$\opt(F) = k$, so that~$|X| = k$. We show that~$S$ contains all~$(c+1)$ essential constraint applications by showing that, if a constraint application~$C$ is not in~$S$, it is not~$(c+1)$-essential, i.e.: there is a~$(c+1)$-approximate solution to~$F$ that does not include~$C$. If~$C \notin X$, then~$X$ is such a solution. If~$C \in X$, we argue that~$(X \setminus \{C\}) \cup P_f$ is such a solution. By property~\ref{itm:positive-prerequisite-1} of~$P_C$, this set does not contain~$C$ and by property~\ref{itm:positive-prerequisite-3}, it is indeed a solution. To see that it is a~$(c+1)$-approximate solution, recall that~$C \notin S$, which means that~$|P_C| \leq c \cdot|X| = c\cdot k$. Therefore~$|(X \setminus C) \cup P_C| \leq k+c\cdot k = (c+1)\cdot k$.
\end{proof}

Now, we can see the following observation to be true by noting that its precondition satisfies all three preconditions of \cref{lem:main-positive-ingredient}.
\begin{observation} \label{obs:f-avoiding-approx-means-essential-detection}
    Let~$\F$ be a constraint set and let~$c \in \Rone$ be a constant. Suppose there is a polynomial-time algorithm that takes as input an~$\F$-formula~$F$ and a constraint application~$C$ in~$F$ and outputs a $c$-approximate solution to the problem of finding a smallest solution for~$F$ that does not contain~$C$. Then, there is a polynomial-time~$(c+1)$-essential detection algorithm for \minF.
\end{observation}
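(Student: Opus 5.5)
The plan is to apply \cref{lem:main-positive-ingredient} directly. Given an $\F$-formula $F$ and a constraint application $C$, let $P_C$ be the output of the assumed algorithm: a $c$-approximate solution to the problem of finding a smallest set of constraint applications, not containing $C$, whose removal makes $F$ satisfiable. It then suffices to check the three prerequisites of \cref{lem:main-positive-ingredient} for this $P_C$.

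Prerequisite~\ref{itm:positive-prerequisite-1} holds because $P_C$ is a feasible solution of the restricted problem, so $C \notin P_C$. Prerequisite~\ref{itm:positive-prerequisite-2} is the approximation guarantee itself: if $X_C$ is a smallest set of constraint applications from $F - C$ with $F - X_C$ satisfiable, then $X_C$ is an optimum of the restricted problem, so $|P_C| \le c \cdot |X_C|$.

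For prerequisite~\ref{itm:positive-prerequisite-3}, let $X$ be any set with $F - X$ satisfiable. Since $P_C$ is a solution, $F - P_C$ is satisfiable. Moreover, $(X \setminus \{C\}) \cup P_C \supseteq P_C$, so $F - ((X \setminus \{C\}) \cup P_C)$ is a subformula of $F - P_C$. Every assignment satisfying $F - P_C$ therefore also satisfies this subformula, and so it is satisfiable. This monotonicity under removing constraints is the only point that needs care. It holds trivially here, since deleting constraint applications never destroys satisfiability.

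One degenerate case must be handled: the restricted problem may have no feasible solution. This can only happen if $C$ itself is unsatisfiable, and the preliminaries exclude unsatisfiable constraint applications. If one prefers not to rely on that assumption, such $C$ is in every solution, and the algorithm can treat $|P_C|$ as infinite so that $C$ is placed in $S$. With all three prerequisites verified, \cref{lem:main-positive-ingredient} yields a polynomial-time $(c+1)$-essential detection algorithm for \minF. I expect no real obstacle; the argument is a direct instantiation.
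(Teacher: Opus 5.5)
Your proposal is correct and matches the paper's argument: the paper takes the output of the assumed algorithm as $P_C$ and notes that it satisfies the three prerequisites of \cref{lem:main-positive-ingredient}. Your verification of each prerequisite is sound, including prerequisite~3 via $P_C \subseteq (X\setminus\{C\})\cup P_C$ and the fact that deleting constraint applications preserves satisfiability. The paper leaves the degenerate case implicit through its standing assumption that constraints are satisfiable, so your explicit treatment of it is harmless.
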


In turn, we can use the observation above to prove the following statement.

\begin{lemma} \label{lem:weighted-approx-means-essential-detection}
    Let~$\F$ be a constraint set and let~$c \in \Rone$ be a constant. If there is a polynomial-time~$c$-approximation for \wminF, then there is a polynomial-time~$(c+1)$-essential detection algorithm for \minF.
\end{lemma}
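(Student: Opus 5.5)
The plan is to reduce to \cref{obs:f-avoiding-approx-means-essential-detection}. Given an unweighted $\F$-formula $F$ with $m$ constraint applications and a designated application $C$, we need a polynomial-time $c$-approximation for a smallest solution of $F$ that avoids $C$. We obtain one by a single call to the assumed $c$-approximation for \wminF on a suitably reweighted copy of $F$.

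\textbf{Construction.} Form the weighted formula $F_C$ from $F$ by giving every application other than $C$ weight $1$ and giving $C$ weight $W := c\cdot m + 1$, rounded up to an integer. This weight is polynomial in $m$, so encoding it in unary keeps the instance polynomial. Run the $c$-approximation on $F_C$ to get a solution $A$.

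\textbf{Key inequality.} Let $X_C$ be a smallest solution of $F$ not containing $C$. Such a solution exists: $F-(F\setminus\{C\})$ consists of $C$ alone, which is satisfiable because every individual constraint is assumed satisfiable. Moreover $|X_C|\le m-1$. The weight of $X_C$ in $F_C$ equals $|X_C|$, so $\opt(F_C)\le |X_C|$. Hence the weight of $A$ is at most
\[
c\cdot\opt(F_C)\le c\,|X_C|\le c(m-1) < W .
\]
Consequently $C\notin A$, since otherwise the weight of $A$ would be at least $W$.

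\textbf{Conclusion.} Since $C\notin A$, the weight of $A$ equals $|A|$, so $|A|\le c\,|X_C|$. Thus $A$ is a $c$-approximate solution avoiding $C$, and every solution avoiding $C$ has the same size in $F$ and weight in $F_C$. Applying this for each $C$ meets the precondition of \cref{obs:f-avoiding-approx-means-essential-detection}, which yields a polynomial-time $(c+1)$-essential detection algorithm for \minF.

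\textbf{Main obstacle.} The only delicate point is making the weight of $C$ large enough to force its exclusion while keeping it polynomially bounded, since weights are in unary. The bound $\opt(F_C)\le m-1$ handles both requirements. One should also check that the approximation algorithm applies to $F_C$, which is again an $\F$-formula; it is, because only the weights changed.
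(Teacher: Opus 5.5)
Your proposal is correct and follows the paper's proof: give $C$ a weight larger than $c$ times any solution that avoids it (the paper uses $c\cdot n$, you use $\lceil cm+1\rceil$), run the $c$-approximation once, and apply \cref{obs:f-avoiding-approx-means-essential-detection}. Your explicit argument that the returned solution cannot contain $C$, and your rounding of the weight to an integer, fill in details the paper leaves implicit.
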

\begin{proof}
    We show that this condition suffices to satisfy the precondition to \cref{obs:f-avoiding-approx-means-essential-detection}.
    Given an~$\F$-formula~$F$ and a constraint application~$C$ in~$F$, we transform~$F$ into a weighted formula by giving all~$n$ constraint applications in~$F$ a weight of~$1$, except for~$C$, which receives a weight of~$c\cdot n$. Then, the polynomial-time~$c$-approximation for \wminF on this weighted formula returns a~$c$-approximate solution to the problem of finding a smallest solution for the original unweighted formula~$F$ that does not include~$f$. Thus, by \cref{obs:f-avoiding-approx-means-essential-detection}, there is a polynomial-time~$(c+1)$-essential detection algorithm for \minF.
\end{proof}

Khanna et al.\ have shown that a polynomial-time constant-factor approximation algorithm exists when~$\F$ is IHS-B \cite[Theorem 2.13]{KhannaSTW00}, so we conclude the following.

\begin{corollary}
    Let~$\F$ be an IHS-B constraint set. Then, there is a constant~$c_\F$ for which \minF admits a polynomial-time~$c_\F$-essential detection algorithm.
\end{corollary}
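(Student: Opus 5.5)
This corollary should follow directly from \cref{lem:weighted-approx-means-essential-detection}, combined with the known approximation result for IHS-B constraint sets. Let~$\F$ be IHS-B. By \cite[Theorem 2.13]{KhannaSTW00}, there is a constant~$c'_\F \in \Rone$ and a polynomial-time~$c'_\F$-approximation algorithm for \wminF. We would then apply \cref{lem:weighted-approx-means-essential-detection} with~$c = c'_\F$, which yields a polynomial-time~$(c'_\F+1)$-essential detection algorithm for \minF. Setting~$c_\F := c'_\F + 1$ gives the claim.

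The one point that needs care is the weight regime used in \cref{lem:weighted-approx-means-essential-detection}. There, the proof assigns weight~$c\cdot n$ to the distinguished constraint application~$C$ and weight~$1$ to all others. We must check that the approximation algorithm of Khanna et al.\ handles such weighted instances, and not only unweighted ones.

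\begin{itemize}
\item Their result is stated for the weighted problem, so this should be fine.
\item Even if only an unweighted version were available, the weights here are integers polynomially bounded in~$n$ (after rounding~$c\cdot n$ up to an integer, which only helps).
\item In that case, duplicating~$C$ that many times, exactly as in the reduction of \cref{lem:unweighted-to-weighted}, gives an equivalent unweighted instance of polynomial size. The approximation ratio is preserved by \cref{clm:weight-reduction-preserves-solutions}, so the weighted approximation would follow.
\end{itemize}

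I expect this verification that the approximation applies in the needed weighted setting to be the only potential obstacle. Everything else is a direct chaining of the stated lemmas.

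For completeness, one can also lift the resulting detection algorithm from \minF to \wminF with polynomially bounded integer weights via \cref{lem:unweighted-to-weighted}. This matches Case~\ref{itm:dichotomy-2} of \cref{thm:dichotomy} for the IHS-B part.
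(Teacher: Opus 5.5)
Your proof is correct and is exactly the paper's argument: plug the polynomial-time constant-factor approximation of Khanna et al.\ for \wminF with IHS-B $\F$ into \cref{lem:weighted-approx-means-essential-detection}, which gives $c_\F = c'_\F + 1$. Your fallback via duplication is unnecessary, since the cited result already covers the weighted problem; if you did use it, note that \cref{clm:weight-reduction-preserves-solutions} only concerns optimal solutions, so for approximate ones you would also need to discard partially deleted groups of copies, as in the proof of \cref{clm:opt-uses-all-copies}.
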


The remainder of this section is dedicated to proving a similar result for bijunctive constraint sets~$\F$. This proof requires more work but will also use \cref{lem:main-positive-ingredient}. In addition, we make use of the well-known fact~\cite{AspvallPT79,RamanujanS17} that formulas in $2$CNF can be represented in graph form as follows.
\begin{definition}
    Let~$F$ be a $2$CNF formula over variable set~$\X$, or more generally, an~$\mathcal{F}$-formula for some bijunctive constraint set~$\mathcal{F}$. We construct a directed (multi)graph~$G_F$ as follows. For every variable~$x \in \X$, we add~$x$ and~$\neg x$ to the vertex set of~$G_F$. For every clause~$(\ell_1 \vee \ell_2)$ in the $2$CNF representation of~$F$ with literals~$\ell_1$ and $\ell_2$, we add the arcs~$(\neg \ell_1, \ell_2)$ and~$(\neg \ell_2, \ell_1)$ to~$G_F$. We call the resulting graph the \emph{implication graph} of~$F$.
\end{definition}
We highlight that clauses may appear in multiple constraint applications of~$F$ and that this is reflected in the implication graph of~$F$ by allowing duplicate arcs in it. Furthermore, in our remaining arguments, we assume that we keep track of which clause in a formula corresponds to which arc in the implication graph. We proceed by presenting two basic properties regarding implication graphs in the following two lemmas.
\begin{lemma}
\label{lem:self-reachability-requires-endpoint}
    Let~$F$ be a $2$CNF formula over variable set~$\X$ and let~$q$ be a literal of some variable in~$\X$. If~$\neg q$ is reachable from~$q$ in the implication graph~$G_F$ of~$F$, then any assignment that satisfies~$F$ must set~$q$ to~$0$.
\end{lemma}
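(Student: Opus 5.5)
The plan is to show first that every arc of $G_F$ is a valid implication for every satisfying assignment, and then to chain these implications along a directed path from $q$ to $\neg q$.

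For the first step, take an arc $(\neg \ell_1, \ell_2)$ of $G_F$. By the definition of the implication graph, it comes from a clause $(\ell_1 \vee \ell_2)$ in the $2$CNF representation of $F$. Let $s$ be any assignment that satisfies $F$; then $s$ satisfies this clause. Hence, if $s$ sets the literal $\neg \ell_1$ to~$1$, that is, sets $\ell_1$ to~$0$, it must set $\ell_2$ to~$1$. So along every arc $(a,b)$ of $G_F$, $s(a)=1$ implies $s(b)=1$, where $s$ is extended to literals in the natural way ($s(\neg x) = 1 - s(x)$). The symmetric arc $(\neg \ell_2, \ell_1)$ is handled identically. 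Duplicate arcs coming from several constraint applications change nothing.

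For the second step, fix a directed path $q = p_0, p_1, \ldots, p_r = \neg q$ in $G_F$. Suppose, for contradiction, that $s$ satisfies $F$ and sets $q$ to~$1$. By induction on $i$, using the arc property, $s(p_i) = 1$ for all $i$. In particular $s(\neg q) = 1$, so $s(q) = 0$, a contradiction. Therefore every satisfying assignment sets $q$ to~$0$.

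No step here is a genuine obstacle. The only point needing care is the sign bookkeeping for literals: $q$ may itself be a negated variable, in which case "setting $q$ to~$0$" means setting the underlying variable to~$1$. This is absorbed by treating literals uniformly, with $\neg\neg x = x$.
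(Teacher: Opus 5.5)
Your proof is correct and matches the paper's argument: assume a satisfying assignment $s$ with $s(q)=1$, then use that each arc $(a,b)$ comes from a clause $(\neg a \vee b)$ to propagate the value $1$ inductively along a $(q,\neg q)$-path, reaching the contradiction $s(\neg q)=1$. Your explicit remarks on extending $s$ to literals and on duplicate arcs are harmless details that the paper leaves implicit.
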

\begin{proof}
    Suppose, for contradiction, that~$s$ is a satisfying assignment with~$s(q) = 1$ and let~$P$ be an~$(q, \neg q)$-path in~$G_F$. We prove, using induction, that all literals on this path are assigned~$1$ by~$s$. To this end, consider some literal~$\ell_i$ on~$P$ with~$s(\ell_i) = 1$ and let~$\ell_{i+1}$ be its successor on~$P$. The existence of the arc~$(\ell_i, \ell_{i+1})$ in~$G_F$ implies the existence of the clause~$(\neg \ell_i \vee \ell_{i+1})$ in~$F$. Since~$s$ satisfies~$F$ with~$s(\ell_i) = 1$, it must have~$s(\ell_{i+1}) = 1$. Combined with the fact that the start vertex~$q$ of~$P$ has~$s(q)=1$, we can use this argument inductively to conclude that all vertices on~$P$ --- including its ending vertex~$\neg q$ --- are assigned~$1$ by~$s$. Then, the fact that~$s(\neg q) = 1$ contradicts our assumption that~$s(q) = 1$.
\end{proof}
\begin{lemma}
\label{lem:reverse-paths-implication-graph}
    Let~$G_F$ be the implication graph of a $2$CNF formula~$F$ over variables~$\X$, and let~$q$ and~$r$ each be a literal of a variable in~$\X$. If~$r$ is reachable from~$q$ in~$G_F$, then $\neg q$ is reachable from~$\neg r$ in~$G_F$.
\end{lemma}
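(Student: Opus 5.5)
The argument rests on the skew-symmetry of the implication graph: every arc has a ``mirror'' arc obtained by negating and reversing its endpoints. First I will record this as a claim. For every arc $(a,b)$ in $G_F$, the arc $(\neg b, \neg a)$ is also in $G_F$. To see it, note that the arc $(a,b)$ was added for some clause $(\ell_1 \vee \ell_2)$ in the $2$CNF representation of $F$, where either $a = \neg \ell_1$ and $b = \ell_2$, or $a = \neg \ell_2$ and $b = \ell_1$. In the first case the same clause also contributes the arc $(\neg \ell_2, \ell_1) = (\neg b, \neg a)$, using $\neg\neg a = a$. The second case is symmetric. Clauses with a repeated literal, such as $(\ell \vee \ell)$ coming from a unary constraint, fit the same pattern: both generated arcs coincide with $(\neg \ell, \ell)$, and this arc is its own mirror.

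Next comes the path argument. Let $P = (q = \ell_0, \ell_1, \ldots, \ell_t = r)$ be a directed $(q,r)$-path in $G_F$. By the claim, each arc $(\ell_{i}, \ell_{i+1})$ has a mirror arc $(\neg \ell_{i+1}, \neg \ell_i)$ in $G_F$. Concatenating these mirror arcs in reverse order gives the walk $\neg \ell_t, \neg \ell_{t-1}, \ldots, \neg \ell_0$. This walk starts at $\neg r$ and ends at $\neg q$, so $\neg q$ is reachable from $\neg r$. The trivial case $q = r$ needs no argument.

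There is no real obstacle here. The only point needing care is confirming that the mirror arc exists for every way an arc can arise, including degenerate clauses and duplicate arcs. For the reachability conclusion, multiplicities are irrelevant, so the claim about mirror arcs settles everything.
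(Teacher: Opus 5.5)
Your proposal is correct and follows the paper's approach: establish skew-symmetry of $G_F$, i.e.\ each arc $(a,b)$ comes from a clause that also contributes $(\neg b,\neg a)$, and then negate and reverse the $(q,r)$-path to obtain a $(\neg r,\neg q)$-path. Your explicit check of which literal of the clause produced the arc, including the degenerate case, is just a more careful version of the paper's one-line observation.
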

\begin{proof}
    Let~$(x,y)$ be an arbitrary arc in~$G_F$. The existence of this arc in~$G_F$ implies that~$F$ contains the clause~$(\neg x \vee y)$, which means that~$G_F$ also contains the arc~$(\neg y, \neg x)$. Since for every arc~$(x,y)$ in~$G_F$ the graph also contains the arc~$(\neg y, \neg x)$, the existence of some~$(q,r)$-path~$(q, \ell_1, \ldots, \ell_p,r)$ in~$G_F$ implies the existence of the~$(\neg r, \neg q)$-path~$(\neg r, \neg \ell_p, \ldots, \neg \ell_1, \neg q)$ in~$G_F$.
\end{proof}
Now, we use these properties to obtain the following result.
\begin{lemma} \label{lem:bijunctive}
    Suppose that~$\F$ is bijunctive so that every constraint~$f \in \F$ can be written as conjunction of at most~$d$ disjunctions of two literals, for some constant~$d$. Then, there is a polynomial-time~$(2d^2+1)$-essential detection algorithm for \minF.
\end{lemma}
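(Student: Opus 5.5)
We plan to apply \cref{lem:main-positive-ingredient} with $c = 2d^2$. For every constraint application $C$ in $F$ we must compute, in polynomial time, a set $P_C \not\ni C$ with three properties: $|P_C| \le 2d^2 \cdot |X_C|$, where $X_C$ is a smallest solution avoiding $C$; and for every solution $X$, the set $(X\setminus\{C\})\cup P_C$ is again a solution. The natural reading is that $P_C$ is a ``repair set''. It should make the clauses of $C$ consistent with whatever the rest of the formula forces, measured relative to how hard it is to keep $C$ at all.

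\textbf{Step 1: localize what keeping $C$ means.} Write $C = \bigwedge_{j=1}^{d'} (\ell^j_1 \vee \ell^j_2)$ with $d' \le d$. By \cref{lem:self-reachability-requires-endpoint} and the standard 2SAT characterization, a 2CNF is unsatisfiable iff some variable $x$ has paths $x \to \neg x$ and $\neg x \to x$ in the implication graph. Re-inserting $C$ into a satisfiable $F - X$ can only break satisfiability through paths that use arcs of $C$. Each arc of $C$ is $(\neg \ell^j_1,\ell^j_2)$ or its contrapositive, and by \cref{lem:reverse-paths-implication-graph} the graph is skew-symmetric. It therefore suffices to destroy, in $G_{F-C}$, every path of the form $\ell \to \neg \ell$ for the at most $2d$ literals $\ell$ that occur as arc tails or heads of $C$. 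Any bad cycle through an arc of $C$ decomposes into such pieces combined with arcs of $C$; there are at most $d \cdot 2d$ relevant pairs (tail of one $C$-arc, head of another).

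\textbf{Step 2: define $P_C$ by cuts.} For each ordered pair $(a,b)$ of endpoints of arcs of $C$ for which a path $a \to b$ in $G_{F-C}$ would close a contradiction together with $C$, we compute a minimum set of constraint applications of $F - C$ hitting all $a\to b$ paths. One way is a minimum arc cut with arc capacities grouped by application, handled by giving each application's arcs the weight of that application. Since an application contributes at most $2d$ arcs, a minimum arc cut yields an application cut within a factor of $d$, or we simply take the union. We let $P_C$ be the union of these cuts together with the applications of an optimal or near-optimal solution of the residual formula; more simply, $P_C$ is the union over the $\le 2d^2$ relevant pairs of the minimum cuts. Any solution $X_C$ avoiding $C$ must itself cut each such pair: otherwise $F - X_C$ would contain a contradiction through $C$. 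Hence each cut has size at most $|X_C|$ (up to the arc/application conversion), giving $|P_C| \le 2d^2 |X_C|$.

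\textbf{Step 3: property 3.} Let $X$ be any solution and set $X' = (X\setminus\{C\})\cup P_C$. We must show $F - X'$ is satisfiable. Suppose a variable $x$ had paths $x\to\neg x\to x$ in $G_{F-X'}$. Since $F - X$ has no contradiction, the cycle must use an arc of $C$. Split it at the $C$-arcs: the segments between consecutive $C$-arcs are paths in $G_{F-C-P_C}$ between endpoints of $C$-arcs. Each such segment either belongs to a pair we cut, which is impossible, or is harmless.

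\textbf{Main obstacle.} The crux is choosing exactly which pairs to cut, so that two things hold at once. The cut pairs must be forced to be cut by any $C$-avoiding solution, which gives the size bound. The surviving segments together with $C$ and the arcs of $F - X$ must also never close a contradiction, which gives property 3. I would first try defining the relevant pairs as those $(a,b)$ with $b = \neg a$ after composing with $C$-arcs, i.e.\ paths from a $C$-head to the negation of a $C$-tail. I would then use skew-symmetry, via \cref{lem:reverse-paths-implication-graph}, to show that any contradictory cycle through $C$ contains such a segment.
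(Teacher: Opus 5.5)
There is a genuine gap, and it sits at the point you call the main obstacle. You never pin down the ``relevant pairs'', and the claim in Step~2 that every $C$-avoiding solution $X_C$ must cut each of them is false. A contradiction through $C$ is usually produced by two segments \emph{jointly}, and $X_C$ only has to destroy one of them.

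Take $C=(a\vee b)$, and let $F-C$ consist of one copy of $\false(a)$ and $M$ copies of $\false(b)$. Then $\{\false(a)\}$ is a $C$-avoiding solution of size $1$.
\begin{itemize}
\item Your Step~1 prescription destroys every $\ell\to\neg\ell$ path for endpoints $\ell$ of $C$-arcs. It must therefore remove all $M$ copies of $\false(b)$, so Property~2 fails for large $M$.
\item Neither $a\to\neg a$ nor $b\to\neg b$ closes a contradiction with $C$ on its own; each only forces the other literal to $1$. The nontrivial paths your tentative head-to-negated-tail rule looks for ($a\to b$, $b\to a$) do not exist here. So any such rule gives $P_C=\emptyset$. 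For the solution $X=\{C\}$, the set $(X\setminus\{C\})\cup P_C=\emptyset$ is then not a solution, and Property~3 fails.
\end{itemize}
Step~3's ``or is harmless'' is exactly the unproved part. Your count in Step~2 also would not give $2d^2$. There are $2d^2$ pairs, each needing an arc cut of up to $2d|X_C|$ arcs, which yields $4d^3|X_C|$. Moreover, a minimum cut in which deleting a whole application costs $1$ is not an ordinary min-cut problem.

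The missing idea is to organise the cuts per clause of $C$ and exploit this ``one of two'' structure.
\begin{itemize}
\item For every clause $(a_i\vee b_i)$ of $C$, compute a minimum arc cut of all $(a_i,\neg a_i)$-paths and one of all $(b_i,\neg b_i)$-paths. Do this in $G_F$ itself: paths may use arcs of $C$, which are made undeletable, e.g.\ by contraction. Keep the cheaper cut as $T_i$, and let $P_C$ consist of the applications owning an arc of $\bigcup_i T_i$.
\item Suppose both path families survived in $G_{F-X_C}$. Then Lemma~\ref{lem:self-reachability-requires-endpoint} would force $a_i=b_i=0$, violating the clause $(a_i\vee b_i)$, which is still in $F-X_C$. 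So the at most $2d|X_C|$ arcs of $X_C$ destroy one family. Hence $|T_i|\le 2d|X_C|$, and $|P_C|\le d\cdot 2d\,|X_C|$.
\item Property~3 is then shown not by decomposing contradictory cycles but by repairing a satisfying assignment $s$ of $F-X$. Let $F^\ast=F-((X\setminus\{C\})\cup P_C)$. While some clause $(a_i\vee b_i)$ of $C$ is violated, pick the literal among $a_i,b_i$ that has no path to its negation in $G_{F^\ast}$. Set it, and everything reachable from it in $G_{F^\ast}$, to $1$.
\item Skew-symmetry (Lemma~\ref{lem:reverse-paths-implication-graph}) shows this never sets both $x$ and $\neg x$. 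Closure under reachability shows no satisfied clause of $F^\ast$ becomes violated.
\end{itemize}
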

\begin{proof}
    First note that we may assume w.l.o.g.\ that input formulas~$F$ are already given to us in suitable $2$CNF representation when dealing with \minF instances. Otherwise, we could, for every fixed bijunctive~$\F$, provide a polynomial-time transformation using a hard-coded bijunctive representation of each constraint in~$\F$. Thus, let~$F$ be an $\F$-formula represented in $2$CNF in which every constraint is written as a conjunction of at most~$d$ clauses. Let~$G_F$ be its implication graph, and let~$C$ be a constraint application in~$F$ that is expressed as $(a_1 \vee b_1) \wedge \ldots \wedge(a_\ell \vee b_\ell)$, where~$a_1, \ldots a_\ell, b_1, \ldots, b_\ell$ are literals for some~$\ell \leq d$.

    To prove the lemma, we give a polynomial-time algorithm that computes a set $P_C$ from~$F$ and~$C$, after which we show that this set meets the three preconditions described in \cref{lem:main-positive-ingredient}. To ensure that our algorithm is well-defined, we prove the claim below.
    
    \begin{claim}
    \label{clm:path-absence}
        Let~$F'$ be any 2CNF formula that contains~$C$ and let~$G_{F'}$ be the implication graph of~$F'$. If there is an~$i \in [\ell]$ such that~$G_{F'}$ contains both an~$(a_i, \neg a_i)$-path and a~$(b_i, \neg b_i)$-path, then~$F'$ is unsatisfiable.
    \end{claim}
    \begin{claimproof}
        Let~$i \in [\ell]$ be such that~$G_{F'}$ contains both an~$(a_i, \neg a_i)$-path and a~$(b_i, \neg b_i)$-path. Then, by \cref{lem:self-reachability-requires-endpoint}, any satisfying assignment for~$F'$ must set~$a_i$ and~$b_i$ to $0$. However, since~$F'$ contains~$C$, which in turn is expressed as a 2CNF constraint application containing the clause~$(a_i \vee b_i)$, every assignment that satisfies~$F'$ must set at least one of~$a_i$ and~$b_i$ to $1$. Hence,~$F'$ is unsatisfiable.
    \end{claimproof}
    Now,~$P_C$ can be computed as follows.
    \begin{itemize}
        \item Construct the implication graph~$G_{F}$ of~$F$.
        \item For all~$i \in [\ell]$, compute a smallest subset of arcs in $G_{F-f}$ that breaks all~$(a_i, \neg a_i)$-paths in~$G_F$ and compute a smallest subset of arcs in $G_{F-f}$ that breaks all~$(b_i, \neg b_i)$-paths in~$G_F$. If only one of these two sets exists, let~$T_i$ be this set, or, if both exist, let~$T_i$ be a minimum-size set among the two. (Note that at least one of these cuts must exist. If not, the implication graph~$G_C$ of the singular constraint application~$C$ would have to contain both an~$(a_i, \neg a_i)$-path and a~$(b_i, \neg b_i)$-path, which, by \cref{clm:path-absence}, would imply~$C$ to be unsatisfiable. This would contradict our assumption that all individual constraints we consider are satisfiable.)
        \item Let~$T$ be $\bigcup_{i \in [\ell]} T_i$ and let~$P_C$ be the set of constraint applications in~$F$ with at least one of their corresponding implication arcs in~$T$.
    \end{itemize}
    First, we observe that the construction above can be executed in polynomial time. The bottleneck is the second step, in which a linear number of cut problems need to be solved. These are variants of the standard min-cut problem on directed graphs, which is solvable in polynomial time using the Ford-Fulkerson algorithm for example~\cite{FordF56}. To model that a given set of arcs in the graph (those that correspond to~$C$) is undeletable, we can contract these edges before running the min-cut algorithm. If the two vertices to be cut are contracted to the same vertex, we conclude that no cut in~$G_F$ exists that avoids all of the undeletable arcs.
    
    It remains to prove that $P_C$ satisfies the three properties specified in \cref{lem:main-positive-ingredient}. By construction,~$P_C$ does not contain~$C$, so it satisfies the first property.

    For the other two properties, we start by noting that~$T$ is a set of arcs that, for every~$i \in [\ell]$, breaks either all~$(a_i, \neg a_i)$-paths or all~$(b_i, \neg b_i)$-paths in~$G_{F}$ without including any arcs that correspond to a clause of~$C$. We call such a set a \emph{$C$-respecting arc set}. (Note that such sets may include duplicates of arcs that correspond to a clause of~$C$, but not those that specifically correspond to~$C$.) Likewise, we call a set of constraint applications from~$F$ a \emph{$C$-respecting constraint application set} if the union of their respective arcs in~$G_F$ is a $C$-respecting arc set. As such,~$P_C$ is a $C$-respecting constraint application set. We use this notation to show that~$P_C$ satisfies the second property specified in \cref{lem:main-positive-ingredient}.
    \begin{claim}
        Let~$X_C$ be a smallest set of constraints from~$F-C$ such that~$F-X_C$ is satisfiable. Then,~$|P_C| \leq 2d^2\cdot|X_C|$.
    \end{claim}
    \begin{claimproof}
        First, we argue that~$X_C$ is a $C$-respecting constraint application set. Suppose, for contradiction, that it is not. Then, since~$C \notin X_C$, there is some~$i \in [\ell]$ for which the implication graph~$G_{F-X_C}$ of~$F-X_C$ contains both an~$(a_i, \neg a_i)$-path and a~$(b_i, \neg b_i)$-path. But then, \cref{clm:path-absence} implies that~$F-X_C$ is unsatisfiable; a contradiction.
        
        By definition, the union~$T^\ast$ of all arcs in~$G_F$ that correspond to a clause in~$X_C$ is a $C$-respecting arc set in~$G_F$. Moreover, since every constraint application in~$F$ is expressed as a conjunction of at most~$d$ clauses that each correspond to~$2$ arcs in~$G_F$, we find that~$|T^\ast| \leq 2d \cdot|X_C|$.

        For every~$i \in [\ell]$,~$T^\ast$ breaks either all~$(a_i, \neg a_i)$-paths or all~$(b_i, \neg b_i)$-paths, so $\max_{i \in [\ell]} |T_i| \leq |T^\ast|$. As~$T$ is defined to be the union of the sets~$T_i$ for all $i \in [\ell]$, we get that $|T| \leq \ell \cdot \max_{i \in [\ell]} |T_i|$. Combining this with the previous inequality and the fact that~$\ell \leq d$, we obtain that $|T| \leq d \cdot|T^\ast|$.

        Finally, because every arc in~$G_F$ corresponds to exactly one constraint in~$F$, we find that~$|P_C| \leq |T|$. Combining this with all previous inequalities, we obtain that $|P_C| \leq |T| \leq d \cdot |T^\ast| \leq 2d^2 \cdot |X_f|$.
    \end{claimproof}

    Now, we show that~$P_C$ also satisfies the third property listed in \cref{lem:main-positive-ingredient}.
    \begin{claim}
        For every set~$X$ such that~$F-X$ is satisfiable,~$F^\ast := F-((X \setminus \{C\}) \cup P_C)$ is also satisfiable.
    \end{claim}
    \begin{claimproof}
        Let~$X$ be such that~$F-X$ is satisfiable and let~$s$ be a satisfying assignment for~$F-X$. Of course, this assignment satisfies almost all constraint applications in~$F^\ast$, but it may not satisfy~$C$. First, we present a procedure that transforms~$s$ into a satisfying assignment for~$F^\ast$. Then, we show that this is correct. The procedure is as follows.

        First, select an arbitrary clause~$(a_i \vee b_i)$ from~$C$ that is not yet satisfied by~$s$. Then, determine whether the implication graph~$G_{F^\ast}$ of~$F^\ast$ contains an~$(a_i, \neg a_i)$-path or a~$(b_i, \neg b_i)$-path. Since~$P_C$ is a $C$-respecting constraint set, at least one of these does not exist in~$G_{F^\ast}$. Assume w.l.o.g.\ that~$G_{F^\ast}$ does not contain an~$(a_i, \neg a_i)$-path. Then, update~$s$ by setting~$a_i$ and all literals reachable from~$a_i$ in~$G_{F^\ast}$ to~$1$. Repeat this for as long as there are clauses of~$C$ not yet satisfied by~$s$.

        To show that this procedure is correct, we start by showing that there is no iteration in which we set both~$x$ and~$\neg x$ to~$1$ for any variable~$x$. 
        Consider an arbitrary iteration and suppose, for contradiction, that we set both~$x$ and~$\neg x$ to~$1$ in this iteration. Let~$(a_i \vee b_i)$ be the unsatisfied clause from~$C$ that is selected in this iteration. By assuming w.l.o.g.\ that~$G_{F^\ast}$ does not contain an~$(a_i, \neg a_i)$-path, this would mean that both~$x$ and~$\neg x$ are reachable from~$a_i$ in~$G_{F^\ast}$. However, the fact that~$x$ is reachable from~$a_i$ implies, by \cref{lem:reverse-paths-implication-graph}, that~$\neg a_i$ is reachable from~$\neg x$. Since $\neg x$ is reachable from~$a_i$, this means that~$\neg a_i$ is reachable from~$a_i$. This contradicts the assumption that~$G_{F^\ast}$ does not contain an~$(a_i, \neg a_i)$-path.

        The argument above shows that each iteration ends with a valid assignment~$s$. Also, by setting~$a_i$ to~$1$,~$s$ now satisfies the previously unsatisfied clause~$(a_i \vee b_i)$. Moreover, we show that no clause in~$F^\ast$ that was satisfied at the start of an iteration becomes unsatisfied at the end of the iteration. 
        
        Suppose for contradiction that such a clause~$(x \vee y)$ exists. If this clause was satisfied at the start of the iteration, at least one of~$x$ and~$y$ must have been set to~$0$ during this iteration. Assume w.l.o.g.\ that this was~$x$. Then,~$\neg x$ is reachable from~$a_i$ in~$G_{F^\ast}$. However, the existence of the clause~$(x \vee y)$ in~$f$ implies the existence of an arc~$(\neg x, y)$ in~$G_{F^\ast}$, meaning that~$y$ is also reachable from~$a_i$ in~$G_{F^\ast}$, thus being set to~$1$ during this iteration. This contradicts the assumption that~$(x \vee y)$ is not satisfied at the end of the iteration.

        From this, we conclude that the described procedure to modify~$s$ results in a satisfying assignment of~$F^\ast$: each iteration results in a valid assignment that satisfies strictly more clauses than before. Thus, at some point the procedure terminates, at which point a satisfying assignment of~$F^\ast$ has been constructed. Therefore,~$F^\ast$ is satisfiable.
    \end{claimproof}    

    This proves that the algorithm described above computes a set~$P_C$ in polynomial time that satisfies all three preconditions for \cref{lem:main-positive-ingredient}. In particular, our algorithm satisfies these constraints with~$c=2d^2$. By \cref{lem:main-positive-ingredient}, there is a polynomial-time~$(2d^2+1)$-detection algorithm for \minF.
\end{proof}

We recall that FPT algorithms, parameterized by solution size, are known for \minF on bijunctive constraint sets~$\F$ if and only if the so-called Gaifman graph of~$\F$ is~$2K_2$-free~\cite{KimKPW25}. So, by a result analogous to one from the original work~\cite[Theorem 5.1]{BumpusJK24} on~$c$-essential detection, \cref{lem:bijunctive} effectively shows that we can reduce the search-space of these FPT algorithms. In particular, it shows that their exponential runtime dependence on the total size of an optimal solution can be improved to depend only on the number of constraint applications in an optimal solution that are \emph{not}~$c$-essential.

\section{Hardness results} \label{sec:hardness}
Our hardness results all follow the same overarching structure. For a given category of constraint sets, we first give a canonical set of constraints~$\Fcan$ from that category and prove for the corresponding \mincsp{$\Fcan$} problem that~$c$-essential detection is hard for all constants~$c \in \Rone$. Then, we show that there is a reduction from this problem to every other \minF problem for constraint sets~$\F$ in the same category. This reduction is constructive. To achieve this, we show that every constraint set from a given category can be used to express the constraints from the canonical set. Under the right restrictions, this immediately shows the reducibility from the canonical set to the other sets in the category. These restrictions are given in the notion of \emph{essential implementations} as defined below.
\begin{definition} \label{def:essential-implementation}
    Let~$f: \{0,1\}^p \to \{0,1\}$ be a constraint over~$p$ variables. A collection~$\C$ of constraint applications~$C_1, \ldots, C_m$ over variables~$\X = \{x_1, \ldots, x_p\}$ and dummy variables~$\Y = \{y_1, \ldots, y_q\}$ is an \emph{essential implementation} of the constraint application~$(f, \X)$ if:
    \begin{enumerate}[({I}1)]
        \item an assignment to~$\X$ satisfies~$(f,\X)$ if and only if there is an extension of this assignment to~$\X \cup \Y$ that satisfies all constraint applications~$C_1, \ldots, C_m$; and \label{itm:I1}
        \item for every assignment to $\X$, there is an extension to~$\X \cup \Y$ that satisfies~$C_2, \ldots,C_m$. (Note the exclusion of~$C_1$, which we call the \emph{head} of the implementation.) \label{itm:I2}
    \end{enumerate}
\end{definition}

We say that a constraint set~$\F$ \emph{essentially implements} a constraint~$f$ if~$f$ admits an essential implementation via constraint applications that are each from~$\F$. We denote this as~$\F \imp f$. We say that a constraint set~$\F_1$ essentially implements a constraint set~$\F_2$ if~$\F_1$ essentially implements each of the constraints from~$\F_2$. We denote this as~$\F_1 \imp \F_2$. In both arrow notations, if $\F_1$ is a singleton set, we may replace it by just its singleton element.

This notion of essential implementations is stricter than, e.g., the notion of pp-definitions, which are similarly defined but without requiring Property~\textbf{(I\ref{itm:I2})}. Such definitions facilitate reductions between formulas of different constraint sets that preserve the satisfiability of the formulas, as for example used by Schaefer~\cite{Schaefer78}. Even more similar is the notion of~$\alpha$-implementations as defined by Khanna et al.~\cite{KhannaSTW00}, especially those that are both `strict' and `perfect'. These implementations only differ from our \cref{def:essential-implementation} by allowing different assignments to~$\X$ that do not satisfy~$f$ to extend to an assignment on~$\X \cup \Y$ that leaves a different (but still singular) constraint application~$C_i$ unsatisfied. This uncertainty in which constraint application remains unsatisfied leads to changes in the structure of the solution space that makes these implementations unsuitable for our purposes. However, the authors use their definition of~$\alpha$-implementations to construct reductions that preserve the constant-factor approximability of \minF.

Our definition of essential implementations allows for a reduction that also preserves the exact approximation-ratio for constant factor-approximable instances. In fact, it even preserves the entire space of~$c$-approximate solutions in a predictable and useful manner. This is needed to prove the following result.
\begin{lemma} \label{lem:implementation-yields-reduction}
    Let~$\F_1$ and~$\F_2$ be constraint sets such that~$\F_1 \imp \F_2$, and let~$c \in \Rone$ be a constant. If \mincsp{$\F_1$} admits a polynomial-time $c$-essential detection algorithm, then \mincsp{$\F_2$} also admits a polynomial-time $c$-essential detection algorithm.
\end{lemma}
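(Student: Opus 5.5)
Given an $\F_2$-formula $F$ and an integer $k$, I would construct a weighted $\F_1$-formula $F'$ and run the detection algorithm for \mincsp{$\F_1$} on it. Via \cref{lem:unweighted-to-weighted}, I would apply it to the unweighted copy-expanded version of $F'$; the weights are polynomial, so this costs only polynomial time.

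\emph{Construction.} For each constraint application $C=(f,\X_C)$ in $F$, fix (hard-coded, since $\F_2$ is finite) an essential implementation $C_1,\dots,C_m$ of $f$ over $\F_1$. Instantiate it on $\X_C$ with fresh dummy variables $\Y_C$ private to $C$. The head $C_1$ gets weight $1$ and the non-head applications $C_2,\dots,C_m$ get weight $M:=\lceil c\cdot|F|\rceil+1$. The idea is that non-head applications are too expensive to appear in any $c$-approximate solution of a relevant instance. We call the algorithm with the same $k$ and output $S=\{C : C_1 \in D'\}$, where $D'$ is the returned set.

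\emph{Solution correspondence.} I will first show that for every solution $X$ of $F$, the set $X'=\{C_1: C\in X\}$ is a solution of $F'$ of the same weight. Take an assignment satisfying $F-X$. For $C\notin X$, (I\ref{itm:I1}) gives dummy values satisfying all $C_i$. For $C\in X$, (I\ref{itm:I2}) gives dummy values satisfying $C_2,\dots,C_m$. The dummy sets are disjoint, so these choices combine. Hence $\opt(F')\le\opt(F)\le|F|$. Conversely, let $X'$ be any solution of $F'$ of weight at most $c\cdot\opt(F')\le c|F|<M$. It contains only heads, and then $\{C: C_1\in X'\}$ is a solution of $F$ of equal weight: restricted to $\X$, the satisfying assignment satisfies every $C$ whose whole gadget is kept, by (I\ref{itm:I1}). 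Together these give $\opt(F')=\opt(F)$ and a weight-preserving bijection between $c$-approximate solutions of $F$ and those of $F'$, mapping $C\mapsto C_1$. In particular the bijection restricts to optimal solutions, and $C$ is $c$-essential in $F$ iff $C_1$ is $c$-essential in $F'$.

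\emph{Verifying the output.} For (G\ref{itm:G1}): if $\opt(F)\le k$, then $\opt(F')\le k$, so some optimal solution of $F'$ contains $D'$. That solution consists of heads only and maps to an optimal solution of $F$ containing $S$. For (G\ref{itm:G2}): if $\opt(F)=k$, every $c$-essential $C$ has $C_1$ $c$-essential in $F'$, so $C_1\in D'$ and hence $C\in S$.

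\emph{Main obstacle.} The delicate step is that the detection algorithm may place non-head applications in $D'$. These are simply discarded, which is harmless precisely because (G\ref{itm:G1}) guarantees some optimal solution contains all of $D'$, and optimal solutions contain no non-heads, so in fact $D'$ consists of heads whenever $\opt(F)\le k$. I also need to check that giving every non-head weight $M$ genuinely excludes non-heads from $c$-approximate solutions, not just optimal ones. This uses $c\cdot\opt(F')<M$. Two points need care here: $\opt(F')\le|F|$ holds since removing all heads is feasible, and $M$ must be polynomial and integral so that \cref{lem:unweighted-to-weighted} applies.
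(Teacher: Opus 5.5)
Your proof is correct and takes essentially the same route as the paper. Like the paper, you replace each application by an essential implementation with weight-$1$ heads and non-heads of weight $\lceil c\cdot|F|\rceil+1$, use (I1)/(I2) to get a weight-preserving correspondence between $c$-approximate solutions and sets of heads, and transfer (G1) and (G2) through \cref{lem:unweighted-to-weighted}. The only cosmetic difference is that you discard non-head applications from $D'$, whereas the paper returns the empty set if any appear; both are valid, since $D'$ consists only of heads whenever $\opt(F)\le k$.
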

\begin{proof}
    We prove the statement by showing how to reduce an~$\F_2$-formula to an integer-weighted~$\F_1$-formula in such a way that a polynomial-time~$c$-essential detection algorithm on the~$\F_1$-formula can be used to detect~$c$-essential constraint applications in the~$\F_2$-formula. By \cref{lem:unweighted-to-weighted}, this suffices to prove the lemma. 
    
    Let~$F_2$ be an~$\F_2$-formula with constraint applications~$C_1, \ldots, C_\ell$ over variables~$\X = \{x_1, \ldots, x_n\}$. To transform~$F_2$ into a weighted~$\F_1$-formula $F_1$, we replace every~$C_i$ in~$F_2$ by an essential implementation~$\C_i$ of constraint applications from~$\F_1$ over variables $\X \cup \Y_i$. We denote the head of~$\C_i$ by~$C_i^\ast$. We give all constraint applications a weight of~$\lceil c \cdot \ell \rceil + 1$, except for the heads~$C_1^\ast, \ldots, C_\ell^\ast$ that each receive a weight of~$1$. We define $\Y:= \bigcup_{i\in[\ell]} \Y_i$. Since implementations are, by definition, of constant size, this results in a weighted $\F_1$-formula~$F_1$ of~$\Oh(\ell)$ constraint applications over the $\Oh(n+\ell)$ variables $\X \cup \Y$.

    To prove the correctness of this reduction, we start by proving the following claim about the non-head constraint applications in the resulting formula~$F_1$. To formulate the claim, we define~$H := \{ C_1^\ast, \ldots, C_\ell^\ast \}$ to be the set of all heads.

    \begin{claim} \label{clm:only-heads}
        Constraint applications in~$F_1$ that are not in~$H$ are not in any $c$-approximate solution of the resulting~\mincsp{$\F_1$} instance.
    \end{claim}
    \begin{claimproof}
        First, we show that~$H$ is a valid solution, i.e.,~$F_1 - H$ is satisfiable. We start with an arbitrary assignment~$\mathcal{X}$ to the variables~$\X$ and show that it can be extended to an assignment on the variables~$\X \cup \Y$ that satisfies~$F_1 - H$.

        Note that, for every~$i \in [\ell]$, due to Property~\textbf{(G\ref{itm:G2})}, the assignment~$\mathcal{X}$ can be extended to an assignment on~$\X \cup \Y_i$ that satisfies~$\C_i \setminus \{C_i^\ast\}$. Hence, the union of all these assignments satisfies all constraint applications in~$F_1$ except the heads. Hence,~$F_1 - H$ is satisfiable.

        Since all heads have a weight of~$1$, this yields that~$H$ is a solution with weight~$\ell$. All other constraint applications have a weight of~$\lceil c \cdot \ell \rceil + 1$, which is greater than~$c$ times the weight of solution~$H$. As such, no constraint application outside the set of heads~$H$ is in a $c$-approximate solution.
    \end{claimproof}

    Hence, every $c$-approximate solution of the resulting \mincsp{$\F_1$} instance is a subset of~$H$. Below, we show that the set~$H$ spans the exact same space of $c$-approximate solutions for~$F_1$ as the set of all constraint applications $C_1, \cdots, \C_\ell$ does for~$F_2$.

    \begin{claim}
    \label{clm:implementations-preserve-solution-space}
        Let~$S$ be a set of constraint applications in $F_2$ and let~$S_H \subseteq H$ be the corresponding set of heads in~$F_1$. Then,~$F_2 - S$ is satisfiable if and only if~$F_1 - S_H$ is satisfiable.
    \end{claim}
    \begin{claimproof}
        For one direction, let~$F_2 - S$ be satisfiable and let~$s$ be an assignment on~$\X$ that satisfies~$F_2 - S$. We show that an extension of~$s$ to~$\X \cup \Y$ exists that satisfies~$F_1 - S_H$.
        \begin{itemize}
            \item For every~$i \in [\ell]$ such that~$C_i \in S$, assignment~$s$ can be extended to an assignment on~$\X \cup \Y_i$ that satisfies all constraint applications in~$\C \setminus \{C_i^\ast\}$, due to Property~\textbf{(I\ref{itm:I2})}.
            \item For every~$i \in [\ell]$ such that~$C_i \notin S$, since~$s$ satisfies~$C_i$, assignment~$s$ can be extended to an assignment on~$\X \cup \Y_i$ that satisfies all constraint applications in~$\C_i$, due to Property~\textbf{(I\ref{itm:I1})}.
        \end{itemize}
        This covers all constraint applications in~$F_1-S_H$, showing the first direction of the claim.
        
        For the other direction, observe that any assignment~$s'$ on~$\X \cup \Y$ that satisfies~$F_1 - S_H$, satisfies the constraint applications in~$\C_i$ for every~$i$ such that~$C_i \notin S$. Thus, by Property~\textbf{(I\ref{itm:I1})}, the restriction of~$s'$ to~$\X$ satisfies~$F_2 - S$, thereby showing the other direction of the claim.
    \end{claimproof}
    Since every head has weight~$1$, it follows from the claims above that~$\opt(F_1) = \opt(F_2)$. Now we show how to use a polynomial-time~$c$-essential detection algorithm~$\mathcal{A}$ for~\wmincsp{$\F_1$} to construct a polynomial-time~$c$-essential detection algorithm for~\mincsp{$\F_2$}.

    Given an~$\F_2$-formula~$F_2$ and an integer~$k$, we start by transforming it into a weighted~$\F_1$-formula~$F_1$ using the reduction above; the weights are polynomial. Then, we apply~$\mathcal{A}$ to~$F_1$ and let~$D_1$ denote its output. If~$D_1$ contains non-head constraint applications of~$F_1$, we return the empty set. Otherwise, we return the set of constraint applications~$C_i$ for which the corresponding head~$C_i^\ast$ is in~$D_1$. In either case, let~$D_2$ be the output of our algorithm.

    We spend the remainder of the proof to show that the above is indeed a $c$-essential detection algorithm for \mincsp{$\F_2$}. To this end, suppose that~$\opt(F_2)=\opt(F_1) \leq k$. Then, since~$D_1$ satisfies Property~\textbf{(G\ref{itm:G1})}, there is a set~$S_1 \supseteq D_1$ such that~$S_1$ is an optimal solution to~$F_1$. By \cref{clm:only-heads}, $S_1$ only contains heads, and by extension so does $D_1$. As such, our proposed algorithm outputs the set~$D_2$ containing the constraint applications in~$F_2$ of which the heads of their essential implementation is in~$D_1$. For~$D_2$ to satisfy Property~\textbf{(G\ref{itm:G1})}, there must be a superset~$S_2 \supseteq D_2$ that is an optimal solution to~$F_2$. If we let~$S_2$ be the set of constraint applications in~$F_2$ of which the heads of their essential implementation are in~$D_2$, then this is indeed the case: by construction, it is a superset of~$D_2$ and because~$S_1$ is an optimal solution to~$F_1$, we have that~$S_2$ is an optimal solution to~$F_2$ by \cref{clm:implementations-preserve-solution-space}.

    To show that~$D_2$ also satisfies Property~\textbf{(G\ref{itm:G2})}, suppose that~$\opt(F_2) = \opt(F_1) = k$. Then, since~$D_1$ satisfies Property~\textbf{(G\ref{itm:G1})}, it contains all $c$-essential constraint applications from~$F_1$. Now, note that \cref{clm:only-heads,clm:implementations-preserve-solution-space} show that the entire space of $c$-approximations is preserved in our reduction with a one-to-one correspondence between constraint applications~$C_i$ in~$F_2$ and their respective head~$C_i^\ast$ in~$F_1$. As such, a constraint application~$C_i$ in~$F_2$ is $c$-essential if and only if its respective head~$C_i^\ast$ is $c$-essential in~$F_1$. Hence, since~$D_2$ is precisely the set of constraint applications in~$F_2$ to which those in~$D_1$ are the heads,~$D_2$ contains all $c$-essential constraint applications from~$F_2$.
\end{proof}
To aid us in finding essential implementations, we proceed by proving that the notion of essential implementations is a transitive relation.
\begin{lemma} \label{lem:transitivity}
    Let~$\F_1$,~$\F_2$, and~$\F_3$ be constraint sets. If~$\F_1 \imp \F_2$ and~$\F_2 \imp \F_3$, then~$\F_1 \imp \F_3$.
\end{lemma}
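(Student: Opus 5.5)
The plan is a composition (substitution) argument. Fix $f \in \F_3$ and write $f$'s application as $(f,\X)$. Since $\F_2 \imp \F_3$, there is an essential implementation $\C = \{C_1, \ldots, C_m\}$ of $(f,\X)$ over $\X \cup \Y$ by $\F_2$-constraint applications, with head $C_1$. Since $\F_1 \imp \F_2$, each $C_j = (g_j, \Z_j)$ with $g_j \in \F_2$ has an essential implementation $\mathcal{D}_j$ by $\F_1$-constraint applications over $\Z_j \cup \Y_j$, with head $D_j^\ast$. The dummy sets $\Y_j$ are fresh and pairwise disjoint, and disjoint from $\X \cup \Y$. I will take $\mathcal{D} := \bigcup_{j \in [m]} \mathcal{D}_j$ over $\X \cup \Y \cup \bigcup_j \Y_j$ and declare its head to be $D_1^\ast$, the head of the implementation of the old head $C_1$. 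This is a finite collection of $\F_1$-applications, so it remains constant size.

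For \textbf{(I\ref{itm:I1})}, I will chain the equivalences. Take an assignment $s$ to $\X$ that satisfies $(f,\X)$. By (I1) for $\C$, it extends to $\X \cup \Y$ satisfying all $C_j$. By (I1) for each $\mathcal{D}_j$, that extension extends further to $\Y_j$ satisfying $\mathcal{D}_j$. Because the $\Y_j$ are disjoint, these extensions combine into one assignment satisfying all of $\mathcal{D}$. Conversely, suppose an extension satisfies all of $\mathcal{D}$. Restricting it to $\Z_j$ and applying (I1) for $\mathcal{D}_j$ shows it satisfies each $C_j$. Its restriction to $\X \cup \Y$ then satisfies $\C$, so by (I1) for $\C$ the original assignment $s$ satisfies $(f,\X)$.

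For \textbf{(I\ref{itm:I2})}, start from an arbitrary assignment to $\X$. By (I2) for $\C$, it extends to $\X \cup \Y$ satisfying $C_2, \ldots, C_m$. For each $j \ge 2$, (I1) for $\mathcal{D}_j$ lets us extend to $\Y_j$ so that all of $\mathcal{D}_j$ is satisfied. For $j = 1$ we do not know whether $C_1$ holds, so instead I use (I2) for $\mathcal{D}_1$: every assignment to $\Z_1$ extends to $\Y_1$ satisfying $\mathcal{D}_1 \setminus \{D_1^\ast\}$. Disjointness of the dummy sets again lets these pieces combine, giving an extension that satisfies everything in $\mathcal{D}$ except possibly the head $D_1^\ast$.

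The only real care point is this handling of the head in (I2). The new head has to be the head of the sub-implementation of $C_1$, because the sub-implementations of the non-heads must be fully satisfiable, and that works only because (I2) for $\C$ guarantees $C_2, \ldots, C_m$ hold. A minor technicality is that variable sequences may repeat variables or that some $\Z_j$ overlaps $\Y$. This is harmless because all statements concern assignments to the named variables, so each (I1)/(I2) application is made to the restriction on $\Z_j$. Applying the construction to every $f \in \F_3$ gives $\F_1 \imp \F_3$.
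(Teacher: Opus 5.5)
Your proposal is correct and follows the paper's proof essentially step for step. You compose the implementations and take the head of the sub-implementation of the outer head $C_1$ as the new head. You then chain (I1) in both directions. For (I2) you combine (I2) of the outer implementation with (I1) of the sub-implementations of $C_2,\ldots,C_m$ and (I2) of the sub-implementation of $C_1$.
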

\begin{proof}
    Let~$(f_3, \X)$ be an arbitrary constraint application from~$\F_3$ over variables~$\X = \{x_1, \ldots, x_p\}$. To prove the lemma, we show that~$\F_1 \imp f_3$.

    Since~$\F_2 \imp \F_3$, there is an essential implementation of~$(f_3, \X)$ consisting of constraint applications~$C_1, \ldots, C_m$ from~$\F_2$ over variables~$\X$ and shared dummy variables~$\Y$. Let~$C_1$ be the head of this implementation. Since~$\F_1 \imp \F_2$, there is, for each~$i \in [m]$, an essential implementation~$\C_i$ of~$C_i$ consisting of constraint applications from~$\F_1$ over variables~$\X \cup \Y$ and dummy variables~$\Z_i$. Let~$C_i^\ast \in \C_i$ denote the head of~$\C_i$. Note that the constraint applications in one~$\C_i$ share one set of dummy variables~$\Z_i$, but that this set differs between different implementations~$\C_i$. We denote~$\Z := \bigcup_{i \in [m]} \Z_i$.

    We show that the concatenation of all constraint applications in~$\C_i, \ldots, \C_m$ is an essential implementation of~$f_3$ with head~$C_1^\ast$. We start by showing that it satisfies Property~\textbf{(I\ref{itm:I1})}.

    \begin{claim}
        An assignment to~$\X$ satisfies~$f_3$ if and only if there is an extension of this assignment to~$\X \cup \Y \cup \Z$ that satisfies all constraint applications in~$\C$.
    \end{claim}
    \begin{claimproof}
        By Property~\textbf{(I\ref{itm:I1})} of the essential implementation of~$C_1, \ldots, C_m$, an assignment to~$\X$ satisfies~$f_3$ if and only if there is an extension of it to~$\X \cup \Y$ that satisfies~$C_1, \ldots, C_m$.

        Likewise, an assignment to~$\X \cup \Y$ satisfies a constraint application~$C_i$ if and only if there is an extension of this assignment to~$\X \cup \Y \cup \Z_i$ that satisfies $\C_i$. By extension, an assignment to~$\X \cup \Y$ satisfies all constraints $C_1, \ldots, C_m$ if and only if there is an extension of this assignment to~$\X \cup \Y \cup \Z$ that satisfies all constraint applications in $\C$.

        The above two paragraphs combine to prove the claim.
    \end{claimproof}

    We continue by showing that the given implementation also satisfies Property~\textbf{(I\ref{itm:I2})}
    \begin{claim}
        For every assignment to~$\X$, there is an extension to~$\X \cup \Y \cup \Z$ that satisfies all constraint applications in~$\C \setminus C_1^\ast$.
    \end{claim}
    \begin{claimproof}
        Let~$s$ be an arbitrary assignment to~$\X$. By Property~\textbf{(I\ref{itm:I2})} of the implementation~$C_1, \ldots, C_m$, it extends to an assignment~$s'$ on~$\X \cup \Y$ that satisfies~$C_2, \ldots, C_m$. By Property~\textbf{(I\ref{itm:I1})} of the implementations~$\C_2, \ldots, \C_m$,~$s'$ extends to an assignment on~$\X \cup \Y \cup (\Z \setminus \Z_1)$ that satisfies all constraint applications in~$\C_2, \ldots, \C_m$. By Property~\textbf{(I\ref{itm:I2})} of the implementation~$\C_1$, every assignment on~$\X \cup \Y$ --- including~$s'$ --- extends to an assignment on~$\X \cup \Y \cup \Z_1$ that satisfies~$\C_1 \setminus\{C_1^\ast\}$. Thus,~$s$ extends to an assignment~$s'$ that in turn extends to an assignment on~$\X \cup \Y \cup \Z$ that satisfies~$\C \setminus \{C_1^\ast\}$, proving that~$\C$ satisfies Property~\textbf{(I\ref{itm:I2})}.
    \end{claimproof}
    This concludes the proof of the lemma.
\end{proof}
Before continuing with the hardness results, we prove some basic properties about essential implementations.
First, note for example that~$f \imp f$. It is also easy to see that if~$\F_1 \imp \F_2$ and~$\F_1 \subseteq \F_1'$, then~$\F_1' \imp \F_2$. We may use these results without reference. The following two also follow almost directly from the definition.
\begin{observation} \label{obs:quantified-implementation}
    Let~$f$ be a constraint over variables~$\{y_1, \ldots, y_q,x_1, \ldots, x_p\}$, and let $f'(x_1, \ldots, x_p) := \exists y_1, \ldots, y_q \mathrm{~s.t.~} f(y_1, \ldots, y_q,x_1, \ldots, x_p)$. Then,~$f \imp f'$.
\end{observation}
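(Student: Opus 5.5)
The plan is to exhibit an explicit essential implementation of the constraint application $(f', \{x_1,\ldots,x_p\})$: a single constraint application $C_1 := (f, (y_1,\ldots,y_q,x_1,\ldots,x_p))$, where $\Y = \{y_1,\ldots,y_q\}$ are taken as fresh dummy variables. Thus the collection $\C = \{C_1\}$ has $m=1$, and $C_1$ is its head. Since $f \in \{f\}$, this is an implementation via constraint applications from the singleton set $\{f\}$, so it yields $f \imp f'$ once we verify Properties~\textbf{(I\ref{itm:I1})} and~\textbf{(I\ref{itm:I2})} of \cref{def:essential-implementation}.

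For Property~\textbf{(I\ref{itm:I1})}, we argue directly from the definition of $f'$. An assignment $s$ to $\{x_1,\ldots,x_p\}$ satisfies $(f',\X)$ if and only if there exist values for $y_1,\ldots,y_q$ such that $f(y_1,\ldots,y_q,s(x_1),\ldots,s(x_p)) = 1$. This holds precisely when $s$ has an extension to $\X \cup \Y$ that satisfies $C_1$. Since $C_1$ is the only constraint application in the collection, this is exactly~\textbf{(I\ref{itm:I1})}.

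For Property~\textbf{(I\ref{itm:I2})}, the set $\{C_2,\ldots,C_m\}$ is empty because $m=1$. Hence every assignment to $\X$, extended arbitrarily to $\Y$, vacuously satisfies all of $C_2,\ldots,C_m$.

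There is no real obstacle here. The only point that needs care is that the definition permits the dummy variables to be fresh and permits an implementation consisting solely of its head. Both are allowed by \cref{def:essential-implementation}, so the observation follows immediately.
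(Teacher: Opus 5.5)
Your proof is correct and is exactly the argument the paper intends when it says this observation follows almost directly from the definition (the paper writes out no proof). The single application of $f$, with $y_1,\ldots,y_q$ as dummy variables, is its own head: (I1) is just the definition of existential quantification, and (I2) holds vacuously because there are no non-head applications.
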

\begin{observation} \label{obs:substituted-implementation}
    Let~$f$ be a constraint over variable set~$\X$ and let~$V_1, \ldots, V_p$ be pairwise disjoint subsets of~$\X$. Then, for variables~$v_1, \ldots, v_p$ that are not necessarily in~$\X$, we have~$f \imp f\left[ \stack{V_1}{v_1} \comma \ldots \comma \stack{V_p}{v_p} \right]$.
\end{observation}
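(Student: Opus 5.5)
The plan is to take the trivial implementation consisting of a single constraint application, which then also serves as the head. Write $g := f\left[ \stack{V_1}{v_1} \comma \ldots \comma \stack{V_p}{v_p} \right]$, and let $\X_g$ be its variable set: the variables of $\X \setminus (V_1 \cup \ldots \cup V_p)$ together with those $v_i$ that are variables rather than constants. Let $\X'$ be the sequence of arguments of $f$ after every occurrence of a variable from $V_i$ is replaced by $v_i$. The candidate essential implementation of $(g, \X_g)$ is $\C = \{C_1\}$ with $C_1 := (f, \X')$, no dummy variables ($\Y = \emptyset$), and head $C_1$.

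First I check Property~\textbf{(I\ref{itm:I1})}. Since $\Y$ is empty, the only extension of an assignment $s$ to $\X_g$ is $s$ itself. So I must show that $s$ satisfies $(g,\X_g)$ if and only if $s$ satisfies $(f,\X')$. Both sides evaluate $f$ on the same tuple: the value at a position of $f$ whose variable lies in $V_i$ is $s(v_i)$, or the constant $v_i$; every other position takes the value of its own variable. This is exactly how $g$ is defined, so the equivalence is just unfolding the definition. Pairwise disjointness of the $V_i$ makes the substitution well defined, because no position receives two replacements.

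Next, Property~\textbf{(I\ref{itm:I2})} is vacuous. After removing the head, no constraint applications remain, and every assignment to $\X_g$ trivially satisfies the empty conjunction. Finally, $C_1$ is an application of $f$ itself, so the implementation uses only constraints from $\{f\}$, and we conclude $f \imp g$.

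The only point needing care, and the one I expect to be the (mild) obstacle, is the case where some $v_i$ is one of the constants $0$ or $1$. Then $\X'$ contains constants, and the plan relies on the paper's convention that constraint applications may have constant arguments, as introduced by the substitution notation in \cref{sec:preliminaries}. I would make this explicit. If constants were not allowed in applications, one could not hope to force a dummy variable to a fixed value using only $f$ while keeping Property~\textbf{(I\ref{itm:I2})}. The observation should therefore be read under that convention, and with it the argument above goes through unchanged.
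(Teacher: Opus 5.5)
Your argument is correct and is exactly the trivial one the paper has in mind. The paper gives no written proof, only remarking that it follows almost directly from the definition: the single application $(f,\X')$ is its own head, Property~\textbf{(I\ref{itm:I1})} is just the definition of the substitution, and Property~\textbf{(I\ref{itm:I2})} is vacuous.

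Your closing paragraph misreads the conventions, though. Constraint applications in this paper are over variables only, so the observation concerns genuine variables $v_i$. Substitution by constants is handled separately in \cref{lem:implementation-with-constants}, which needs $\true$ or $\false$ in the implementing set for exactly the reason you give. Allowing constant arguments would also make the formulas built in \cref{lem:implementation-yields-reduction} invalid $\F_1$-formulas.
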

Next, we prove a property that involves the single-variable constraints $\true$ and $\false$.
\begin{lemma} \label{lem:implementation-with-constants}
    Let~$f$ be a constraint over variables~$\X := \{ x_1, \ldots, x_p \}$ and let $V = \{x_{i_1}, \ldots, x_{i_q}\} \subseteq \X$. Then,~$\{f, \false\} \imp f\left[ \stack{V}{0} \right]$ and~$\{f, \true\} \imp f\left[ \stack{V}{1} \right]$.
\end{lemma}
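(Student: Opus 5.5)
We construct the essential implementation explicitly, and by symmetry we only treat the $\false$ case; the $\true$ case follows by swapping the roles of $0$ and $1$. We introduce a single fresh dummy variable $y$. The implementation of the constraint application $f\left[ \stack{V}{0} \right]$ over the variables $\X \setminus V$ then consists of the head $C_1 := \bigl(f, \X\left[\stack{V}{y}\right]\bigr)$, i.e.\ $f$ with every variable of $V$ replaced by $y$, together with the single constraint application $C_2 := (\false, y)$. The case $V = \emptyset$ is trivial, since $f \imp f$, so we assume $V \neq \emptyset$.

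For Property~\textbf{(I\ref{itm:I1})}, fix an assignment $s$ to $\X \setminus V$. If $s$ satisfies $f\left[\stack{V}{0}\right]$, we extend $s$ by $y \mapsto 0$. This extension satisfies $C_2$, and it satisfies $C_1$ because $C_1$ evaluated with $y=0$ is exactly $f\left[\stack{V}{0}\right]$ evaluated on $s$. Conversely, any extension satisfying $C_2$ must set $y = 0$. If it also satisfies $C_1$, then $s$ satisfies $f\left[\stack{V}{0}\right]$.

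For Property~\textbf{(I\ref{itm:I2})}, every assignment to $\X\setminus V$ extends with $y \mapsto 0$ to an assignment satisfying the only non-head application $C_2$.

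The only step needing care is the choice of head. The head must be the $f$-application and not $(\false, y)$. Otherwise Property~\textbf{(I\ref{itm:I2})} would require $f$ with $V$ collapsed to $y$ to be satisfiable for every assignment to the remaining variables, which fails in general. One could also use one fresh variable per element of $V$, each constrained by $\false$; that construction works equally well, but a single $y$ keeps the argument minimal.
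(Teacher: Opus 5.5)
Your proof is correct and uses the same construction as the paper. The head is the $f$-application with the variables of $V$ replaced by fresh dummies, the non-head part consists of unary $\false$ (resp.\ $\true$) applications on those dummies, and (I1) and (I2) are checked just as you do. The only difference is cosmetic: the paper uses one dummy $v_{i_j}$ per variable of $V$ with $q$ separate $\false$ applications, while you merge them into a single dummy $y$. That is equally valid, since constraint applications may repeat variables, as the paper itself does with $f(x,\ldots,x)$.
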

\begin{proof}
    We only provide a proof for the first claim. The second claim can be proven analogously. To this end, we show that $f\left[ \stack{x_{i_1}}{v_{i_1}} \comma \cdots \comma \stack{x_{i_q}}{v_{i_q}} \right] \wedge \false(v_{i_1}) \wedge \ldots \wedge \false(v_{i_q})$, with dummy variables~$v_{i_1}, \ldots, v_{i_q}$ and the first constraint application as its head, is an essential implementation of~$f\left[ \stack{Y}{0} \right]$.

    Because~$v_{i_1}, \ldots, v_{i_q}$ must be set to~$0$ to satisfy the last~$q$ constraint applications, it is easy to see that an assignment to~$\X$ can be extended to a satisfying assignment for our implementation if and only if it satisfies~$f\left[ \stack{Y}{0} \right]$. Thus, Property~\textbf{(I\ref{itm:I1})} is satisfied.

    Furthermore, by setting $v_{i_1} = \ldots = v_{i_q} = 0$, every assignment to~$\X$ can be extended to a satisfying assignment for~$\false(v_{i_1}) \wedge \ldots \wedge \false(v_{i_q})$. Thus, Property~\textbf{(I\ref{itm:I2})} is also satisfied.
\end{proof}

The remainder of \cref{sec:hardness} is used to prove the hardness results from \cref{thm:dichotomy}. In \cref{ssec:affine,ssec:wpos-wneg,ssec:other}, we prove Cases~\ref{itm:dichotomy-3}--\ref{itm:dichotomy-5} respectively.

\subsection{Affine constraint sets} \label{ssec:affine}
In this section, we show the hardness of~$c$-essential detection for~\minF for constraint sets~$\F$ that are affine but neither~$0$-valid,~$1$-valid, nor bijunctive. (Note that these conditions also imply that~$\F$ is not IHS-B). We pick~$\{\xor_4, \xnor_4\}$ as canonical constraint set for this category of constraint sets and prove in \cref{sssec:affine-canonical-hardness} that~$c$-essential detection for~\mincsp{$\{\xor_4, \xnor_4, \eqq\}$} is hard under the Unique Games Conjecture~(UGC). This conjecture poses that a specific version of \textsc{Label Cover} problem is NP-hard~\cite{Khot02} and has been used in the last two decades to prove many hardness of approximation results. Then, in \cref{sssec:affine-implementation}, we show for any affine constraint set~$\F$ that is neither $0$-valid, nor $1$-valid, nor bijunctive, that~$\F \imp \{ \xor_4, \xnor_4, \eqq \}$.

\subsubsection{Hardness of c-essential detection for the canonical problem} \label{sssec:affine-canonical-hardness}
We give a reduction from~\mincsp{$\{\eqq, \neqq\}$}, for which the following known hardness result accompanied the introduction of the UGC.

\begin{lemma}[{\cite[Theorem 3]{Khot02}}] \label{lem:min-uncut-hardness}
    Let~$\frac12<t<1$ and sufficiently small~$\eps>0$ be constants. Unless P=NP or the UGC fails, there is no polynomial-time algorithm that distinguishes, for a given~\mincsp{$\{\eqq, \neqq\}$}-instance~$F$ with~$m$ constraints, between the following two cases:
    \begin{enumerate}[(i)]
        \item $\opt(F) \leq \eps \cdot m$;
        \item $\opt(F) \geq \eps^t \cdot m$.
    \end{enumerate}
\end{lemma}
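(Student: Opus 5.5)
The plan is to obtain this statement directly from Khot's hardness result for \textsc{Min UnCut} (equivalently, the gap version of \textsc{Max-Cut} with completeness $1-\eps$ and soundness $1-\eps^t$). I would not reprove it from scratch; the work is a faithful translation of that graph problem into the language of \mincsp{$\{\eqq, \neqq\}$}. The translation is as follows. Given a graph $G=(V,E)$ with $m=|E|$ edges, create one Boolean variable per vertex and, for each edge $uv$, one constraint application $\neqq(u,v)$. An assignment $s\colon V \to \{0,1\}$ is exactly a bipartition of $V$. Under it, the constraint of $uv$ is violated if and only if $uv$ is not cut. Hence $\opt(F)$ equals the minimum number of uncut edges over all cuts, namely $m$ minus the maximum cut size. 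The constraint $\eqq$ is not even needed for the hardness, but it is allowed in the target language.

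The key steps, in order, are the following.
\begin{enumerate}
\item Recall the form of Khot's theorem. Assuming the UGC, for every $\frac12<t<1$ and all sufficiently small $\eps>0$, it is NP-hard to distinguish graphs with $\mathrm{maxcut}\ge (1-\eps)m$ from those with $\mathrm{maxcut}\le (1-\eps^t)m$. This is the gap version with $\eps$ versus $\eps^t$.
\item Apply the translation above. The yes-case becomes $\opt(F)\le \eps m$ and the no-case becomes $\opt(F)\ge \eps^t m$.
\item Conclude that a polynomial-time distinguisher for cases (i) and (ii) of the statement would decide the NP-hard gap problem. Hence such a distinguisher cannot exist unless P$=$NP or the UGC fails.
\end{enumerate}
The translation is size-preserving, the constraint count $m$ equals the edge count, and it is clearly polynomial. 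No weights appear, so the statement holds for the unweighted problem.

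The main obstacle is purely bookkeeping: checking that the parameter regime in the cited theorem matches the one stated here ($\frac12<t<1$ and "sufficiently small" $\eps$). One must also check that Khot's statement is phrased for multigraphs, or for edge counts rather than fractions, consistently with $m$ being the number of constraints. If one instead had to reprove the underlying result, the hard part would be Khot's long-code reduction from \textsc{Unique Label Cover}. That reduction uses a noise test on the long code, with completeness $1-\eps$. Soundness $1-\eps^t$ would follow from a Bourgain-type noise-sensitivity theorem: functions far from juntas have noise sensitivity at least $\eps^{t}$ for $t>\frac12$, and from this one decodes labels. I would simply cite this as in \cite[Theorem 3]{Khot02} rather than reproduce it.
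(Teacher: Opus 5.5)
Your cite-and-translate strategy is the right one; the paper itself imports this lemma from Khot without proof. However, your step~1 misstates the cited theorem, so the translation built on it does not apply to the instances that theorem actually provides. Khot's Theorem~3 is not a \textsc{Max-Cut} statement. It concerns \textsc{2-Linear-Equations mod 2}, i.e.\ systems of equations $x_u \oplus x_v = b_{uv}$ in which both $b_{uv}=0$ and $b_{uv}=1$ occur. His long-code test uses folded tables, and folding turns each check into an equation of either parity, so the hard instances genuinely mix both types. Your map sending every edge $uv$ to $\neqq(u,v)$ therefore says nothing about the equations with $b_{uv}=0$. Likewise, your remark that $\eqq$ is not needed is not something the cited result gives you.

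The correct translation is even more direct, and it explains why the lemma is stated for $\{\eqq,\neqq\}$. Send an equation with $b_{uv}=0$ to $\eqq(x_u,x_v)$ and one with $b_{uv}=1$ to $\neqq(x_u,x_v)$, one constraint per equation. Every assignment then violates exactly the constraints of the equations it leaves unsatisfied. So ``some assignment satisfies a $1-\eps$ fraction'' becomes $\opt(F)\le\eps m$, and ``no assignment satisfies more than a $1-\eps^t$ fraction'' becomes $\opt(F)\ge\eps^t m$, with the same $t$ and $\eps$.

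If you want the stronger $\neqq$-only statement, you need an extra reduction that your plan omits. Replace each $\eqq(x,y)$ by $\neqq(x,z)\wedge\neqq(z,y)$ with a fresh variable $z$. This preserves $\opt$: if $x=y$, take $z\neq x$; if $x\neq y$, exactly one of the two constraints is violated whatever $z$ is. It can, however, double the number of constraints $m'$, so soundness only yields $\opt(F')\ge\eps^{t_0}m'/2$. You must then apply Khot's theorem with some $t_0\in(\frac12,t)$ and additionally require $\eps\le 2^{-1/(t-t_0)}$, so that $\eps^{t_0}/2\ge\eps^t$. Alternatively, invoke the later \textsc{Max-Cut} hardness of Khot, Kindler, Mossel and O'Donnell. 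Its soundness gap is of order $\sqrt{\eps}$, which exceeds $\eps^t$ for small $\eps$ when $t>\frac12$. Either way this is a genuine, if easy, parameter-adjusting step, not the identity-type bookkeeping you describe.
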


Below, we use this result as starting point for our hardness reduction.

\begin{lemma} \label{lem:affine-canonical-hardness}
    Unless P=NP or the UGC fails, there is no constant~$c\in \Rone$ for which a polynomial-time~$c$-essential detection algorithm exists for~\mincsp{$\{\xor_4, \xnor_4, \eqq\}$}.
\end{lemma}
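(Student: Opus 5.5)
The plan is a gap reduction from \cref{lem:min-uncut-hardness}. Suppose a polynomial-time $c$-essential detection algorithm $\mathcal{A}$ for \mincsp{$\{\xor_4, \xnor_4, \eqq\}$} exists for some constant $c$. By \cref{lem:unweighted-to-weighted} we may also use it on formulas with polynomially bounded integer weights. We fix $t\in(\frac12,1)$ and choose $\eps$ so small that $c\cdot(\eps m+1)+1<\eps^t m$ for all large $m$. This is possible because $\eps^{t}/\eps=\eps^{t-1}\to\infty$ as $\eps\to 0$.

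Given a \mincsp{$\{\eqq,\neqq\}$}-instance $F$ with $m$ constraints, I build a weighted formula $G$ with two global switch variables $p,q$. Each constraint $[=](x,y)$ of $F$ becomes $\xnor_4(x,y,p,q)$ and each $[\neq](x,y)$ becomes $\xor_4(x,y,p,q)$. When $p=q$, the new constraints are exactly those of $F$. When $p\neq q$, every constraint has its parity flipped.

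Next I add a distinguished constraint application $C$ that is satisfied only if $p\neq q$, for example $\xor_4(p,q,z,z)$ with a fresh variable $z$. I also add a block of $M:=\lceil c(\eps m+1)\rceil+1$ unit-weight constraints whose purpose is to penalise the branch $p\neq q$. This gives two natural kinds of solutions. Solutions that delete $C$ and keep $p=q$ cost $1+\opt(F)$. Solutions that keep $C$ cost at least $M$.

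The distinguisher runs $\mathcal{A}$ on $G$ with $k$ ranging over all values up to $\eps m+1$. It answers (i) if some run returns a set containing $C$. In case (i), we have $\opt(G)\le \eps m+1$. For the correct guess $k=\opt(G)$, every solution avoiding $C$ costs at least $M>c\cdot\opt(G)$, so $C$ is $c$-essential. By (G\ref{itm:G2}), $C$ is then output.

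In case (ii), I want every run to avoid $C$. If the branch $p\neq q$ really costs only about $M<\eps^t m\le \opt(F)$, then every optimal solution of $G$ keeps $C$. So whenever $\opt(G)\le k$, property (G\ref{itm:G1}) forbids outputting $C$. When $\opt(G)>k$, the output is unconstrained. To handle this I will also verify each returned set: from a partial solution containing $C$ one can check whether a solution of weight at most $k$ extending it exists, using polynomial-time Gaussian elimination on the parity constraints together with weights. If that check is too weak, I will instead compare the output across consecutive values of $k$.

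The main obstacle is the gadget for the branch $p\neq q$. After flipping, the instance becomes $\overline F$, the formula with every $=$ and $\neq$ exchanged, and $\opt(\overline F)$ is not controlled by $\opt(F)$. I therefore need the flipped branch to be cheap and predictable, costing exactly about $M$. The first attempt will attach a fresh per-constraint switch $q_e$ in place of $q$. Each $q_e$ is tied to $q$ by a heavy $\eqq$ when the switch is on, and these ties are arranged, via the $\eqq$ constraints in the canonical set, so that flipping $p$ relative to all the $q_e$ satisfies every edge constraint for free at a total cost of $M$. Only if that fails will I fall back on the duplicated-variable construction.
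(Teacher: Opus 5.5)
There is a genuine gap; in fact there are two.

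\textbf{The gadget is not resolved.} You correctly flag the gadget for the branch $p\neq q$ as the main obstacle, but the suggested fix does not work. A constraint application cannot be switched on conditionally. If you flip $p$ relative to all the $q_e$, every edge constraint is complemented, so you are back at $\overline F$. If you flip only some of the $q_e$, you pay for breaking each of their ties separately. The missing idea is to give the $=$-edges and the $\neq$-edges \emph{different} switch pairs that share one variable: $(x=y)\mapsto\xnor_4(z_1,z_2,x,y)$ and $(x\neq y)\mapsto\xor_4(z_2,z_3,x,y)$. With $z_1=z_2=z_3$ you recover $F$. Setting $z_3=1$ and all other variables to $0$ satisfies every edge constraint at once. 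So the ``switched'' branch costs exactly the weight of the single constraint $(z_2=z_3)$, and $\overline F$ never appears.

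\textbf{The roles of the two cases are reversed.} Even with a working gadget, your assignment of roles defeats \textbf{(G\ref{itm:G1})} and \textbf{(G\ref{itm:G2})}. You make $C$ essential in case (i), where $\opt(G)=1+\opt(F)$ is unknown, so you must try many values of $k$. But in case (ii) you have $\opt(G)\ge M>c(\eps m+1)\ge k$ for \emph{every} $k$ you try. Hence neither property constrains any run, and $\mathcal A$ may output $C$ freely. Your verification step cannot repair this. Deciding whether parity constraints become satisfiable after deleting weight at most $k$ is exactly the NP-hard problem at hand; Gaussian elimination only tests satisfiability. If you could make that check, you would not need $\mathcal A$ at all.

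\textbf{How the paper avoids this.} It reverses the roles, so that a single call with $k=\lfloor\eps m\rfloor+1$ suffices. It gives $(z_2=z_3)$ weight exactly $k$ and $(z_1=z_2)$ weight $ck+1$.
In case (i), $\opt\le\eps m<k$, so \textbf{(G\ref{itm:G1})} applies. It excludes $(z_2=z_3)$, because that constraint's weight exceeds $\opt$.
In case (ii), any solution avoiding $(z_2=z_3)$ of weight at most $ck$ must also avoid $(z_1=z_2)$. It therefore corresponds to a solution of $F$ of size at least $\eps^t m>ck$. So $\{(z_2=z_3)\}$ is the unique optimum, $\opt=k$ exactly, and $(z_2=z_3)$ is $c$-essential. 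Then \textbf{(G\ref{itm:G2})} forces it into the output.
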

\begin{proof}
    Let~$c \in \Rone$ be an arbitrary constant. We provide a reduction from the problem posed to be hard in \cref{lem:min-uncut-hardness}, where we pick~$t = \frac34$ and~$\eps > 0$ to be small enough to satisfy~$\eps^{t-1} = \eps^{-\frac14} > 2c$. To this end, let~$F$ be an~$\{\eqq, \neqq\}$-formula that we assume w.l.o.g. to have~$m \geq \eps\inv$ constraints. Now, we show how to transform~$F$ into an integer-weighted~$\{\xor_4, \xnor_4, \eqq\}$-formula~$F'$ and argue that the results of a~$c$-essential detection algorithm on~$F'$ can be used to solve the distinguishing task from \cref{lem:min-uncut-hardness}. By \cref{lem:unweighted-to-weighted}, this suffices to prove the lemma.

    We define the formula~$F'$ over the same set of variables as~$F$ with three variables~$z_1, z_2, z_3$ added to it. We start the construction of~$F'$ by adding the constraint application~$(z_1 = z_2)$ to it with weight~$c \cdot (\lfloor\eps \cdot m\rfloor+1)+1$ and adding the constraint application~$(z_2 = z_3)$ with weight~$\lfloor\eps \cdot m\rfloor+1$. Then, for every constraint application of an equality~$(x = y)$ in~$F$, we add the constraint application~$\xnor_4(z_1, z_2, x, y)$ to~$F'$ with weight~$1$. Finally, for every inequality constraint application~$(x \neq y)$ in~$F$, we add the constraint application~$\xor_4(z_2, z_3, x, y)$ to~$F'$ with weight~$1$.

    To argue the correctness of this reduction, we start by proving the following claim.

    \begin{claim} \label{clm:eq-neq-to-nor-xor-solution}
        A set $X$ of constraint applications from~$F$ is such that~$F-X$ is satisfiable if and only if the set~$X'$ of corresponding~$\xor_4$ and~$\xnor_4$ constraint applications from~$F'$ is such that~$F'-X'$ is satisfiable.
    \end{claim}
    \begin{claimproof}
        Consider an equality constraint application~$(x=y)$ from~$F$ and note that, if~$z_1 = z_2$, an assignment to~$x$ and~$y$ satisfies~$(x = y)$ if and only if it satisfies the corresponding constraint application~$\xnor_4(z_1, z_2, x, y)$. Similarly, consider an inequality constraint~$(x \neq y)$ from~$F$ and note that, if~$z_2 = z_3$, an assignment to~$x$ and~$y$ satisfies~$(x \neq y)$ if and only if it satisfies the corresponding constraint application~$\xor_4(z_2, z_3, x, y)$.

        Thus, for a set~$X$ of constraint applications from~$F$ and the set~$X'$ of corresponding~$\xor_4$ and~$\xnor_4$ constraint applications, an assignment that satisfies~$F - X$ can be extended to one that satisfies~$F'-X'$ by setting~$z_1$,~$z_2$ and~$z_3$ to~$0$. Likewise, the restriction of an assignment that satisfies~$F'- X'$ to the variables in~$F-X$ will satisfy~$F-X$, since it satisfies the constraint applications~$(z_1 = z_2)$ and~$(z_2 = z_3)$ from~$F' - X'$ in particular.
    \end{claimproof}
    We use this claim above to prove that the result of a~$c$-essential detection algorithm for \mincsp{$\{\xor_4, \xnor_4, \eqq\}$} on~$F'$ can be used to determine whether the original formula~$F$ satisfies~$\opt(F) \leq \eps \cdot m$ or~$\opt(F) > \eps^t \cdot m$. To this end, let~$S$ be the result of a~$c$-essential detection algorithm for~\mincsp{$\{\xor_4, \xnor_4, \eqq\}$} on input~$F'$ with~$k=\lfloor\eps \cdot m\rfloor+1$. Now, to recognize the first case, consider the following.

    \begin{claim}
        If~$\opt(F) \leq \eps \cdot m$, then~$S$ does not contain the constraint application~$(z_2 = z_3)$.
    \end{claim}
    \begin{claimproof}
        Let~$X$ be a set of constraint applications such that~$F-X$ is satisfiable and~$|X| \leq \eps \cdot m$. Then, by \cref{clm:eq-neq-to-nor-xor-solution}, there is a solution to~$F'$ of weight~$\leq \eps \cdot m$. Thus,~$\opt(F') \leq \eps \cdot m$ is upper bounded by~$k=\lfloor\eps \cdot m\rfloor+1$. As such,~$S$ must satisfy Property~\textbf{(G\ref{itm:G1})} and be a subset of an optimal solution to~$F'$. Since there is a solution to~$F'$ with weight~$\leq \eps \cdot m$ and the constraint application~$(z_2 = z_3)$ has the strictly larger weight~$\lfloor\eps \cdot m\rfloor+1$, this constraint application is not in any optimal solution and therefore not in~$S$.
    \end{claimproof}

    Next, to recognize the second case, consider the following claim.

    \begin{claim}
        If~$\opt(F) \geq \eps^t \cdot m$, then~$S$ does contain the constraint application~$(z_2 = z_3)$.
    \end{claim}
    \begin{claimproof}
        First, we argue that the singleton set containing just the constraint application~$(z_2 = z_3)$ of weight~$\lfloor\eps \cdot m\rfloor+1$ is a solution to~$F'$. To see this, one can verify that all other constraint applications in~$F'$ are satisfied by setting~$z_3$ to~$1$ and all other variables to~$0$. We continue by showing that all solutions that do not contain~$(z_2 = z_3)$ must have a weight that is more than~$c \cdot (\lfloor\eps \cdot m\rfloor+1)$.

        Let~$X'$ be an arbitrary solution to~$F'$ that does not contain the constraint application~$(z_2 = z_3)$ and suppose for contradiction that it has a weight of at most~$c \cdot (\lfloor\eps \cdot m\rfloor+1)$. Since the constraint application~$(z_1 = z_2)$ individually exceeds this weight by having a weight of~$c \cdot (\lfloor\eps \cdot m\rfloor+1)+1$, we conclude that it is not in~$X'$. As such,~$X'$ must consist entirely of weight-$1$ constraint applications.

        However, by \cref{clm:eq-neq-to-nor-xor-solution}, a solution to~$F'$ that only consists of such weight-$1$ constraint applications implies that~$F$ has a solution of equal size. By assumption,~$F$ only contains solutions of size at least~$\eps^t \cdot m$. As such, the weight of~$X'$ must also be at least~$\eps^t \cdot m$.

        Now, we can use our assumptions on~$\eps$ and~$m$ to derive that the weight of~$X'$ is at least~$\eps^t \cdot m = \eps^{t-1}\cdot \eps \cdot m > 2c\cdot\eps \cdot m \geq c \cdot (\eps \cdot m + 1) \geq c \cdot (\lfloor\eps \cdot m\rfloor + 1)$. The strict inequality in this derivation follows from our choice of~$\eps$ that satisfies~$\eps^{t-1} > 2c$. The inequality directly afterwards follows from the assumption that $m \geq \eps\inv$ so that~$\eps \cdot m \geq 1$. This contradicts our assumption that~$X'$ has weight at most~$c \cdot (\lfloor\eps \cdot m\rfloor+1)$.

        We conclude that all solutions to~$F'$ that do not contain~$(z_2 = z_3)$ are more than~$c$ times as heavy as the constraint application~$(z_2 = z_3)$ itself. This means that~$\{(z_2 = z_3)\}$ is an optimal solution of weight~$\lceil\eps \cdot m\rceil +1$ and even that~$(z_2 = z_3)$ is~$c$-essential. As~$k=\lceil\eps \cdot m\rceil +1=\opt(F)$,~$S$ must contain all~$c$-essential constraint applications to satisfy Property~\textbf{(G\ref{itm:G2})}. Hence,~$S$ contains~$(z_2 = z_3)$.
    \end{claimproof}
    Now, the two claims above indicate how a polynomial-time~$c$-essential detection algorithm for~$F'$ can be used to distinguish between~$\opt(F) \leq \eps \cdot m$ and~$\opt(F) > \eps^t \cdot m$ in polynomial time: after running such an algorithm, it suffices to check whether the output contains~$(z_2 = z_3)$. By \cref{lem:min-uncut-hardness}, this distinction is NP-hard to make. As~$c\in\Rone$ was chosen arbitrarily, this shows that~$c$-essential detection for~\wmincsp{$\{\xor_4, \xnor_4, \eqq\}$} is NP-hard for every~$c \in \Rone$. By \cref{lem:unweighted-to-weighted}, the same holds for the unweighted variant of the problem.
\end{proof}

As suggested by a reviewer, it is likely possible to write a similar proof to the one above, instead reducing from \mincsp{$\{\xor_3, \xnor_3\}$}. Then, using a known NP-hardness result for this problem in place of our current \cref{lem:min-uncut-hardness}, it would be possible to have \cref{lem:affine-canonical-hardness} depend solely on the assumption that P$\neq$NP and no longer on the UGC.

\subsubsection{Essential implementation of the canonical constraint set}
\label{sssec:affine-implementation}
The main result of this section is shown in \cref{lem:affine-implementation}. In it, we require several essential implementations involving~$\xor$ and~$\xnor$ constraints. We start by showing the following.

\begin{lemma} \label{lem:larger-xor-xnor}
    For all~$p \geq 2$,~$\xor_p \imp \xnor_{2(p-1)}$ and~$\xnor_p \imp \xnor_{2(p-1)}$.
\end{lemma}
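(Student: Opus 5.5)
We construct an explicit two-constraint implementation: one head application of the given constraint, and one non-head application of the same constraint on fresh dummy variables. For $g \in \{\xor_p, \xnor_p\}$ and target variables $x_1, \ldots, x_{2(p-1)}$, introduce a single dummy variable $y$ and take
\[
C_1 := g(x_1, \ldots, x_{p-1}, y), \qquad C_2 := g(x_p, \ldots, x_{2(p-1)}, y),
\]
with $C_1$ as the head. Both applications use the same constraint $g$, so the same parity constant $b \in \{0,1\}$ appears on both sides: $\xor_p$ gives $b=1$ and $\xnor_p$ gives $b=0$.

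For Property~\textbf{(I\ref{itm:I1})}, the two applications say $x_1 \oplus \cdots \oplus x_{p-1} \oplus y = b$ and $x_p \oplus \cdots \oplus x_{2(p-1)} \oplus y = b$. Adding these equations mod $2$ eliminates both $y$ and $b$, which yields $x_1 \oplus \cdots \oplus x_{2(p-1)} = 0$. This is exactly $\xnor_{2(p-1)}$, and it holds in both the $\xor$ and $\xnor$ case.

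Conversely, suppose an assignment to the $x_i$ satisfies $\xnor_{2(p-1)}$. Set $y := b \oplus x_p \oplus \cdots \oplus x_{2(p-1)}$. Then $C_2$ holds by construction, and $C_1$ holds because the first half has the same parity as the second half. So the assignment extends to a satisfying assignment of $C_1 \wedge C_2$.

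For Property~\textbf{(I\ref{itm:I2})}, we must show that every assignment to the $x_i$ extends to one satisfying the non-head $C_2$ alone. The same choice $y := b \oplus x_p \oplus \cdots \oplus x_{2(p-1)}$ works, since it satisfies $C_2$ regardless of the first half.

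The only delicate point is the degenerate case $p = 2$. There each side has a single target variable together with $y$, and the target is $\xnor_2 = \eqq$; the argument above goes through unchanged. We also have to check that the variable sequences are legitimate applications. The $x_i$ are distinct, and $y$ occurs once in each application, so there are no repeated-variable issues.

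I do not expect a real obstacle. The main care needed is to choose the head so that the non-head constraint can always be satisfied by the dummy variable. This works because $y$ appears only in $C_1$ and $C_2$ and, in each, with coefficient one.
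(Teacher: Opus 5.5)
Your proposal is correct and matches the paper's proof: the same two applications $g(x_1,\ldots,x_{p-1},y)\wedge g(x_p,\ldots,x_{2(p-1)},y)$ with the first as head. The paper obtains \textbf{(I1)} by summing the two applications mod $2$, and \textbf{(I2)} by choosing the dummy variable to satisfy the second application. You additionally spell out the converse direction of \textbf{(I1)} via the explicit choice $y := b \oplus x_p \oplus \cdots \oplus x_{2(p-1)}$, which the paper leaves implicit.
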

\begin{proof}
    We show that $\xor_p(x_1, \ldots, \linebreak[1] x_{p-1}, v) \wedge \xor_p(x_p, \ldots,x_{2(p-1)},v)$ and \\ $\xnor_p(x_1, \ldots, x_{p-1}, v) \wedge \xnor_p(x_p, \ldots,x_{2(p-1)},v)$ are both essential implementations of~$\xnor_{2(p-1)}(x_1, \ldots, x_{2(p-1)})$, with~$v$ as dummy variable and their first constraint application as head. Observe that for both pairs of constraint applications, summing them modulo two yields the~$\xnor$ constraint on $2(p-1)$ variables. As such, both implementations satisfy Property~\textbf{(I\ref{itm:I1})}. To see that they satisfy Property~\textbf{(I\ref{itm:I2})}, observe that any assignment to~$x_p, \ldots,x_{2(p-1)}$ can be extended to a satisfying assignment for the second constraint application by picking the correct value for~$v$.
\end{proof}

Next, it is easy to see that for dummy variable~$v$ we have~$\xor_{p-2}(x_1, \ldots,x_{p-2}) \equiv \xor_p(v,v,x_1, \ldots, x_{p-2})$ and~$\xnor_{p-2}(x_1, \ldots,x_{p-2}) \equiv \xnor_p(v,v,x_1, \ldots, x_{p-2})$ and that both actually provide essential implementations, leading to the following result.

\begin{observation} \label{obs:smaller-xor-xnor}
    For all~$p \geq 3$,~$\xor_p \imp \xor_{p-2}$ and~$\xnor_p \imp \xnor_{p-2}$.
\end{observation}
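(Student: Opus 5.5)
The plan is to verify directly that, for a fixed $p \geq 3$ and a fresh dummy variable $v$, the single constraint application $\xor_p(v,v,x_1,\ldots,x_{p-2})$ is an essential implementation of $\xor_{p-2}(x_1,\ldots,x_{p-2})$. The same holds with $\xnor$ in place of $\xor$. We take this one application as the head, so the collection $\C$ consists of $C_1$ alone.

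For Property~\textbf{(I\ref{itm:I1})}, we use that $v \oplus v = 0$ under every assignment. Hence for any assignment to $x_1,\ldots,x_{p-2}$ and any value of $v$, the parity $v \oplus v \oplus x_1 \oplus \cdots \oplus x_{p-2}$ equals $x_1 \oplus \cdots \oplus x_{p-2}$. So an assignment to $\X$ has an extension satisfying $\xor_p(v,v,x_1,\ldots,x_{p-2})$ exactly when it satisfies $\xor_{p-2}(x_1,\ldots,x_{p-2})$. In that case either value of $v$ works, and otherwise no value does. The $\xnor$ case is identical, with parity $0$ instead of $1$.

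For Property~\textbf{(I\ref{itm:I2})}, the collection $\C \setminus \{C_1\}$ is empty, so every assignment to $\X$ extends trivially by choosing $v$ arbitrarily. Alternatively, one can invoke \cref{obs:substituted-implementation} with the disjoint sets $V_1 = \{y_1,y_2\}$ replaced by $v$. Then \cref{obs:quantified-implementation} quantifies away $v$. Since $\exists v\, \xor_p(v,v,\ldots) \equiv \xor_{p-2}$, chaining the two by \cref{lem:transitivity} gives the claim.

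I expect no real obstacle. The only care needed is that $p \geq 3$ ensures $p-2 \geq 1$, so $\xor_{p-2}$ and $\xnor_{p-2}$ are genuine constraints in the paper's sense. In particular $\xor_{p-2}$ is satisfiable, as the preliminaries assume for all constraints.
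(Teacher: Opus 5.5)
Your proposal is correct and is essentially the paper's argument: the paper also takes the single application $\xor_p(v,v,x_1,\ldots,x_{p-2})$ (resp.\ $\xnor_p(v,v,x_1,\ldots,x_{p-2})$) with dummy variable $v$ as its own head, observing that it is equivalent to $\xor_{p-2}$ (resp.\ $\xnor_{p-2}$) and hence an essential implementation. Your checks of Properties~\textbf{(I\ref{itm:I1})} and~\textbf{(I\ref{itm:I2})}, via $v \oplus v = 0$ and the empty non-head part, just spell out what the paper leaves as ``easy to see.''
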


Below are four more examples of simple essential implementations that will be used in the proof of \cref{lem:affine-implementation}. We give the construction of each of these implementations and leave the very straightforward task of verifying that they satisfy Properties~\textbf{(I\ref{itm:I1})} and~\textbf{(I\ref{itm:I2})} to the reader.

\begin{lemma} \label{lem:xor2-xnor6-imps-xor4}
    $\{\xor_2, \xnor_6\} \imp \xor_4$.
\end{lemma}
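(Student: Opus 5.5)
We give an explicit gadget with two dummy variables and check Properties~\textbf{(I\ref{itm:I1})} and~\textbf{(I\ref{itm:I2})} directly. The idea is to build $\xor_4(x_1,x_2,x_3,x_4)$ out of a single $\xnor_6$ application by feeding it two dummies that are forced to differ. Since $\xnor_6$ says the six inputs sum to $0$ modulo~$2$, and the two dummies then contribute exactly~$1$, the four original variables must sum to~$1$.

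Concretely, I take dummy variables $v_1, v_2$ and the collection
\[
\xnor_6(x_1,x_2,x_3,x_4,v_1,v_2) \wedge \xor_2(v_1,v_2),
\]
with the $\xnor_6$ application as the head.

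For Property~\textbf{(I\ref{itm:I1})} there are two directions.
\begin{itemize}
\item Any extension satisfying both applications has $v_1 \oplus v_2 = 1$. The head then forces $x_1\oplus x_2\oplus x_3\oplus x_4 = 1$, which is exactly $\xor_4(x_1,x_2,x_3,x_4)$.
\item Conversely, if $x_1\oplus\cdots\oplus x_4=1$, setting $v_1=0, v_2=1$ satisfies both applications.
\end{itemize}

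For Property~\textbf{(I\ref{itm:I2})}, every assignment to $x_1,\ldots,x_4$ extends (again by $v_1=0, v_2=1$) to one satisfying the non-head application $\xor_2(v_1,v_2)$. This holds because that application does not involve the $x_i$ at all.

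The only point needing care is the choice of head. It must be the $\xnor_6$ application, since that is the only application that sees the $x_i$. Choosing $\xor_2$ as the head would violate~\textbf{(I\ref{itm:I2})} on assignments with even parity. With that choice made, I expect no real obstacle; the verification is routine.
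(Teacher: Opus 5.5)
Your gadget and head choice are exactly the paper's (up to argument order inside the symmetric $\xnor_6$), and your verification of \textbf{(I\ref{itm:I1})} and \textbf{(I\ref{itm:I2})} is correct. One side remark is wrong, though it does not affect the proof: taking $\xor_2(v_1,v_2)$ as the head would also satisfy \textbf{(I\ref{itm:I2})}. When the $x_i$ have even parity, setting $v_1=v_2=0$ satisfies the remaining $\xnor_6$ application, so the choice of head is not actually forced.
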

\begin{proof}
    $\xnor_6(v_1,v_2,x_1,x_2,x_3,x_4) \wedge \xor_2(v_1,v_2)$ with dummy variables~$v_1$ and~$v_2$ and the first constraint application as head is an essential implementation of~$\xor_4(x_1,x_2,x_3,x_4)$.
\end{proof}
\begin{lemma} \label{lem:odd-xnor-imps-false}
    If~$p$ is an odd integer, then~$\xnor_p \imp \false$.
\end{lemma}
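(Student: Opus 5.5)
The plan is to implement $\false(x)$ by a single application of $\xnor_p$ with all $p$ argument positions filled by the variable $x$. We then check that this one-constraint collection is an essential implementation in the sense of \cref{def:essential-implementation}. Equivalently, we can invoke \cref{obs:substituted-implementation} with $V_1=\{x_1,\ldots,x_p\}$ and $v_1=x$, which gives $\xnor_p \imp \xnor_p\left[\stack{\{x_1,\ldots,x_p\}}{x}\right]$. It then remains only to show that this substituted constraint is equivalent to $\false(x)$.

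For the equivalence, the substituted constraint is $x\oplus x\oplus\cdots\oplus x = 0$ with $p$ copies of $x$. Since $p$ is odd, the left-hand side reduces to $x$, so the constraint holds if and only if $x=0$, which is exactly $\false(x)$.

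Checking the definition directly is also easy. The collection consists only of the head $C_1=\xnor_p(x,\ldots,x)$ and uses no dummy variables. Property \textbf{(I\ref{itm:I1})} is the equivalence just shown. Property \textbf{(I\ref{itm:I2})} holds vacuously, because once the head is excluded there are no constraint applications left to satisfy.

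There is no real obstacle here. The only point to note is that the oddness of $p$ is precisely what makes the repeated XOR collapse to $x$ instead of $0$; for even $p$ the substituted constraint would be trivially true.
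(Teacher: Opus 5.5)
Your proposal is correct and matches the paper's proof, which uses the single application $\xnor_p(x,\ldots,x)$ as the essential implementation of $\false(x)$. Your checks are right: oddness of $p$ gives \textbf{(I\ref{itm:I1})}, and \textbf{(I\ref{itm:I2})} holds vacuously because only the head remains.
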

\begin{proof}
    $\xnor_p(x, \ldots,x)$ is an essential implementation of~$\false(x)$.
\end{proof}
\begin{lemma} \label{lem:xnor6-true-imps-xor5}
    $\{ \xnor_6, \true\} \imp \xor_5$.
\end{lemma}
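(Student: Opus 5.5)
Following the pattern of \cref{lem:xor2-xnor6-imps-xor4,lem:odd-xnor-imps-false}, I will write down a single explicit essential implementation of $\xor_5(x_1,\ldots,x_5)$ from $\xnor_6$ and $\true$, using one dummy variable $v$. The candidate is
\[
\xnor_6(v,x_1,x_2,x_3,x_4,x_5) \wedge \true(v),
\]
with the $\xnor_6$ application as head. The intuition is that $\true(v)$ pins $v$ to $1$. Once $v=1$, the parity condition $v \oplus x_1 \oplus \cdots \oplus x_5 = 0$ becomes exactly $x_1 \oplus \cdots \oplus x_5 = 1$, which is $\xor_5$.

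For Property~\textbf{(I\ref{itm:I1})}, take any assignment to $x_1,\ldots,x_5$. Any extension satisfying both applications must set $v=1$, so it satisfies $\xnor_6$ if and only if the $x_i$ have odd parity. Conversely, if the $x_i$ satisfy $\xor_5$, then setting $v=1$ yields an extension satisfying both applications. Hence an assignment to $\X$ extends to a satisfying assignment of the implementation if and only if it satisfies $\xor_5$.

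For Property~\textbf{(I\ref{itm:I2})}, the only non-head application is $\true(v)$. The variable $v$ is a dummy, so every assignment to $\X$ extends by setting $v=1$, which satisfies $\true(v)$. Note that the choice of head matters here: had we made $\true(v)$ the head instead, we would need every assignment to $\X$ to extend to a satisfying assignment of $\xnor_6$ alone. That also holds, by choosing $v$ as the parity of the $x_i$, but the choice above is the cleaner one.

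There is no real obstacle. The only thing to check is that $\true$ is a constraint the definition of $\F\imp f$ permits, and it is, since $\true$ is an element of the set $\{\xnor_6,\true\}$.
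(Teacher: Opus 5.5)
Your proposal is correct and uses exactly the paper's construction: $\xnor_6(v,x_1,\ldots,x_5) \wedge \true(v)$ with dummy variable $v$ and the $\xnor_6$ application as head. Your verification of Properties~\textbf{(I\ref{itm:I1})} and~\textbf{(I\ref{itm:I2})} fills in the routine check that the paper leaves to the reader.
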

\begin{proof}
    $\xnor_6(v,x_1,x_2,x_3,x_4,x_5) \wedge \true(v)$ with dummy variable~$v$ and the first constraint application as head is an essential implementation of~$\xor_5(x_1,x_2,x_3,x_4,x_5)$.
\end{proof}
\begin{lemma} \label{lem:xor5-false-imps-xor4}
    $\{\xor_5, \false\} \imp \xor_4$
\end{lemma}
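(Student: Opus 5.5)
The natural construction is to pin one argument of $\xor_5$ to $0$ with a $\false$ constraint, exactly as in the proofs of \cref{lem:xor2-xnor6-imps-xor4} and \cref{lem:xnor6-true-imps-xor5}. I propose the collection
\[
\xor_5(v,x_1,x_2,x_3,x_4) \wedge \false(v),
\]
with dummy variable $v$ and the first constraint application as head, as an essential implementation of $\xor_4(x_1,x_2,x_3,x_4)$. Equivalently, one can appeal to \cref{lem:implementation-with-constants}: $\{\xor_5,\false\} \imp \xor_5\left[\stack{x_5}{0}\right]$, and $\xor_5\left[\stack{x_5}{0}\right] \equiv \xor_4$ up to renaming of variables. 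Either route gives the statement; I would present the explicit construction and verify the two properties directly.

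For Property~\textbf{(I\ref{itm:I1})}, note that any extension satisfying $\false(v)$ must set $v=0$. Then $\xor_5(0,x_1,\ldots,x_4)$ holds if and only if $x_1\oplus x_2\oplus x_3\oplus x_4=1$, which is exactly $\xor_4(x_1,\ldots,x_4)$. Hence an assignment to $x_1,\ldots,x_4$ extends to one satisfying both constraint applications if and only if it satisfies $\xor_4$.

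For Property~\textbf{(I\ref{itm:I2})}, the only non-head constraint application is $\false(v)$. Every assignment to $x_1,\ldots,x_4$ extends by setting $v=0$, which satisfies it.

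There is no genuine obstacle here. The one point needing care is the choice of head: it must be the $\xor_5$ application. Were $\false(v)$ the head, Property~\textbf{(I\ref{itm:I2})} would fail, since the assignments that violate $\xor_4$ would force $v=1$ in order to satisfy $\xor_5$.
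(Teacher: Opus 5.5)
Your construction and its verification match the paper's proof, which uses exactly $\xor_5(v,x_1,x_2,x_3,x_4)\wedge\false(v)$ with the $\xor_5$ application as head. One correction to your closing remark: taking $\false(v)$ as the head would also give an essential implementation, because Property~\textbf{(I\ref{itm:I2})} then only asks that every assignment to $x_1,\ldots,x_4$ extend to one satisfying $\xor_5(v,x_1,\ldots,x_4)$, and setting $v = 1\oplus x_1\oplus x_2\oplus x_3\oplus x_4$ achieves this (forcing $v=1$ only violates the head, which (I\ref{itm:I2}) permits).
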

\begin{proof}
    $\xor_5(v, x_1, x_2,x_3,x_4) \wedge \false(v)$ with dummy variable~$v$ and the first constraint application as head is an essential implementation of~$\xor_4(x_1,x_2,x_3,x_4)$.
\end{proof}
Now, we can use the above results to show our desired implementation
\begin{lemma} \label{lem:affine-implementation}
    Let~$\F$ be a constraint set that is neither~$0$-valid, nor~$1$-valid, nor bijunctive. If~$\F$ is affine, then~$\F \imp\{\xor_4, \xnor_4, \eqq\}$.
\end{lemma}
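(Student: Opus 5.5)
The plan is to extract from $\F$ three ingredients by single constraint applications combined with \cref{obs:quantified-implementation,obs:substituted-implementation}: a long parity constraint, constants or disequality, and equality. We then assemble $\{\xor_4,\xnor_4,\eqq\}$ from these using \cref{lem:larger-xor-xnor,obs:smaller-xor-xnor,lem:xor2-xnor6-imps-xor4,lem:xnor6-true-imps-xor5,lem:xor5-false-imps-xor4}. All steps are chained with \cref{lem:transitivity}.

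\emph{Step 1 (a long parity constraint).} Since $\F$ is not bijunctive, some $f\in\F$ is affine but not bijunctive. Write the solution set of $f$ in reduced echelon form: bound variables $x_B = A x_N + b$ over free variables $x_N$, where the free variables range over all of $\{0,1\}^{N}$. If every row of $A$ had at most one $1$, every defining equation would involve at most two variables, and $f$ would be bijunctive. Hence some row expresses $x_j = x_{n_1}\oplus\cdots\oplus x_{n_r}\oplus b_j$ with $r\ge 2$. We existentially quantify all other bound variables and all free variables not among $x_{n_1},\dots,x_{n_r}$ (\cref{obs:quantified-implementation}). Because the free variables are unconstrained, the projection is exactly this single equation. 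This yields $\F\imp\xor_{r+1}$ or $\F\imp\xnor_{r+1}$ with $r+1\ge 3$. By \cref{lem:larger-xor-xnor} we then get $\xnor_{2r}$ with $2r\ge 4$ even. Applying \cref{obs:smaller-xor-xnor} repeatedly gives $\xnor_4$ and $\xnor_2=\eqq$. Applying \cref{lem:larger-xor-xnor} once more to $\xnor_4$ gives $\xnor_6$.

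\emph{Step 2 (constants or $\neqq$).} Take $f_0\in\F$ that is not $0$-valid and a satisfying assignment $s$ of it; note $s$ is not all-zero. Substitute $x$ for the variables with $s=1$ and $y$ for the variables with $s=0$ (\cref{obs:substituted-implementation}); if $s$ is all-one, just substitute $x$ for everything, which directly gives $\true(x)$. Otherwise the resulting binary constraint $g(x,y)$ is satisfied by $(1,0)$ and not by $(0,0)$, and it is affine. If $g(1,1)=1$, quantifying $y$ gives $\true(x)$. If $g(1,1)=0$, then closure under ternary $\oplus$ (\cref{lem:affine-characterization}) leaves the solution set $\{(1,0)\}$ or $\{(1,0),(0,1)\}$. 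In the first case quantifying $y$ gives $\true$; in the second case we obtain $\neqq$. Symmetrically, a non-$1$-valid constraint $f_1\in\F$ yields $\false$ or $\neqq$.

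\emph{Step 3 (assembly).} If $\F\imp\neqq$, then \cref{lem:xor2-xnor6-imps-xor4} with $\xnor_6$ gives $\xor_4$. Otherwise $\F\imp\{\true,\false\}$. Then \cref{lem:xnor6-true-imps-xor5} gives $\xor_5$, and \cref{lem:xor5-false-imps-xor4} gives $\xor_4$. Together with $\xnor_4$ and $\eqq$ from Step 1, this gives $\F\imp\{\xor_4,\xnor_4,\eqq\}$.

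The main obstacle is Step 1. One has to check carefully that projecting away the unused free and bound variables leaves exactly one parity equation, which relies on the free variables being independent. One also has to check that each quantification and substitution is a genuine essential implementation: it is a single application that serves as its own head, so \textbf{(I\ref{itm:I2})} is vacuous. The remaining steps are bookkeeping.
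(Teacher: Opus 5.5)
Your proof is correct. Its final assembly (Step 3) matches the paper's exactly; where you differ is in how you extract the ingredients.

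\textbf{The paper's route.} It relies on a structural fact from Khanna et al.: every affine $f$ is the conjunction of its projections $f_S$ onto minimally dependent sets $S$, and each $f_S$ is a single equation $\xor_{|S|}$ or $\xnor_{|S|}$. All three hypotheses are then read off from these equations:
\begin{itemize}
\item non-bijunctivity gives an equation of arity at least $3$;
\item non-$0$-validity gives some $\xor_q$;
\item non-$1$-validity gives an even-arity $\xor$ or an odd-arity $\xnor$.
\end{itemize}
A case split on the parity of arities, using \cref{obs:smaller-xor-xnor,lem:odd-xnor-imps-false}, then yields either $\neqq$ or $\{\true,\false\}$.

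\textbf{Your route.} You prove the needed projection fact directly from the reduced echelon form. The independence of the free variables is exactly what makes the projection a single equation. For the constants you identify variables into a binary constraint, Schaefer-style, as in \cref{lem:not-0-valid-1-valid}. This avoids the parity bookkeeping and \cref{lem:odd-xnor-imps-false} entirely, and mostly does not use affinity. The paper's route is shorter given the citation; yours is self-contained.

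\textbf{One small point in Step 2.} You invoke closure under ternary $\oplus$ in the subcase $g(1,1)=0$, where it is not needed: the two listed solution sets are the only options regardless. You do not invoke it in the subcase $g(1,1)=1$, where it is needed. There, quantifying $y$ yields $\true(x)$ only because $g(0,1)=0$. That follows from $(1,0)\oplus(1,1)\oplus(0,1)=(0,0)$ and $g(0,0)=0$. Alternatively, simply take $g(x,x)$, which is equivalent to $\true(x)$ in that subcase without any appeal to affinity.
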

\begin{proof}
    We start by using an insight from the proof of Lemma 4.18 in the previous work of Khanna et al.~\cite{KhannaSTW00}. In it, they define, for a constraint~$f$ over variable set~$\X$, a subset~$S \subseteq \X$ as a \emph{dependent set} of variables if there is at least one assignment to~$S$ that does not extend to a satisfying assignment for~$f$. A dependent set~$S$ is called a \emph{minimally dependent set} of~$f$ if no strict subset of it is also a dependent set of~$f$. For a set~$S \subseteq \X$, let~$f_S$ over variable set~$S$ be defined as~$f_S :=\exists \X \setminus S \mathrm{~s.t.~} f(\X)$. By \cref{obs:quantified-implementation},~$f \imp f_S$ for any~$S \subseteq \X$.

    Now, the proof by Khanna et al.\ shows that every affine function~$f$ can be written as the conjunction of~$f_S$ over all minimally dependent sets~$S$ of~$f$. Moreover, they show that for every minimally dependent set~$S$, the constraint~$f_S$ can be described by a single linear Boolean equation. That is,~$f_S$ is equivalent to either~$\xor_{|S|}$ or~$\xnor_{|S|}$. This shows that~$f$ can be written as a conjunction of linear Boolean equations that are each even essentially implemented by~$f$.

    To start our chain of implementations, we start by observing that~$\F$ must contain some~$f$ with at least one minimally dependent set of size~$3$, as~$\F$ would be bijunctive otherwise. Thus, there is some~$p \geq 3$ such that~$\F \imp \xor_p$ or~$\F \imp \xnor_p$. In either case, we can apply \cref{lem:larger-xor-xnor} to see that~$\F \imp \xnor_{2(p-1)}$. Since~$2(p-1)$ is an even integer of at least~$4$, we can repeatedly apply \cref{obs:smaller-xor-xnor} to derive that~$\F \imp \xnor_4$ and~$\F \imp \eqq$ (since~$\eqq$ is~$\xnor_2$). It remains to show that~$\F \imp \xor_4$.

    We proceed by combining the fact that~$\F \imp \xnor_4$ with \cref{lem:larger-xor-xnor} to conclude that~$\F \imp \xnor_6$. Next, we note that~$\F$ cannot be expressed as a collection of conjunctions of~$\xnor$ constraints only, as this would make~$\F$~$0$-valid, which we assumed it not to be. Thus,~$\F$ must contain a constraint $f$ with a minimally dependent set~$S$ such that~$f_S$ is equivalent to~$\xor_q$ for some integer~$q \geq 1$. In turn,~$\F \imp \xor_q$. For the remainder of this proof, we distinguish two cases.

    For the first and simplest case, we assume~$q$ to be even. Then, repeatedly applying \cref{obs:smaller-xor-xnor} yields that~$\xor_q \imp \xor_2$. It follows from \cref{lem:xor2-xnor6-imps-xor4} that~$\F \imp \xor_4$, thereby proving the required statement for this case.

    For the second case, we assume~$q$ to be odd. Then, repeatedly applying \cref{obs:smaller-xor-xnor} yields that~$\xor_q \imp \true$ (since $\true$ is $\xor_1$). Next, we make the observation that~$\F$ cannot be expressed as a collection of conjunctions of only XOR-constraints over odd numbers of variables and XNOR-constraints over even numbers of variables. Such conjunctions are~$1$-valid, which we assumed~$\F$ not to be. Thus,~$\F$ must implement at least one XOR-constraint over an even number of variables or at least one XNOR-constraint over an odd number of variables. If the former is true, we can resolve the proof using the case presented in the previous paragraph. Thus, we assume that~$f \imp \xnor_r$ for some odd integer~$r$.

    Now, it follows from \cref{lem:odd-xnor-imps-false} that~$\xnor_r \imp \false$, which establishes that~$f \imp \{\true, \false\}$ in this case. We use this to apply \cref{lem:xnor6-true-imps-xor5} and see that~$f \imp\xor_5$ after which it follows from \cref{lem:xor5-false-imps-xor4} that~$f \imp \xor_4$.

    Hence, we conclude that in both cases~$\F \imp \{\xor_4, \xnor_4, \eqq\}$.
\end{proof}

Now, \cref{lem:affine-canonical-hardness,lem:affine-implementation} combine to prove Case~\ref{itm:dichotomy-3} of \cref{thm:dichotomy}.

\subsection{Weakly positive and weakly negative constraint sets}
\label{ssec:wpos-wneg}
In this section, we prove Case~\ref{itm:dichotomy-4} of \cref{thm:dichotomy}. We pick~$\{\orr_{3,1}, \orr_{2,1}, \true, \linebreak[1] \false\}$ as canonical constraint set for weakly positive constraint sets and we pick~$\{\orr_{3,2}, \orr_{2,1}, \true, \linebreak[1] \false\}$ as canonical constraint set for weakly negative constraint sets. First, we show for each canonical constraint set~$\Fcan$ that~\textsc{$c$-essential Detection for~\mincsp{$\Fcan$}} is NP-hard in \cref{sssec:wpos-wneg-canonical-hardness}. Then, in \cref{sssec:wpos-wneg-implementation}, we show for any weakly positive (resp. weakly negative) constraint set~$\F$ that is neither $0$-valid, $1$-valid, bijunctive, nor IHS-B, that~$\F \imp \Fcan$ for the corresponding canonical constraint set~$\Fcan$. (Note that these conditions imply~$\F$ is not affine either.)

\subsubsection{Hardness of c-essential detection for the canonical problems}
\label{sssec:wpos-wneg-canonical-hardness}
We give a reduction from the \textsc{Set Cover} optimization problem. In this problem, a universe~$U$ is given together with a collection~$\mathcal{S}$ of subsets of~$U$, and the goal is to determine the size of a smallest subset~$Y \subseteq \mathcal{S}$ such that every $u \in U$ is contained in at least one set in~$Y$. We call such a set~$Y$ (of any size) a \emph{set cover} of~$(U, \mathcal{S})$, and we denote the size of a smallest set cover of~$(U, \mathcal{S})$ by~$\opt(U, \mathcal{S})$. The following hardness result is known.

\begin{lemma}[{\cite[Theorem 22.31]{AroraB09}}]
\label{lem:hitting-set-hardness}
    For every sufficiently small constant~$\eps > 0$, the following holds. Unless P=NP, there is no polynomial-time algorithm that takes an integer~$t \geq 1$ and a \textsc{Set Cover} instance~$(U, \mathcal{S})$ as input and distinguishes between the following two cases:
    \begin{enumerate}[(i)]
        \item $\opt(U, \mathcal{S}) \leq t$;
        \item $\opt(U, \mathcal{S}) > \frac{t}{4\sqrt\eps}$.
    \end{enumerate}
\end{lemma}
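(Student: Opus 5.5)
This lemma is the standard gap-hardness result for \textsc{Set Cover} from \cite{AroraB09}; the plan below reconstructs the Lund--Yannakakis style argument from memory. I have not checked that its constants reproduce the exact threshold $\frac{t}{4\sqrt\eps}$, and the constant $\frac{1}{10}$ in the decoding step is a placeholder.

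\emph{Step 1 (Label Cover gap).} I would start from a Label Cover instance with perfect completeness and soundness $\eps$. Such instances follow from the PCP theorem together with Raz's parallel repetition, repeated a constant number of times depending on $\eps$. The instance is a bipartite regular graph on $V \cup W$ with label sets $[L]$ and $[R]$ and projection constraints $\pi_e \colon [L] \to [R]$. It is NP-hard to tell whether some labelling satisfies all edges or whether every labelling satisfies at most an $\eps$-fraction.

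\emph{Step 2 (the reduction).} I would use a partition system $(B; P_1,\dots,P_R)$. Each $P_j$ is a pair $(A_j, B\setminus A_j)$, and the system has the property that no collection of $\ell$ sets avoiding complementary pairs covers $B$, where $\ell \approx \frac{1}{4\sqrt\eps}$. The universe is $U = E \times B$.
\begin{itemize}
    \item For $w \in W$ and $b \in [R]$, there is a set $S_{w,b} = \{(e,x) : e \ni w,\ x \in A_b\}$.
    \item For $v \in V$ and $a \in [L]$, there is a set $S_{v,a} = \{(e,x) : e=(v,w),\ x \in B\setminus A_{\pi_e(a)}\}$.
\end{itemize}
Set $t = |V|+|W|$. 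For completeness, a perfect labelling gives one set per vertex. Covering each edge's copy of $B$ then uses a complementary pair, so this is a set cover of size $t$.

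\emph{Step 3 (soundness, the main obstacle).} Suppose there is a cover of size at most $\frac{t}{4\sqrt\eps}$. Each vertex $u$ gets a candidate label list $\Lambda_u$ consisting of the indices of chosen sets at $u$.
\begin{itemize}
    \item For every edge $e=(v,w)$, the chosen sets restricted to $\{e\}\times B$ cover $B$.
    \item If no $a \in \Lambda_v$ and $b \in \Lambda_w$ satisfy $\pi_e(a)=b$, these sets avoid complementary pairs.
    \item By the partition-system property, this forces $|\Lambda_v|+|\Lambda_w| > \ell$.
\end{itemize}
By averaging and regularity, at least half of the edges have $|\Lambda_v|+|\Lambda_w| \le 2\cdot\frac{1}{4\sqrt\eps}$. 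On such edges, if the bound is below $\ell$, there is a consistent candidate pair.

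Now decode by picking a uniformly random label from each list. A good edge is satisfied with probability at least $1/(|\Lambda_v||\Lambda_w|)$, which is $\Omega(\eps)$. The expected satisfied fraction is then more than $\eps$, contradicting soundness.

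The delicate part is tuning the parameters so that all of the following hold simultaneously:
\begin{itemize}
    \item the partition-system threshold $\ell$ exceeds twice the average list size;
    \item the decoding probability beats the Label Cover soundness;
    \item $|B|$ stays polynomial, since $R$ is a constant and $|B| = 2^{O(\ell)}R^{O(1)}$ suffices.
\end{itemize}
If that tuning works, any polynomial-time distinguisher for the two cases of the lemma would decide the Label Cover gap problem, so none exists unless P=NP.
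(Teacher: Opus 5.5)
Your route is the right one. The paper gives no proof of its own and only cites Arora--Barak, where this is proved by exactly this kind of Lund--Yannakakis reduction from Label Cover through a partition system on $E\times B$. As written, however, your soundness step does not go through, for three reasons.

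\emph{Averaging.} The averaging step needs every vertex to have the same degree, so that $|V|=|W|=t/2$; make this explicit, e.g.\ by duplicating vertices. Under that assumption, a cover of size at most $\frac{t}{4\sqrt\eps}$ makes the edge-average of $|\Lambda_v|+|\Lambda_w|$ equal to $\frac{2}{t}\sum_u|\Lambda_u|\le\frac{1}{2\sqrt\eps}$. Markov then only guarantees that at least half of the edges have $|\Lambda_v|+|\Lambda_w|\le\frac{1}{\sqrt\eps}$, not $\le 2\cdot\frac{1}{4\sqrt\eps}$.

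\emph{Threshold.} Consequently your $\ell\approx\frac{1}{4\sqrt\eps}$ is too small by a factor of four. On an edge without a consistent pair, the chosen sets form a complementary-pair-free cover of $B$ with at most $|\Lambda_v|+|\Lambda_w|$ members. The partition system forbids this only when $|\Lambda_v|+|\Lambda_w|\le\ell$, so with your $\ell$ nothing can be concluded on the good edges.

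\emph{Decoding constant.} Saying the success probability is ``$\Omega(\eps)$'' does not contradict soundness $\eps$; the constant has to be tracked.

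All of this is repaired by taking $\ell=\lfloor 1/\sqrt\eps\rfloor$. This is still a constant, so $|B|=2^{O(\ell)}\mathrm{poly}(R)$ stays constant and the reduction remains polynomial. On each of the at least $|E|/2$ good edges the lists now contain a consistent pair, and in particular both are nonempty. AM--GM gives $|\Lambda_v|\cdot|\Lambda_w|\le(\ell/2)^2\le\frac{1}{4\eps}$, so random decoding satisfies each good edge with probability at least $4\eps$. The expected satisfied fraction is therefore at least $2\eps>\eps$, contradicting Label Cover soundness $\eps$. Alternatively, run Label Cover with soundness below $\eps$, since that parameter is free. With these choices the gap $\frac{t}{4\sqrt\eps}$ of the lemma comes out exactly. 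The rest of your plan (completeness with $t=|V|+|W|$, the size of the instance, and the final contrapositive) is fine.
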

 
We use this result to prove the statement below.

\begin{lemma} \label{lem:wp-canonical-hardness}
    Unless P=NP, there is no constant~$c \in \Rone$ for which a polynomial-time $c$-essential detection algorithm exists for \mincsp{$\{\orr_{3,1}, \orr_{2,1}, \true, \false\}$} or for \mincsp{$\{\orr_{3,2}, \orr_{2,1}, \true, \false\}$}.
\end{lemma}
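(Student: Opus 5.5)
Following the template of \cref{lem:affine-canonical-hardness}, I would reduce from the gap version of \textsc{Set Cover} in \cref{lem:hitting-set-hardness}. I prove the weakly positive case, $\{\orr_{3,1}, \orr_{2,1}, \true, \false\}$, and obtain the weakly negative case by complementing every variable. Complementation turns $\orr_{3,1}$ into $\orr_{3,2}$, maps $\orr_{2,1}$ to itself up to swapping arguments, and swaps $\true$ with $\false$; it preserves the solution space exactly. Fix $c$ and choose $\eps$ small enough that $\frac{1}{4\sqrt\eps} > 2c$. By \cref{lem:unweighted-to-weighted} it suffices to build a weighted formula with polynomial integer weights.

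Given $(U,\mathcal{S})$ and $t$, I would introduce one variable $x_S$ per set $S\in\mathcal{S}$ (meaning ``$S$ is chosen''), one variable $u$ per element, and a special variable $z$ that acts as a global switch. The formula contains:
\begin{itemize}
\item a weight-$1$ constraint $\false(x_S)$ for every set, which charges $1$ per chosen set;
\item for every element $u$, a heavy constraint $(\neg z \vee u)$, i.e.\ $\orr_{2,1}$, so that $z \Rightarrow u$;
\item heavy constraints forcing ``$u$ implies that some $x_S$ with $u\in S$ is true''.
\end{itemize}
Disjunctions like $\neg u \vee x_{S_1}\vee\dots\vee x_{S_r}$ have only one negated literal, so they are weakly positive, but they are too wide for the canonical set. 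I would chain them with heavy auxiliary variables $w_j$: use $\orr_{3,1}$ clauses $(\neg w_j \vee x_{S_j} \vee w_{j+1})$ and $(\neg u \vee x_{S_1}\vee w_2)$, and end the chain with a heavy $\false(w_r)$ or simply the last set variable. Each of these clauses has exactly one negation, so they are $\orr_{3,1}$ applications. Finally I add $\true(z)$ with weight $t+1$.

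The correspondence works as follows. If $\true(z)$ is kept, every $u$ is forced to $1$, hence each $u$ must be covered, and the deleted weight-$1$ constraints are exactly the $\false(x_S)$ of a set cover. If instead $\true(z)$ is deleted, setting everything to $0$ satisfies the rest. Heavy constraints get weight, say, $\lceil c(t+1)\rceil+1$, so no $c$-approximate solution touches them. Hence $\opt = \min(t+1, \opt(U,\mathcal{S}))$.

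I run the detection algorithm with $k=t+1$. In case (i), $\opt\le t<k$, so by \textbf{(G1)} the output is contained in an optimal solution, and $\true(z)$ (weight $t+1$) lies in no optimal solution; the output therefore excludes it. In case (ii), every solution avoiding $\true(z)$ has weight $>\frac{t}{4\sqrt\eps} > 2ct \ge c(t+1)$ for $t\ge1$. So $\{\true(z)\}$ is optimal with $\opt=k$ and $\true(z)$ is $c$-essential, and by \textbf{(G2)} it is output. Checking membership of $\true(z)$ in the output therefore decides the gap problem, contradicting \cref{lem:hitting-set-hardness}.

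The main obstacle is the chaining gadget. I must verify that keeping all heavy chain clauses with $u=1$ really forces some $x_S$, $u\in S$, to $1$ (the terminal $\false(w_r)$ handles this), and that with $u=0$ or the relevant $x_S=1$ the auxiliaries can always be set consistently. This must hold without letting a cheap deletion of a weight-$1$ constraint simulate a cover, which is why every auxiliary constraint is heavy.
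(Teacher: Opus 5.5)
Your proposal is correct and follows essentially the same route as the paper. Both reduce from the gap version of \textsc{Set Cover}: a special unit constraint of weight $t+1$ becomes $c$-essential in case~(ii), because deleting it lets a constant assignment satisfy everything else. Weight-$1$ constraints encode the chosen sets, heavy clauses force every element to be covered, and the detector is run with $k=t+1$. Your gadget is a mirror image of the paper's: $\true(z)$ and $\false(x_S)$ play the roles of its $\false(z)$ and $(\neg y_j \vee z)$, and the all-zero assignment replaces the all-one assignment.

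The only organizational difference is that the paper first builds a $\horn_{m+1}$-formula and then shrinks the wide clauses via essential implementations of $\orr_{p,1}$. You instead inline the resulting chain with every link heavy, which makes the weight bookkeeping immediate. Just fix the indexing so that the terminal $\false$ comes after the link containing $x_{S_r}$, or end the chain with $(\neg w_{r-1}\vee x_{S_{r-1}}\vee x_{S_r})$; as written, $\false(w_r)$ would drop $S_r$.
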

\begin{proof}
    We only prove that~$c$-essential detection for \mincsp{$\{\orr_{3,1}, \orr_{2,1}, \true, \false\}$} is hard. The proof that the same holds for \mincsp{$\{\orr_{3,2}, \orr_{2,1}, \true, \false\}$} follows analogously.

    Now, let $\horn_p := \left( \bigcup_{i \in \{2 \ldots p\}} \orr_{i,1} \right) \cup\{\true, \false\}$, so that the lemma states hardness of $c$-essential detection for \mincsp{$\horn_3$}. We start by showing that the existence of a polynomial-time~$c$-essential detection algorithm for \wmincsp{$\horn_\infty$} implies that P=NP. We assume that such an algorithm exists for an arbitrary value of~$c\in\Rone$ and argue that it can be used to solve the distinguishing task from \cref{lem:hitting-set-hardness} in polynomial time. Afterwards, we explain how to extend this result to~$\horn_3$.

    Now, let~$(U, \mathcal{S})$ be a \textsc{Set Cover} instance, and let~$t \geq 1$ be an arbitrary integer. We let~$m$ denote the maximum size of a set in~$\mathcal{S}$, and show how to transform~$(U, \mathcal{S})$ into a weighted $\horn_{m+1}$-formula~$F$.
     
    We start by introducing two variables~$z$ and~$w$ and adding constraint applications~$\true(w)$ with weight~$c(t+1)+1$ and~$\false(z)$ with weight~$t+1$ to~$F$. Then, for every set~$S_j \in \mathcal{S}$, we introduce a variable~$y_j$ and we add a constraint application~$(\neg y_j \vee z)$ to~$F$ with weight~$1$.

    Next, for every~$u_i \in U$, we introduce a variable~$x_i$ and we add~$(x_i \vee \neg w)$ to~$F$ with weight~$c(t+1) + 1$. Additionally, if we let~$S_{j_1}, \ldots, S_{j_\ell}$ be the collection of sets that~$u_i$ appears in, we add~$(\neg x_i \vee y_{j_1} \vee \ldots \vee y_{j_\ell})$ to~$F$ with weight~$c(t+1)+1$.

    This concludes the reduction. Observe that the resulting formula~$F$ is a $\horn_{m+1}$-formula whose size is polynomial in the total size of the original \textsc{Set Cover} instance. We continue by proving the correctness of our reduction, starting with the following claim.
    
    \begin{claim} \label{clm:hitting-set-solution}
        A collection~$\C := S_{j_1}, \ldots, S_{j_r}$ is a set cover of~$(U, \mathcal{S})$ if and only if the collection of constraint applications~$X_\C := \{(\neg y_{j_1} \vee z), \ldots, (\neg y_{j_r} \vee z)\}$ is such that~$F-X_\C$ is satisfiable.
    \end{claim}
    \begin{claimproof}
        For one direction, suppose that~$\C$ is a set cover of~$(U, \mathcal{S})$. Next, consider the assignment~$s$ to the variables of~$F$ that sets~$w$ to~$1$,~$z$ to~$0$, all variables~$x_i$ to~$1$, the variables~$y_j$ for which~$S_j \in \C$ to~$1$, and the variables~$y_j$ for which~$S_j \notin \C$ to~$0$.

        Clearly,~$\true(w)$ and~$\false(z)$ are satisfied by this assignment. All constraint applications of the form~$(\neg y_j \vee z)$ in~$F-X_\C$ are also satisfied because all variables~$y_j$ for which~$S_j \notin \C$ are set to~$0$. Finally, note that, because~$\C$ is a set cover of~$U$, every constraint application of the form~$(\neg x_i \vee y_{j_1} \vee \ldots \vee y_{j_\ell})$ contains at least one~$y_j$ for which~$S_j \in \C$. All such variables are set to~$1$, so all constraint applications of this form in~$F-X_\C$ are also satisfied by~$s$.

        For the other direction, suppose that~$F-X_\C$ is satisfiable and let~$s$ be a satisfying assignment. To satisfy~$\true(w)$ and~$\false(z)$, we must have~$s(w) = 1$ and~$s(z) = 0$. Then, to satisfy all constraint applications of the form~$(x_i \vee \neg w)$, we must have~$s(x_i) = 1$ for all variables~$x_i$. Likewise, to satisfy all constraint applications in~$F-X_\C$ of the form~$(\neg y_j \vee z)$, we must have~$s(y_j)=0$ for all variables~$y_j$ with~$S_j \notin \C$.

        Next, consider an arbitrary constraint application of the form~$(\neg x_i \vee y_{j_1} \vee \ldots \vee y_{j_\ell})$. Since~$s$ satisfies this constraint application but also has~$s(x_i) = 1$ it must set at least one of the variables~$y_{j_1}, \ldots, y_{j_\ell}$ to~$1$. However, since~$s$ sets all variables~$y_j$ for which~$S_j \notin \C$ to~$0$, at least one of the variables~$y_{j_1}, \ldots, y_{j_\ell}$ must correspond to a set from~$\mathcal{S}$ that is in~$\C$. Thus, we obtain that for every variable~$x_i$, the constraint application~$(\neg x_i \vee y_{j_1} \vee \ldots \vee y_{j_\ell})$ contains at least one variable~$y_j$ for which~$S_j \in \C$. This means that~$\C$ is a set cover of~$(U, \mathcal{S})$.
    \end{claimproof}
    Now, we define~$\eps := \frac{1}{65c^2}$ and use the claim above to show that a $c$-essential detection algorithm for \mincsp{$\horn_p$} on input~$F$ can be used to determine whether the original \textsc{Set Cover} input~$(U, \mathcal{S})$ satisfies~$\opt(U, \mathcal{S}) \leq t$ or~$\opt(U, \mathcal{S}) > \frac{t}{4\sqrt\eps}$. To this end, let~$S$ be the result of a $c$-essential detection algorithm for \mincsp{$\horn_p$} on input~$F$ with~$k = t+1$. To recognize the first case, consider the following claim.

    \begin{claim}
        If~$\opt(U, \mathcal{S}) \leq t$, then~$S$ does not contain the constraint application~$\false(z)$.
    \end{claim}
    \begin{claimproof}
        Let~$\C \subseteq \mathcal{S}$ be a set cover of~$(U, \mathcal{S})$ of size at most~$t$. Then, by \cref{clm:hitting-set-solution}, the corresponding set~$X_\C$ is such that~$F-X_\C$ is satisfiable. Since all constraint applications in~$X_\C$ have weight~$1$, this constraint set is a solution to~$F$ of weight~$t$. Thus,~$\opt(F) \leq t < k$. As such,~$S$ must be a subset of an optimal solution to~$F$ to satisfy Property~\textbf{(G\ref{itm:G1})}. Since an optimal solution of~$F$ has weight~$\leq t$ and the constraint application~$\false(z)$ has weight~$t+1$,~$\false(z)$ is not in any optimal solution and by extension also not in~$S$.
    \end{claimproof}
    We continue by showing how the set~$S$ differs for the other case.
    \begin{claim}
        If~$\opt(U, \mathcal{S}) > \frac{t}{4\sqrt\eps}$, then~$S$ does contain the constraint application~$\false(z)$.
    \end{claim}
    \begin{claimproof}
        First, we argue that the singleton set~$\{\false(z)\}$ of weight~$t+1$ is a solution to~$F$. To see this, note that every constraint application in~$F-\{\false(z)\}$ contains at least one positive literal so that the all-one assignment satisfies it. We continue by showing that solutions to~$F$ that do not contain~$\false(z)$ have a weight of more than~$c\cdot(t+1)$.

        Let~$X$ be an arbitrary such solution and suppose for contradiction that it has a weight of~$\leq c \cdot (t+1)$. Recall that the only constraint applications that do not individually exceed this weight are~$\false(z)$ and the weight-$1$ constraint applications of the type~$(\neg y_i \vee z)$. As such,~$X$ consists solely of such weight-$1$ constraints. 
        
        However, by \cref{clm:hitting-set-solution}, a solution that only consists of these weight-$1$ constraint applications corresponds to a set cover of~$(U, \mathcal{S})$ of equal size. Therefore,~$X$ must be at least of size~$\opt(U, \mathcal{S})$, which, by assumption, is larger than~$\frac{t}{4\sqrt\eps}$. We defined~$\eps = \frac{1}{65c^2}$, so by rewriting the inequality~$\eps < \frac{1}{64c^2}$, we obtain that~$\frac{1}{4\sqrt\eps} > 2c$, which in turn yields that~$\frac{t}{4\sqrt\eps} > 2ct$. Since $2t \geq t+1$ for all $t \geq 1$, we get that~$|X| \geq \opt(U, \mathcal{S}) > \frac{t}{4\sqrt\eps} > c \cdot (t+1)$. This contradicts our assumption that~$X$ has weight at most~$c \cdot (t+1)$.

        We conclude that all solutions to~$F$ that do not contain~$\false(z)$ are more than~$c$ times as heavy as~$\false(z)$ itself. This means that~$\{\false(z)\}$ is an optimal solution of weight~$(t+1)$ and even that~$\false(z)$ is~$c$-essential. As~$k = t+1 = \opt(F)$,~$S$ must contain all~$c$-essential constraint applications to satisfy Property~\textbf{(G\ref{itm:G1})}. Hence,~$S$ contains~$\false(z)$.
    \end{claimproof}
    Now, the two claims above indicate how a polynomial-time $c$-essential detection algorithm for~$F$ can be used to distinguish between~$\opt(U, \mathcal{S}) \leq t$ and~$\opt(U, \mathcal{S}) > \frac{t}{4\sqrt\eps}$ in polynomial time as well: after running such an algorithm, it suffices to check whether the output contains~$\false(z)$. By \cref{lem:hitting-set-hardness}, this distinction is NP-hard to make. As~$c\in\Rone$ was chosen arbitrarily, this shows that $c$-essential detection for~\wmincsp{$\horn_\infty$} is NP-hard for every~$c \in \Rone$. By \cref{lem:unweighted-to-weighted}, the same holds for the unweighted variant of the problem.

    It remains to show that the same is true for \mincsp{$\horn_3$}. To this end, we show how to transform the~$\horn_{m+1}$-formula~$F$ into a~$\horn_3$-formula whose size is polynomial in the size of~$F$. We note that, for any integer~$p \geq 2$, $\{ \orr_{2,1}, \orr_{3,1} \} \imp \orr_{p,1}$. This can be seen by inductively applying the following claim.
    \begin{claim} \label{clm:or-p-imps-or-p+1}
        Let~$p \geq 4$. Then $\{ \orr_{p-1,1}, \orr_{3,1} \} \imp \orr_{p,1}$.
    \end{claim}
    \begin{claimproof}
        We show that~$(\neg x_1 \vee x_2 \vee \ldots \vee x_{p-2} \vee v) \wedge (\neg v \vee x_{p-1} \vee x_p)$, with dummy variable~$v$ and the first constraint application as its head, is an essential implementation of~$(\neg x_1 \vee x_2 \vee \ldots \vee x_p)$. It is easily seen that this implementation satisfies Property~\textbf{(I\ref{itm:I1})} by verifying that $(\neg x_1 \vee x_2 \vee \ldots \vee x_{p-2} \vee v) \wedge (\neg v \vee x_{p-1} \vee x_p)$ is satisfied if and only if~$(\neg x_1 \vee x_2 \vee \ldots \vee x_p)$ is satisfied.

        Next, note that every assignment to the variables~$x_1, \ldots, x_p$ extends to a satisfying assignment for $(\neg v \vee x_{p-1} \vee x_p)$ by setting~$v$ to~$0$. Hence, Property~\textbf{(I\ref{itm:I2})} is satisfied.
    \end{claimproof}
    Now, a single essential implementation of~$\orr_{p,1}$ can be made using~$\Oh(p)$ applications of~$\orr_{2,1}$ and~$\orr_{3,1}$, since the construction in \cref{clm:or-p-imps-or-p+1} shows that we can build an~$\orr_{p,1}$ application from one (constant-size)~$\orr_{3,1}$ application and only one~$\orr_{p-1,1}$ application. 
    
    To transform~$F$ into a~$\horn_3$-formula, we replace every~$\horn_p$ application in it with~$p > 3$ by an essential implementation of~$\orr_{2,1}$ and~$\orr_{3,1}$ applications in which the head receives weight~$1$ and the other applications receive weight~$c(t+1)+1$. Then, the same logic as in \cref{lem:implementation-yields-reduction} shows that a polynomial-time $c$-essential detection algorithm for the resulting~$\horn_3$-formula can be used to perform~$c$-essential detection for the original formula~$F$ in polynomial time. This, in turn, was shown to imply P=NP.
\end{proof}

\subsubsection{Essential implementation of the canonical constraint set}
\label{sssec:wpos-wneg-implementation}
We start by giving some additional background. First, we introduce the notion of a maxterm.

\begin{definition}
    Let~$f$ be a constraint over~$\X$ and let~$s$ be an assignment to~$\X' \subseteq \X$ that fixes~$f$ to~$0$. (That is, no extension of~$s$ to~$\X$ satisfies~$f$). If no restriction of~$s$ to a strict subset of~$\X'$ also fixes~$f$ to~$0$, we call~$s$ a \emph{maxterm} of~$f$.
\end{definition}

We represent a maxterm~$s$ over~$\X'$ by a CNF clause with, for every variable~$x \in \X$, the literal that~$s$ does not satisfy. E.g. if the assignment~$s$ is defined by~$s(x_1) = 1$ and~$s(x_2) = s(x_3) = 0$, its representative clause is~$(\neg x_1 \vee x_2 \vee x_3)$. 

Observe that this representation ensures that a maxterm~$s$ is the unique assignment to~$\X'$ that does not satisfy its representative clause. Then, it is easily verified that a constraint~$f$ is equivalent to the conjunction of its maxterms representatives. This insight complements the following known characterization of weakly positive and weakly negative constraints.
\begin{lemma}[{\cite[Lemma 4.20]{KhannaSTW00}}]
    A constraint~$f$ is weakly positive (resp.~negative) if and only if all its maxterms are weakly positive (resp.~negative).
\end{lemma}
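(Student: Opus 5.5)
We prove both directions separately, treating only the weakly positive case; the weakly negative case follows by swapping the roles of $0$ and $1$, which exchanges positive and negative literals.

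\emph{Backward direction.} Suppose every maxterm of $f$ is weakly positive, meaning its representative clause contains at most one negated literal. As observed just before the statement, $f$ is equivalent to the conjunction of the representative clauses of its maxterms. This conjunction is a CNF in which every clause has at most one negated variable, so $f$ is weakly positive by definition.

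\emph{Forward direction.} Suppose $f$ is weakly positive, and assume for contradiction that some maxterm $s$ over $\X' \subseteq \X$ sets two distinct variables $x_1, x_2 \in \X'$ to $1$. These are the two negated literals of its representative clause.

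We use minimality of $s$ twice. Since $s$ restricted to $\X' \setminus \{x_1\}$ does not fix $f$ to $0$, there is a satisfying assignment $s_1$ of $f$ that agrees with $s$ on $\X' \setminus \{x_1\}$. Because $s$ itself fixes $f$ to $0$, we must have $s_1(x_1) = 0$. Symmetrically, there is a satisfying assignment $s_2$ that agrees with $s$ on $\X' \setminus \{x_2\}$ and has $s_2(x_2) = 0$.

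Now take the coordinatewise AND $s_1 \wedge s_2$. Weakly positive (Horn-type) constraints are closed under this operation. One can see this directly: if a clause with at most one negated literal is satisfied by both $s_1$ and $s_2$ but falsified by $s_1 \wedge s_2$, then every positive literal of the clause is $0$ in one of them. Also, the clause's unique negated variable (if any) is $1$ in both, so each of $s_1$ and $s_2$ must satisfy the clause via a positive literal. Making this work needs some care: clauses with several positive literals break the naive argument. So instead we should use the dual closure. Clauses with at most one \emph{negated} literal are closed under coordinatewise OR, $s_1 \vee s_2$. Indeed, if $s_1 \vee s_2$ falsifies such a clause, then all its positive literals are $0$ in both assignments. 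The single negated variable is then $1$ in $s_1 \vee s_2$, hence $1$ in one of them, and that assignment falsifies the clause, which is a contradiction.

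The catch is that $s_1 \vee s_2$ agrees with $s$ on $\X' \setminus \{x_1, x_2\}$ and has $x_1 = x_2 = 1$, so it extends $s$. But it satisfies $f$, which contradicts the fact that $s$ fixes $f$ to $0$. This finishes the forward direction. The step I expect to be the main obstacle is choosing the correct closure operation (OR for weakly positive, AND for weakly negative) and checking that it yields an extension of $s$. Alternatively, the same conclusion can be extracted from \cref{lem:wpos-wneg-characterization}.
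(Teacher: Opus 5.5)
Your proof is correct. The paper itself gives no proof of this statement; it is imported from Khanna et al. The only supporting text is the preceding remark that $f$ is equivalent to the conjunction of its maxterm representatives, which is exactly your backward direction.

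Your forward direction is therefore a self-contained replacement for the citation, and it uses nothing beyond the definitions. Minimality of the maxterm $s$ over $\X'$ yields satisfying assignments $s_1,s_2$ that agree with $s$ on $\X'$, except that $x_1$ (respectively $x_2$) is flipped to $0$. Clauses with at most one negated literal are closed under coordinatewise OR, so $s_1\vee s_2$ satisfies $f$. Since $s_2(x_1)=s_1(x_2)=1$, this assignment extends $s$, contradicting that $s$ fixes $f$ to $0$.

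The alternative you mention via \cref{lem:wpos-wneg-characterization} can be made to work, but it is clumsier. The obvious candidate set $\X'\cap s^{-1}(0)$ need not be $0$-closed, because of variables outside $\X'$. You would first have to enlarge it to the set of variables forced to $0$ by every satisfying assignment that zeroes it.

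When writing this up, delete the abandoned paragraph on $s_1\wedge s_2$. Its claim that weakly positive constraints are closed under AND is false: $\orr_2$ is satisfied by $(1,0)$ and $(0,1)$ but not by $(0,0)$. Weakly positive clauses are also dual-Horn rather than Horn, and AND is the right operation only in the weakly negative case.

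Finally, in the OR-closure check, treat a clause with no negated literal separately. That case is trivial, but your sentence assumes a negated variable exists.
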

Now, to aid us in giving the required essential implementations, we start by showing the following.
\begin{lemma} \label{lem:wp-wn-imps-constants}
    Let~$\F$ be a constraint set that is either weakly positive or weakly negative. If~$\F$ is neither~$0$-valid nor~$1$-valid, then~$\F \imp \{\true, \false\}$.
\end{lemma}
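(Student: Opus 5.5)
By symmetry (negating all variables swaps weakly positive with weakly negative, and 0-valid with 1-valid), I would treat only the weakly positive case. The plan is to pick constraints $f_0, f_1 \in \F$ with $f_0$ not $0$-valid and $f_1$ not $1$-valid. Identifying all variables of such a constraint into a single variable $x$ gives unary constraints $f_0[\stack{\X}{x}]$ and $f_1[\stack{\X}{x}]$. By \cref{obs:substituted-implementation}, both are essentially implemented by $\F$.

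\textbf{Step 1 (the easy case).} Consider $g_0 := f_0[\stack{\X}{x}]$. By construction $g_0(0)=0$, since the all-zero assignment does not satisfy $f_0$. If $g_0(1)=1$, then $g_0 \equiv \true$, and we get $\F \imp \true$ directly (the single application is its own head, so \textbf{(I\ref{itm:I2})} holds vacuously). Symmetrically, if $f_1(1,\dots,1)$ fails but $f_1(0,\dots,0)$ holds, then $f_1[\stack{\X}{x}] \equiv \false$.

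\textbf{Step 2 (the obstacle).} The remaining case is a constraint that is neither $0$-valid nor $1$-valid. Identifying all its variables then gives an unsatisfiable unary constraint, which is useless. To handle this I would use the weakly positive structure via \cref{lem:wpos-wneg-characterization}.

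Take $f$ neither $0$-valid nor $1$-valid. Consider the family of sets $V$ that are $0$-consistent, meaning some satisfying assignment sets all of $V$ to $0$. Choose $V$ maximal in this family; it is nonempty, because otherwise every satisfying assignment would be all-one, contradicting that $f$ is not $1$-valid. By maximality, every $x \notin V$ is forced to $1$ whenever $V$ is all-zero, so any satisfying assignment that is $0$ on $V$ must be exactly $K_{0,V}$, which is therefore satisfying. Also $V \neq \X$, since $f$ is not $0$-valid.

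Now substitute $0$-variable $y$ for $V$ and $1$-variable $x$ for $\X \setminus V$, obtaining a binary constraint $h(y,x) := f[\stack{V}{y} \comma \stack{\X\setminus V}{x}]$. Then $h(0,1)=1$, $h(0,0)=0$ and $h(1,1)=0$. If $h(1,0)=0$, then $h$ expresses $\neg y \wedge x$. In that case $h(y,x)$ by itself, with $x$ a dummy variable, essentially implements $\false(y)$: property \textbf{(I\ref{itm:I1})} holds because $y=0$ extends with $x=1$, and \textbf{(I\ref{itm:I2})} is vacuous because the implementation consists of the head alone. Likewise, with $y$ as dummy, it implements $\true(x)$. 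Both constants then follow.

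If instead $h(1,0)=1$, then $h$ is $[\neq]$, which is affine and complementive rather than weakly positive. I expect this is exactly where the weak positivity hypothesis must be used, and I would first try to exclude or circumvent this case via \cref{lem:wpos-wneg-characterization}. Note that $V$ being $0$-closed and $0$-consistent already yields only $K_{0,V}$; the point to establish is that the complementary pattern $K_{1,V}$ (ones on $V$, zeros elsewhere) cannot coexist with weak positivity in a way that leaves no escape.

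If that fails, the fallback is to combine the two cases. From a $[\neq]$ we obtain $\F \imp \neqq$. Then, for any other constraint in $\F$ that is not $0$-valid, a substitution with $y$ and $\neg y$ realized through a $\neqq$ dummy should collapse it to a constant, keeping the $\neqq$ applications as non-head members so that \textbf{(I\ref{itm:I2})} holds.

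\textbf{Step 3.} Finally, combine the constants obtained across the cases using \cref{lem:transitivity} to conclude $\F \imp \{\true,\false\}$.
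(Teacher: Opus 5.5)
\textbf{There is a genuine gap at the one step where the hypothesis must be used.} In Step~2 you never resolve the subcase $h(1,0)=1$, i.e.\ $h\equiv\neqq$. You only say you expect weak positivity to exclude it, and your fallback does not close it.

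Here is why the fallback fails. Substituting $y$ and a $\neqq$-linked dummy playing the role of $\neg y$ into a constraint $g$ gives a unary constraint. Its values at $0$ and $1$ are $g$ evaluated at two complementary assignments. So it can equal $\true$ or $\false$ only if $g$ is \emph{not} complementive, and you give no reason why $\F$ must contain such a constraint. The paper's \cref{lem:not-0-valid-1-valid} needs exactly this assumption, and it also has to handle an extra $\orr_{2,1}$ subcase. As written, nothing in your argument rules out ending up with only $\F\imp\neqq$.

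\textbf{The missing idea is short: weakly positive constraints are closed under coordinatewise OR.} Suppose $s$ and $t$ both satisfy a clause with at most one negated literal $\neg z$. Either some positive literal of the clause is true under $s$ or $t$, hence under $s\vee t$; or both satisfy it through $\neg z$, so $(s\vee t)(z)=0$. Equivalently, via \cref{lem:wpos-wneg-characterization}: if both the assignment that is $0$ on $V$ and $1$ elsewhere, and the one that is $1$ on $V$ and $0$ elsewhere, satisfy $f$, then $\emptyset$ is $0$-consistent and $0$-closed. Hence the all-one assignment satisfies $f$.

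These two assignments are exactly the ones encoded by $h(0,1)$ and $h(1,0)$. So $h(1,0)=1$ would make $f$ $1$-valid, a contradiction. Therefore $h\equiv \neg y\wedge x$, and your single-application implementations give both $\true$ and $\false$ from $f$ alone. (The maximality of $V$ is not needed: $V=s^{-1}(0)$ for any satisfying $s$ works.)

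\textbf{With this repair, your proof is correct and differs from the paper's.} The paper uses the maxterm characterization of Khanna et al. Non-$1$-validity forces a unit maxterm $\neg x_i$, which gives $\false$ by projection. Non-$0$-validity gives an all-positive maxterm; fixing all but one of its variables to $0$ via $\false$ and projecting gives $\true$. Your route uses only variable identification plus OR-closure. It is Schaefer's case split, with weak positivity eliminating the $\neqq$ branch, and it avoids maxterms entirely.
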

\begin{proof}
    We give the proof for a weakly positive constraint set~$\F$; the proof for weakly negative constraint sets is analogous. Since~$\F$ is not~$1$-valid, there must be some constraint~$f_1 \in \F$ over variables~$x_1, \ldots, x_p$ that, when written as conjunction of its maxterm representatives, contains a clause of the form~$\false(x_i)$ for some~$i \in [p]$. Then, we claim that the constraint~$f_1'(x_i) := \exists \{x_1, \ldots, x_p\} \setminus x_i \mathrm{~s.t.~} f_1(x_1, \ldots, x_p)$ must be equivalent to the constraint~$\false(x_i)$. To see this, note that no assignment that sets~$x_i$ to~$1$ can satisfy~$f_1$ as~$f_1$ contains~$\false(x_i)$ as a clause. So,~$f_1(1)=0$. Secondly, since we assume throughout the paper to only consider satisfiable constraints, there must be a satisfying assignment for~$f_1$ that sets~$x_i$ to~$0$. So,~$f_1(0)=1$. As such,~$f_1'$ is equivalent to the constraint~$\false$, and by \cref{obs:quantified-implementation},~$f_1 \imp \false$.

    Since~$\F$ is not~$0$-valid, there must be a constraint~$f_2 \in \F$ over variables~$y_1, \ldots, y_q$ that, when written as conjunction of its maxterm representatives, contains a clause of the form~$(y_1 \vee \ldots \vee y_r)$ for some~$r \leq q$. Consider the constraint~$f_2'(y_1) := \exists y_{r+1}, \ldots, y_q \mathrm{~s.t.~} f_2\left[ \stack{y_2}{0} \comma \ldots \comma \stack{y_r}{0} \right]$. By \cref{obs:quantified-implementation,lem:implementation-with-constants},~$\{f_2, \false\} \imp f_2'$. Clearly,~$f_2'(0) = 0$. However, if~$f_2'(1)$ would also be~$0$, then~$(y_1 \vee \ldots y_r)$ would not be a maxterm representative of~$f$, since assigning~$0$ to just the variables~$y_2, \ldots, y_r$ would already fix~$f_2$ to~$0$. Hence,~$f_2'(1)=1$, making~$f_2'$ equivalent to the constraint~$\true$. We conclude that~$\F \imp\{\true, \false\}$.
\end{proof}
Now, we can follow a chain of implementations as given by Khanna et al.\ to show the following.
\begin{lemma} \label{lem:wp-imps-or3-1}
    Let~$\F$ be a constraint set that is not IHS-B. If~$\F$ is weakly positive, then~$\{\F, \true, \false\} \imp \orr_{3,1}$. If~$\F$ is weakly negative, then~$\{\F, \true, \false\} \imp \orr_{3,2}$.
\end{lemma}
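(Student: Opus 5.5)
We treat the weakly positive case; the weakly negative case follows by swapping the roles of $0$ and $1$ throughout. The plan follows Khanna et al.'s chain of implementations, checking at each step that Property~\textbf{(I\ref{itm:I2})} also holds. By the lemma of Khanna et al.\ quoted above, every constraint $f\in\F$ is the conjunction of its maxterm representatives, and each of these is a weakly positive clause. Since $\F$ is not IHS-B$^+$, some $f\in\F$ has a maxterm that is not of the form $\orr_p$, $\orr_{2,1}$ or $\false$. Being weakly positive, this maxterm contains exactly one negated literal and at least two positive ones: $(\neg x_1\vee x_2\vee\dots\vee x_r)$ with $r\ge 3$.

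First, isolate this clause. Let $s$ be the corresponding maxterm, i.e.\ $x_1=1$ and $x_2=\dots=x_r=0$, which fixes $f$ to~$0$. By minimality, for each $j\in[r]$ the restriction of $s$ that drops $x_j$ extends to a satisfying assignment. Substitute constants into the variables outside $\{x_1,x_2,x_3\}$ using \cref{lem:implementation-with-constants}: set $x_4,\dots,x_r$ to $0$. Then existentially quantify the remaining variables via \cref{obs:quantified-implementation}. This gives a constraint $g(x_1,x_2,x_3)$ with $\{f,\true,\false\}\imp g$. By construction $g(1,0,0)=0$. Maximality of the maxterm gives $g(0,0,0)=g(1,1,0)=g(1,0,1)=1$.

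The main obstacle is that $g$ may also reject other assignments, such as $(0,1,0)$ or $(1,1,1)$. I would handle this by a case analysis using that $g$ is weakly positive, so it is closed under AND of satisfying assignments. From $(1,1,0)$ and $(1,0,1)$ we get $g(1,0,0)=1$ unless one of them fails, which contradicts what we just derived. So closure under AND restricts which extra zeros $g$ can have. In the remaining bad cases, I would pick a different maxterm, or collapse variables via \cref{obs:substituted-implementation}, to recover either $\orr_{3,1}$ directly or a pair from which $\orr_{3,1}$ follows. Any such conjunction of helper clauses must be arranged so that all non-head applications are satisfiable for every assignment to the original variables, as \textbf{(I\ref{itm:I2})} requires. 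I would try to mirror Khanna et al.'s Lemma~4.22-style argument and check \textbf{(I\ref{itm:I2})} case by case, typically by choosing the dummy variables appropriately.

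For that check, the heads are taken to be the $g$-application, while the constant applications $\true(v)$ and $\false(v)$ act on dummy variables only and are therefore always satisfiable. Finally, \cref{lem:transitivity} composes the steps into $\{\F,\true,\false\}\imp\orr_{3,1}$.
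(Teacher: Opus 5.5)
Your setup is sound. Since $\F$ is weakly positive but not IHS-B$^+$, some $f\in\F$ has a maxterm $(\neg x_1\vee x_2\vee\dots\vee x_r)$ with $r\ge 3$. Fixing $x_4,\dots,x_r$ to $0$ and projecting out the remaining variables gives $g$ with $\{f,\true,\false\}\imp g$, $g(1,0,0)=0$ and $g(0,0,0)=g(1,1,0)=g(1,0,1)=1$.

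The gap is everything after that. The case $g\not\equiv\orr_{3,1}$ is where the lemma has its content, and you only list things you would try (another maxterm, collapsing variables, mirroring Khanna et al.). The one concrete tool you invoke is false: weakly positive constraints are closed under componentwise OR, not AND. Indeed $\orr_{3,1}$ itself accepts $(1,1,0)$ and $(1,0,1)$ but rejects their AND, and your own sentence derives $g(1,0,0)=1$, contradicting what you just proved. So the proposal establishes the lemma only in the lucky case $g\equiv\orr_{3,1}$. Property (I2), the only thing separating this lemma from the result of Khanna et al., is never verified for any nontrivial gadget.

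To close the gap along your lines, first note that OR-closure gives $g(1,1,1)=1$. If $g(0,1,0)=g(0,0,1)=1$, then also $g(0,1,1)=1$, so $g\equiv\orr_{3,1}$. Otherwise $g(x,y,x)$ (if $g(0,1,0)=0$) or $g(x,x,y)$ (if $g(0,0,1)=0$) is equivalent to $(\neg y\vee x)$, which gives $\orr_{2,1}$ by \cref{obs:substituted-implementation}. Then $g(a,v,w)\wedge(\neg v\vee b)\wedge(\neg w\vee c)$, with dummy variables $v,w$ and head $g(a,v,w)$, is an essential implementation of $(\neg a\vee b\vee c)$. Property (I1) uses only $g(1,0,0)=0$ and $g(0,0,0)=g(1,1,0)=g(1,0,1)=1$, and (I2) holds with $v=w=0$.

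That would be a self-contained route, genuinely different from the paper's. The paper instead re-uses the chain from Khanna et al.'s Lemma~7.18 and checks that its steps are quantifications, constant substitutions, or have an obvious head. It then replaces the single non-essential step, implementing $\eqq$ from $\orr_{2,1}$. There the obvious $(\neg x\vee y)\wedge(\neg y\vee x)$ admits no head, because which clause fails depends on the assignment. The replacement is the gadget $(\neg v_1\vee v_2)\wedge(\neg x\vee v_1)\wedge(\neg y\vee v_1)\wedge(\neg v_2\vee x)\wedge(\neg v_2\vee y)$ with head $(\neg v_1\vee v_2)$. Your proposal contains neither that patch nor a completed case analysis of its own.
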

\begin{proof}
    We prove the claim for weakly positive~$\F$; the claim for weakly negative constraint sets follows analogously. The proof of this lemma follows almost directly from the proof of Lemma 7.18 in the previous work by Khanna et al.\ \cite{KhannaSTW00}. This lemma makes the same claim but using a different definition for implementations. Following all steps of their proof, one can see that almost all of their implementations also happen to be essential implementations. 
    
    One can verify this by first noting that some of these implementations consist only of existentially quantifying over some variables in a constraint and possibly fixing some of them to~$0$. By \cref{obs:quantified-implementation,lem:implementation-with-constants}, these are essential implementations. For almost all other implementations, it is easy to verify that they are essential implementations when considering the first mentioned constraint of the implementation as its head.

    The proof by Khanna et al.\ contains just one implementation that itself is not an essential one, which is an implementation of~$\eqq$ using~$\orr_{2,1}$. We complete our proof by giving an alternative implementation that still shows~$\orr_{2,1} \imp \eqq$. In particular, we argue that~$f:=(\neg v_1 \vee v_2) \wedge (\neg x \vee v_1) \wedge (\neg y \vee v_1) \wedge (\neg v_2 \vee x) \wedge (\neg v_2 \vee y)$ is an essential implementation of~$(x = y)$, with dummy variables~$v_1$ and~$v_2$ and the first constraint application as head.

    To show that this implementation satisfies Property~\textbf{(I\ref{itm:I1})}, we show that an assignment~$s$ to~$\{x,y\}$ can be extended to a satisfying assignment for~$f$ if and only if~$s(x) = s(y)$. For one direction, suppose that an assignment~$s$ with~$s(x) = s(y)$ is given. This can be extended to a satisfying assignment for~$f$ by assigning~$v_1$ and~$v_2$ the same value as~$x$ and~$y$ as well. 
    
    For the other direction, suppose that a satisfying assignment~$s'$ to~$f$ is given and assume w.l.o.g.\ that~$s(x) = 1$. Then, to satisfy the second constraint application of~$f$, it must satisfy~$s(v_1) = 1$. To satisfy the first constraint application of~$f$, the assignment must satisfy~$s(v_2) = 1$, so to satisfy the fifth constraint application, it must satisfy~$s(y) = 1$. Thus,~$s(y) = s(x)$.

    This establishes Property~\textbf{(I\ref{itm:I1})} to hold. To see that Property~\textbf{(I\ref{itm:I2})} is satisfied, we note that any assignment to~$x$ and~$y$ can be extended to a satisfying assignment for the last four constraint applications of~$f$ by setting~$v_1$ to~$1$ and~$v_2$ to~$0$.
\end{proof}
Now, the final lemma of this section follows rather quickly.
\begin{lemma}\label{lem:wp-implementation}
    Let~$\F$ be a constraint set that is neither~$0$-valid, nor~$1$-valid nor IHS-B. If~$\F$ is weakly positive, then~$\F \imp \{\orr_{3,1}, \orr_{2,1}, \true, \false\}$. If~$\F$ is weakly negative, then~$\F \imp \{\orr_{3,2}, \orr_{2,1}, \true, \false\}$.
\end{lemma}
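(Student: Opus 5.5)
We combine \cref{lem:wp-wn-imps-constants}, \cref{lem:wp-imps-or3-1} and \cref{lem:transitivity}, treating the weakly positive case; the weakly negative case is symmetric, with $\orr_{3,2}$ in place of $\orr_{3,1}$. Since $\F$ is weakly positive and neither $0$-valid nor $1$-valid, \cref{lem:wp-wn-imps-constants} gives $\F \imp \{\true, \false\}$. Since $\F$ is not IHS-B, \cref{lem:wp-imps-or3-1} gives $\{\F, \true, \false\} \imp \orr_{3,1}$.

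To obtain $\F \imp \orr_{3,1}$ we need the fact that $\F \imp \F_2$ and $\F \imp \F_3$ together imply $\F \imp \F_2 \cup \F_3$; this holds directly by the definition, since essential implementation of a set is constraint-wise. Also $\F \imp \F$ trivially. Hence $\F \imp \{\F, \true, \false\}$, and applying \cref{lem:transitivity} with $\{\F,\true,\false\} \imp \orr_{3,1}$ yields $\F \imp \orr_{3,1}$.

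It remains to show $\F \imp \orr_{2,1}$. From $\orr_{3,1}$ and $\false$, \cref{lem:implementation-with-constants} applied to the last variable gives $\{\orr_{3,1}, \false\} \imp \orr_{3,1}\left[\stack{x_3}{0}\right] \equiv \orr_{2,1}$. Transitivity then gives $\F \imp \orr_{2,1}$. Collecting the four implementations proves $\F \imp \{\orr_{3,1}, \orr_{2,1}, \true, \false\}$.

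The main subtlety is checking that the hypotheses of \cref{lem:wp-wn-imps-constants} and \cref{lem:wp-imps-or3-1} are exactly met. Both are: the first needs only weak positivity and non-$0$/$1$-validity, and the second needs weak positivity and non-IHS-B. The rest is bookkeeping with transitivity.
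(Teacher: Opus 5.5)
Your proof is correct and follows the paper's argument; the only difference is that you obtain $\orr_{2,1}$ by fixing $x_3$ to $0$ with $\false$ via \cref{lem:implementation-with-constants}, whereas the paper identifies variables, noting that $(\neg x \vee y \vee y)$ essentially implements $(\neg x \vee y)$. In the weakly negative case your variant needs one adjustment: you must fix a negated variable of $\orr_{3,2}$ to $1$ using $\true$, since fixing the positive one to $0$ yields $\nor_2$. The paper's identification trick carries over directly there as $(\neg x \vee \neg x \vee y)$.
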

\begin{proof}
    For the first part of the lemma, let~$\F$ be weakly positive.
    By \cref{lem:wp-wn-imps-constants},~$\F \imp \{\true, \false\}$, so, by \cref{lem:wp-imps-or3-1} and transitivity of essential implementations,~$\F \imp \orr_{3,1}$. It remains to see that~$\F \imp \orr_{2,1}$, but this is easily verified as~$(\neg x \vee y \vee y)$ is an essential implementation of~$(\neg x \vee y)$.

    If~$\F$ is weakly negative, we can use the same lemma statements to show analogously that~$\F \imp \{\orr_{3,2}, \orr_{2,1}, \true, \false\}$.
\end{proof}
Now, \cref{lem:wp-canonical-hardness,lem:wp-implementation} combine to prove Case~\ref{itm:dichotomy-4} of \cref{thm:dichotomy}.

\subsection{All other constraint sets} \label{ssec:other}
In this section, we prove Case~\ref{itm:dichotomy-5} of \cref{thm:dichotomy}. For constraint sets~$\F$ that do not belong to any of the categories specified in the first four cases of \cref{thm:dichotomy}, we pick~$\{\nae_5, \neqq, \eqq\}$ as the canonical constraint set. We prove in \cref{sssec:appendix-other-canonical-hardness} that~\textsc{$c$-essential Detection for \mincsp{$\{\nae_5, \neqq, \eqq\}$}} is NP-hard for every~$c \in \Rone$, even on formulas~$F$ with~$\opt(F) = 1$. Then, in \cref{sssec:appendix-other-implementation}, we show that if~$\F$ is any constraint set that is neither $0$-valid, nor $1$-valid, nor bijunctive, nor affine, nor weakly positive, nor weakly negative, then $\F \imp \{ \nae_5, \neqq, \eqq \}$.

\subsubsection{Hardness of c-essential detection for the canonical problem} 
\label{sssec:appendix-other-canonical-hardness}
Using a reduction from the $3$-SAT problem, we prove the following result.

\newcommand{\nx}{\overline{x}}
\newcommand{\var}{\text{var}}
\begin{lemma}
    Unless P=NP, there is no constant~$c \in \Rone$ for which a polynomial-time $c$-essential detection algorithm exists for~\mincsp{$\{ \nae_5, \neqq, \eqq \}$}. Not even for instances in which the optimal solution is of size at most~$1$.
\end{lemma}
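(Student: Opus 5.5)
We reduce from $3$-SAT, following the pattern of the earlier canonical hardness proofs: we build a weighted formula with one special light constraint application whose deletion alone makes the formula satisfiable. We then arrange that every other solution is prohibitively heavy exactly when the $3$-SAT instance is unsatisfiable, so that the special application is $c$-essential precisely in that case. By \cref{lem:unweighted-to-weighted} it suffices to work with polynomially bounded integer weights.

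Let $\varphi$ be a $3$-CNF formula with variables $x_1,\dots,x_n$ and clauses $D_1,\dots,D_m$. Introduce variables $x_i$ and $\nx_i$ for every $i$, together with two fresh variables $z$ and $z'$. Add $\neqq(x_i,\nx_i)$ for every $i$ with weight $W := \lceil c \rceil + 1$. For a clause $D_j=(\ell_1\vee\ell_2\vee\ell_3)$, with each literal mapped to the corresponding variable $x_i$ or $\nx_i$, add $\nae_5(\ell_1,\ell_2,\ell_3,z,z)$ with weight $W$. Under $z=0$ this says exactly that not all of $\ell_1,\ell_2,\ell_3$ equal $0$, i.e.\ that $D_j$ is satisfied. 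Finally, add $\eqq(z,z')$ with weight $W$ and $\false$-like pinning of $z$. Since the canonical set has no constants, we pin $z$ using complementivity instead: every constraint in $\{\nae_5,\neqq,\eqq\}$ is complementive, so we may assume w.l.o.g.\ that $z=0$ in any satisfying assignment. The special application is a single $\eqq$ or $\neqq$ of weight $1$ whose removal frees the structure. Concretely, we duplicate $z$ into $z_1$ and $z_2$: each clause gadget uses $\nae_5(\ell_1,\ell_2,\ell_3,z_1,z_2)$, we add the special application $\eqq(z_1,z_2)$ with weight $1$, and we give everything else weight $W$.

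If $\eqq(z_1,z_2)$ is deleted, set $z_1=0$ and $z_2=1$, so that every $\nae_5$ is satisfied regardless of the $\ell$'s, and choose the $x_i$ arbitrarily with $\nx_i=\neg x_i$. Hence $\opt\le 1$. If $\varphi$ is satisfiable, the whole formula is satisfiable: set $z_1=z_2=0$ and take a satisfying assignment of $\varphi$. Then $\opt=0$, no application is in an optimal solution, and we run the detector with $k=1$. Property \textbf{(G\ref{itm:G1})} forces the output $S$ to avoid $\eqq(z_1,z_2)$, since the empty set is the only optimal solution.

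If $\varphi$ is unsatisfiable, the formula itself is unsatisfiable: with $z_1=z_2$, complementivity lets us take $z_1=z_2=0$, and then a satisfying assignment would give a satisfying assignment of $\varphi$. So $\opt=1$, attained by $\{\eqq(z_1,z_2)\}$. Every other solution has weight at least $W>c$, so $\eqq(z_1,z_2)$ is $c$-essential, and \textbf{(G\ref{itm:G2})} with $k=1$ forces $\eqq(z_1,z_2)\in S$. Checking membership in $S$ therefore decides $3$-SAT, and all instances satisfy $\opt\le1$.

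The main obstacle is the step that uses complementivity in place of constants, namely the argument that a satisfying assignment with $z_1=z_2=1$ can be flipped without harm. This holds because $\nae_5$, $\neqq$ and $\eqq$ are all complementive. Finally, the bounded weights are removed by duplication via \cref{lem:unweighted-to-weighted}. This preserves $\opt\le 1$, since copies of the weight-$1$ head stay single.
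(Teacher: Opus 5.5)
Your proposal is correct and takes the same approach as the paper, once you delete the abandoned detour with $z$, $z'$ and the ``$\false$-like pinning''. The remaining construction is the paper's: $\neqq(x,\nx)$ and the clause gadgets $\nae_5(\ell_1,\ell_2,\ell_3,z_1,z_2)$ get weight exceeding $c$, the single special $\eqq(z_1,z_2)$ gets weight $1$, complementivity replaces constants, and the detector is run with $k=1$; your integer choice $W=\lceil c\rceil+1$ is even slightly more careful than the paper's weight $c+1$.
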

\begin{proof}
    For a given constant~$c \in \Rone$, we provide a reduction from $3$-SAT. First, we describe how to transform a $3$-SAT formula~$F$ over variables~$\X$ into a weighted~$\{ \nae_5, \neqq, \eqq \}$-formula~$F'$ over variables $\X'$.
    \begin{itemize}
        \item For every variable~$x \in \X$, we add variables~$x$ and~$\nx$ to~$\X'$, and, to enforce that~$\nx = \neg x$ in any satisfying assignment of~$F'$, we add a constraint application~$(x \neq \nx)$ with weight~$c+1$ to~$F'$.
        \item We add variables~$w_1$ and~$w_2$ to~$\X'$ and we add the constraint application~$(w_1 = w_2)$ with weight~$1$ to~$F'$.
        \item For a given literal $\ell$ occurring in some clause of~$F$, we define $\text{var}(\ell)$ to be the corresponding variable in~$\X'$. That is: if~$\ell$ is a positive variable~$x$, then~$\var(\ell) = x$, and if~$\ell$ is a negated variable~$\neg x$, then~$\var(\ell) = \nx$. Now, for every clause~$(\ell_1 \vee \ell_2 \vee \ell_3)$ in~$F$, we add a constraint application~$\nae(w_1, w_2, \var(\ell_1), \var(\ell_2), \var(\ell_3))$ with weight~$c+1$ to~$F'$.
    \end{itemize}
    We continue by proving that the output of a~$c$-essential detection algorithm on~$F'$ can be used to determine whether~$F$ is satisfiable. To do so, we first prove the following claim.
    \begin{claim} \label{clm:nae-5-reduction-satisfiability}
        $F$ is satisfiable iff~$F'$ is satisfiable.
    \end{claim}
    \begin{claimproof}
        For one direction, suppose that~$F$ is satisfiable and let~$s$ be an assignment to~$\X$ that satisfies~$F$. We show how this can be extended to an assignment~$s'$ to~$\X'$ that satisfies~$F'$. First, for every $x \in \X$, we set~$s'(x) = s(x)$ and~$s'(\nx) = \neg s(x)$, thereby satisfying all constraint applications of the form~$(x \neq \nx)$. Furthermore, since every clause~$(\ell_1 \vee \ell_2 \vee \ell_3)$ has at least one of its literals assigned to~$1$ by~$s$, at least one of the variables $\var(\ell_1), \var(\ell_2), \var(\ell_3)$ is assigned~$1$ by~$s'$. Next, we set~$s'(w_1) = s(w_2) = 0$, thereby satisfying all constraint applications of the form~$\nae(w_1, w_2, \var(\ell_1), \var(\ell_2), \var(\ell_3))$ and the constraint application~$(w_1 = w_2)$. Thus,~$s'$ satisfies~$F'$.

        For the other direction, suppose that~$F'$ is satisfiable and let~$s'$ be an assignment to~$\X'$ that satisfies~$F'$. Since~$s'$ satisfies the constraint~$(w_1 = w_2)$ in particular, we either have~$s'(w_1) = s'(w_2) = 0$ or~$s'(w_1) = s'(w_2) = 1$. We may assume w.l.o.g.\ that the former is true: if the latter is true, the complement of~$s'$ would be an assignment with~$s'(w_1) = s'(w_2) = 0$ that also satisfies~$F'$, since all constraint applications in~$F'$ are complementive. (Recall the definition of complementive constraints from \cref{sec:preliminaries}.) Now, we define~$s$ to be the restriction of~$s'$ to~$\X$ and show that~$s$ satisfies~$F$.
        
        Since~$s'$ assigns~$w_1$ and~$w_2$ a value of~$0$, while satisfying all constraint applications that are of the form $\nae(w_1, w_2, \var(\ell_1), \var(\ell_2), \var(\ell_3))$, it follows that for all such constraint applications, at least one of the variables~$\var(\ell_1), \var(\ell_2), \var(\ell_3)$ is set to~$1$ by~$s'$. Thus, at least one of the literals~$\ell_1, \ell_2, \ell_3$ equals~$1$ in the assignments~$s'$ and~$s$. Hence,~$s$ satisfies all clauses of~$F$.
    \end{claimproof}

    Now, suppose that~$S'$ is the output of a $c$-essential detection algorithm with~$k=1$ on~$F'$. We use the above claim to show that~$S'$ contains the constraint application~$(w_1 = w_2)$ if and only if~$F$ is unsatisfiable, thereby allowing us to determine the satisfiability of~$F$ by running a $c$-essential detection algorithm on~$F'$. We use the remainder of the proof to prove the two directions of this claim separately.

    For the first direction, suppose that~$F$ is satisfiable. By \cref{clm:nae-5-reduction-satisfiability},~$F'$ is also satisfiable. Hence, the unique optimal solution to the \wmincsp{$\{ \nae_5, \neqq, \eqq \}$}-instance~$F'$ is simply the empty set of size~$0$. Since~$0 \leq k = 1$,~$S$ has to satisfy Property~\textbf{(G\ref{itm:G1})} and must be a subset of some optimal solution. Therefore,~$S$ is the empty set and does not contain the constraint application~$(w_1 = w_2)$.

    For the other direction, suppose that~$F$ is not satisfiable. By \cref{clm:nae-5-reduction-satisfiability},~$F'$ is also not satisfiable. However, note that~$F'-\{(w_1 = w_2)\}$ is satisfiable: any arbitrary assignment~$s$ to the variables in~$\X$ can be extended to a satisfying assignment for~$F'-\{(w_1 = w_2)\}$ by setting~$x'$ to~$\neg s(x)$ for every~$x \in \X$ and setting~$w_1$ to~$0$ and~$w_2$ to~$1$. Since every constraint application besides the weight-$1$ constraint application~$(w_1 = w_2)$ has weight~$c+1$, every solution without~$(w_1 = w_2)$ has a weight of at least~$c+1$. Hence,~$\{(w_1 = w_2)\}$ is the unique optimal solution to~$F'$ and~$(w_1 = w_2)$ is even $c$-essential. Since~$k = 1$ equals the value of an optimal solution to~$F_2$,~$S$ must satisfy Property~\textbf{(G\ref{itm:G2})} and contain all $c$-essential constraint applications. In particular, it must contain the constraint application~$(w_1 = w_2)$.

    This shows that the satisfiability of~$F$ can be determined from the output of a $c$-essential detection algorithm on~$F'$. Since it is NP-hard to determine the satisfiability of~$F$, this shows that no polynomial-time~$c$-essential detection algorithm exists for \wmincsp{$\{ \nae_5, \neqq, \eqq \}$} unless P=NP. By \cref{lem:unweighted-to-weighted}, the same holds for the unweighted variant of the problem. Moreover, the value of an optimal solution to~$F'$ is at most~$1$ in both of the cases above, so this even holds for the restriction of \mincsp{$\{ \nae_5, \neqq, \eqq \}$} where optimal solutions are restricted to be of size at most~$1$.
\end{proof}

\subsubsection{Essential implementation of the canonical constraint set}
\label{sssec:appendix-other-implementation}
In this section, we show that every constraint set that is neither $0$-valid, nor $1$-valid, nor bijunctive, nor affine, nor weakly positive, nor weakly negative provides an essential implementation of~$\{ \nae_5, \neqq, \eqq \}$. First, we show that it suffices to prove that~$\nae_3$ and~$\neqq$ are implemented, by proving \cref{lem:neq-implements-eq,lem:nae3-implements-nae5} below.
\begin{lemma} \label{lem:neq-implements-eq}
    $\neqq \imp \eqq$.
\end{lemma}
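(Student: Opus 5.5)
We must show $\neqq \imp \eqq$, i.e.\ exhibit a collection of $\neqq$ constraint applications over $x,y$ and dummy variables that satisfies \textbf{(I\ref{itm:I1})} and \textbf{(I\ref{itm:I2})}. The natural candidate is the path of two inequalities through one dummy variable: $(x \neq v) \wedge (v \neq y)$, with dummy variable $v$ and the first constraint application $(x \neq v)$ as head.

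For \textbf{(I\ref{itm:I1})}, take an assignment to $\{x,y\}$. If $x = y$, extend it by setting $v := 1 - x$; then both $x \neq v$ and $v \neq y$ hold. Conversely, in any satisfying extension both $v \neq x$ and $v \neq y$ hold. Over the Boolean domain this forces $x = 1 - v = y$. So an assignment extends to a satisfying assignment of the implementation if and only if it satisfies $(x = y)$.

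For \textbf{(I\ref{itm:I2})}, we must show that every assignment to $\{x,y\}$ extends to one satisfying the non-head application $(v \neq y)$. This holds by setting $v := 1 - y$, regardless of the value of $x$.

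Both properties are immediate, so there is no real obstacle. The only choice requiring care is the head: it must be one of the two applications so that removing it leaves a single constraint that is always satisfiable. By symmetry, either choice works.
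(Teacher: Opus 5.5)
Your proof is correct and takes the same approach as the paper: the implementation $(x \neq v) \wedge (v \neq y)$ with dummy variable $v$ and head $(x \neq v)$. You verify property (I1) through the forced equality $x = 1-v = y$ and property (I2) by setting $v := 1-y$, and you spell out both directions of (I1) a bit more explicitly than the paper does.
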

\begin{proof}
    We show that~$(x \neq v) \wedge (v \neq y)$, using dummy variable~$v$ and head~$(x \neq v)$, is an essential implementation of $(x = y)$. Clearly,~$(x \neq v) \wedge (v \neq y)$ is only satisfiable if~$x=y$, which proves Property~\textbf{(I\ref{itm:I1})}. Also note that for any assignment to~$x$ and~$y$, there is an assignment to~$v$ such that $(v \neq y)$ is satisfied, so Property~\textbf{(I\ref{itm:I2})} is satisfied.
\end{proof}
\begin{lemma} \label{lem:nae3-implements-nae5}
    $\{ \nae_3, \neqq \} \imp \nae_5$.
\end{lemma}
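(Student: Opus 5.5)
The plan is to write down an explicit gadget of $\nae_3$ and $\neqq$ applications and check Properties~\textbf{(I\ref{itm:I1})} and~\textbf{(I\ref{itm:I2})} directly, as in the earlier small implementations. The gadget chains two $\nae_3$ applications through a negated connector. The candidate essential implementation of $\nae_5(x_1,\ldots,x_5)$ uses dummy variables $u,u',w,w'$:
\[
\nae_3(w',x_4,x_5) \wedge \nae_3(x_1,x_2,u) \wedge (u \neq u') \wedge \nae_3(u',x_3,w) \wedge (w \neq w'),
\]
with the first application $\nae_3(w',x_4,x_5)$ as head. The idea behind it is that each link $\nae_3(a,b,u)\wedge(u\neq u')$ acts as an existentially quantified ``$\nae$-merge''. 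Since $u' = \neg u$, the chain behaves like the standard decomposition $\nae_{p}(a,b,\ldots) \equiv \exists u\,.\,\nae_3(a,b,u)\wedge \nae_{p-1}(\neg u,\ldots)$.

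For Property~\textbf{(I\ref{itm:I1})}, I will treat the two directions separately.
\begin{itemize}
\item Suppose all $x_i$ are equal, by complementation say all $0$. Then $\nae_3(0,0,u)$ forces $u=1$, so $u'=0$. Next $\nae_3(0,0,w)$ forces $w=1$, so $w'=0$. The head then reads $\nae_3(0,0,0)$ and fails, so no extension satisfies the gadget.
\item Suppose the $x_i$ are not all equal. I will build an extension greedily from left to right in the chain $\nae_3(x_1,x_2,u)$, $\nae_3(u',x_3,w)$, $\nae_3(w',x_4,x_5)$.
  \begin{itemize}
  \item If $x_1\neq x_2$, then $u$ is free. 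Otherwise $u=\neg x_1$ is forced, so $u'=x_1$, and the remaining requirement becomes $\nae_4(x_1,x_3,x_4,x_5)$, which holds because $x_1=x_2$ and not all $x_i$ are equal.
  \item In the free case, I pick $u'$ so that the remaining chain can still be satisfied, and the same argument repeats for the next $\nae_3$ link.
  \end{itemize}
\end{itemize}
To keep this clean, I will first prove a one-step lemma: $\nae_3(a,b,u)\wedge(u\neq u')\wedge \nae_3(u',c,d)$ extends if and only if $\nae_4(a,b,c,d)$ holds. Applying this lemma twice gives the equivalence for $\nae_5$.

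For Property~\textbf{(I\ref{itm:I2})}, take any assignment to $x_1,\ldots,x_5$ and choose the dummies as follows:
\begin{itemize}
\item set $u=\neg x_1$, which satisfies $\nae_3(x_1,x_2,u)$;
\item set $u'=\neg u$;
\item set $w=\neg u'$, which satisfies $\nae_3(u',x_3,w)$;
\item set $w'=\neg w$.
\end{itemize}
This satisfies every non-head application. The only constraint left possibly unsatisfied is the head, which is exactly why the head must be the final link of the chain.

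The main obstacle is the converse direction of~\textbf{(I\ref{itm:I1})}, namely the case analysis showing that a non-constant assignment always extends. I expect the one-step $\nae_4$ lemma to absorb this. If the bookkeeping becomes awkward, a fallback is to first show $\{\nae_3,\neqq\}\imp\nae_4$ with this gadget. Then applying the same construction to $\nae_4$ and $\nae_3$ together with \cref{lem:transitivity} yields $\nae_5$.
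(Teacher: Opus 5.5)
Your proposal is correct and is essentially the paper's argument. The paper proves $\{\nae_3,\neqq\}\imp\nae_4$ via $\nae(a,b,v_1)\wedge(v_1\neq v_2)\wedge\nae(v_2,c,d)$, then builds $\nae_5$ from $\nae_4$ in the same way and combines the two via \cref{lem:transitivity}; your fallback is exactly that proof. Your single five-application chain is precisely what this composition unfolds to, except that you place the head at the last link rather than the first. That choice is fine, but your remark that the head \emph{must} be the final link overstates it: removing any single application splits the chain into independently satisfiable pieces, so any application could serve as the head.
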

\begin{proof}
    We show this in two parts, which combine to show the lemma due to the transitivity of essential implementations (\cref{lem:transitivity}).

    \begin{claim}
        $\{ \nae_3, \neqq \} \imp \nae_4$.
    \end{claim}
    \begin{claimproof}
        We show that $\nae(a, b, v_1) \wedge (v_1 \neq v_2) \wedge \nae(v_2, c, d)$, using dummy variables~$v_1, v_2$ is an essential implementation of~$\nae(a,b,c,d)$ with the first constraint application as head.

        For one direction of Property~\textbf{(I\ref{itm:I1})}, suppose that~$\nae(a,b,c,d)$ is satisfied. We distinguish two cases.
        \begin{itemize}
            \item Suppose that~$a = b$ and~$c = d$. By assumption, these four variables are not all equal, so we must have~$a \neq c$. Hence, if we set~$v_1 = \neg a$ (thereby satisfying the first constraint application), and we set~$v_2 = \neg v_1$ (thereby satisfying the second constraint application), we get that~$v_2 = a \neq c$, thereby also satisfying the third constraint application.
            \item Otherwise, we have that~$a \neq b$ or~$c \neq d$. Assume w.l.o.g.\ that at least the former holds. Then, the first constraint application is satisfied. To satisfy the remaining two constraint application, we set~$v_1 = c$ and~$v_2 = \neg c$.
        \end{itemize}
        For the other direction of Property~\textbf{(I\ref{itm:I1})}, suppose that~$\nae(a,b,c,d)$ is not satisfied. This means that $a = b = c = d$. Then, to satisfy the first and third constraint application respectively, $v_1$ and $v_2$ must both equal $\neg a$, which means that the second constraint application is not satisfied. So indeed, there is no assignment to the dummy variables $v_1, v_2$ that satisfies all three constraint applications. Hence, an assignment to~$\{a,b,c,d\}$ satisfies~$\nae(a,b,c,d)$ if and only if it can be extended to a satisfying assignment for~$\nae(a, b, v_1) \wedge (v_1 \neq v_2) \wedge \nae(v_2, c, d)$. As such, Property~\textbf{(I\ref{itm:I1})} holds.

        To show that Property~\textbf{(I\ref{itm:I2})} is also satisfied, we need to show that every assignment to~$\{a,b,c,d\}$ can be extended to a satisfying assignment for~$(v_1 \neq v_2) \wedge \nae(v_2, c, d)$. For an arbitrary assignment to the original variables, this can be achieved by setting~$v_1$ equal to~$c$ and~$v_2$ to~$\neg c$.
    \end{claimproof}

    A similar argument yields that $\{ \nae_4, \nae_3, \neqq \} \imp \nae_5$, as $\nae(a,b,c,v_1) \wedge (v_1 \neq v_2) \wedge \nae(v_2, d, e)$ with dummy variables~$v_1, v_2$ is an essential implementation of~$\nae(a, b, c, d, e)$, with the first constraint application as head. This proves the lemma.
\end{proof}
We continue to prove several lemmas that eventually culminate in a proof (of \cref{lem:not-in-any-category-imps-nae5}) that, under the right assumptions,~$\F \imp \{\nae_3, \neqq\}$. We start by following a sequence of implementations that is very similar to one given by Schaefer~\cite{Schaefer78}. In our proofs, we show that Schaefer's implementations are either already essential or, when they are not, provide alternative ones that are essential.

The first lemma below presents an insight that will lead to a useful case distinction in the eventual proof.
\begin{lemma}[Based on {\cite[Lemma 4.3]{Schaefer78}}] \label{lem:not-0-valid-1-valid}
    If~$\F$ is a constraint set that is neither $0$-valid nor $1$-valid, then one of the following must hold.
    \begin{enumerate}[(i)]
        \item $\F \imp \{\true, \false\}$ \label{itm:hardness-when-true-false};
        \item $\F \imp \neqq$ and every $f \in \F$ is complementive. \label{itm:hardness-when-complementive}
    \end{enumerate}
\end{lemma}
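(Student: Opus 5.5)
The plan follows Schaefer's Lemma 4.3 and checks that each implementation used is essential; the argument uses only \cref{obs:quantified-implementation}, \cref{obs:substituted-implementation}, \cref{lem:transitivity} and \cref{lem:implementation-with-constants}. Since $\F$ is not $0$-valid, there is $f_0 \in \F$ with $f_0(K_0) = 0$. Since $\F$ is not $1$-valid, there is $f_1 \in \F$ with $f_1(K_1)=0$. Identify all variables of $f_0$ into a single variable $x$ to obtain the unary constraint $g_0(x) := f_0(x,\ldots,x)$. By \cref{obs:substituted-implementation}, $f_0 \imp g_0$, and $g_0(0)=0$. Define $g_1(x) := f_1(x,\ldots,x)$ in the same way, so $f_1 \imp g_1$ and $g_1(1)=0$.

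\emph{Easy case.} Suppose $g_0(1)=1$ or $g_1(0)=1$. If $g_0(1)=1$, then $g_0 \equiv \true$. We still need $\false$: if $g_1(0)=1$ then $g_1\equiv\false$ and case~(\ref{itm:hardness-when-true-false}) holds. Otherwise $f_1$ is satisfied by neither $K_0$ nor $K_1$. Take a satisfying assignment $s$ of $f_1$ (one exists since $f_1$ is satisfiable), let $V_1$ be the variables set to $1$ and $V_0$ those set to $0$, and form $h(x) := f_1\left[\stack{V_1}{1}\comma\stack{V_0}{x}\right]$. By \cref{lem:implementation-with-constants} and \cref{obs:substituted-implementation}, together with transitivity, $\{f_1,\true\}\imp h$. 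We have $h(0)=f_1(s)=1$ and $h(1)=f_1(K_1)=0$, so $h\equiv\false$. The symmetric case ($g_1(0)=1$, so $\false$ is available) is handled the same way, with $\false$ used to pin $V_0$ instead.

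\emph{Remaining case.} Now neither $f_0$ nor $f_1$ is satisfied by $K_0$ or $K_1$. Pick any $f$ of this kind and a satisfying assignment $s$ with parts $V_1,V_0$, both nonempty. Set $h(x,y) := f\left[\stack{V_1}{x}\comma\stack{V_0}{y}\right]$. Then $h(1,0)=1$, while $h(0,0)=h(1,1)=0$. If also $h(0,1)=1$, then $h\equiv\neqq$ and $f\imp\neqq$ by \cref{obs:substituted-implementation}. If $h(0,1)=0$, then $h$ is the constraint $x\wedge\neg y$. This constraint essentially implements $\true$ via $h(x,v)$, using existential quantification over $v$ through \cref{obs:quantified-implementation}, and $\false$ via $h(v,y)$ quantified over $v$. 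That puts us in case~(\ref{itm:hardness-when-true-false}).

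So if case~(\ref{itm:hardness-when-true-false}) fails, we have $\F\imp\neqq$. The last step shows that either every constraint is complementive or case~(\ref{itm:hardness-when-true-false}) holds anyway. Suppose some $f\in\F$ is not complementive, so there is $s$ with $f(s)=1$ and $f(\overline s)=0$. Let $V_1,V_0$ be the variable sets that $s$ assigns $1$ and $0$, and consider $h(x,y) := f\left[\stack{V_1}{x}\comma\stack{V_0}{y}\right]$ as before. Then $h(1,0)=1$ and $h(0,1)=0$. Add the essential implementation of $x\neq y$ obtained from $\neqq$, with $h(x,y)$ as the head. The resulting conjunction $h(x,y)\wedge (x\neq y)$ equals $x\wedge\neg y$, and it is essential because the non-head part, $x\neq y$ with its implementation, must be extendable for every assignment. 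That is fine if the dummy is chosen appropriately, but this is the delicate point. The cleaner route is to use a fresh dummy: quantify over $y$ in $h(x,y)\wedge(x\neq y)$ with head $h$. For every $x$ we can choose $y=\neg x$, which satisfies the non-head part, so (I\ref{itm:I2}) holds, and the implemented unary constraint is exactly $\true(x)$. Symmetrically, quantifying over $x$ gives $\false(y)$, so case~(\ref{itm:hardness-when-true-false}) holds.

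I expect the main obstacle to be verifying property (I\ref{itm:I2}) in each composite step. In particular, the head must be the $f$-derived application and the $\neqq$ parts must be satisfiable for every assignment; this is exactly where Schaefer's original pp-definitions may fail to be essential.
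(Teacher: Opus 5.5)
Your proof is correct, and its skeleton matches the paper's. First you derive $\F\imp\{\true,\false\}$ or $\F\imp\neqq$; then you show that a non-complementive constraint would force case~(i).

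In the first part, your remaining case is exactly the paper's Case~2. The paper instead splits on whether every constraint is $0$-valid or $1$-valid, which avoids your mixed subcase. Your treatment of that subcase, pinning one side of a satisfying assignment via \cref{lem:implementation-with-constants}, is nevertheless sound.

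The genuine difference is in the complementivity step. The paper proceeds in three stages:
\begin{itemize}
\item It notes that implementing one constant yields the other via $\true(x)\wedge(x\neq y)$.
\item It treats $s\in\{K_0,K_1\}$ separately.
\item It uses the non-implementability of each constant to force the two remaining values of the substituted binary constraint, so that it becomes $\orr_{2,1}$, and only then applies the gadget $(x\vee\neg v)\wedge(x\neq v)$.
\end{itemize}
You observe that this gadget only ever evaluates the head at the two points fixed by $s$ and $\overline{s}$. Since $h(1,0)=1$ and $h(0,1)=0$, the conjunction $h(x,y)\wedge(x\neq y)$ admits only $(x,y)=(1,0)$. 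Quantifying out $y$ or $x$, with $h$ as head and the always-extendable $\neqq$ as the non-head part, therefore yields $\true$ and $\false$ simultaneously.

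This removes the paper's case analysis, its separate treatment of $K_0,K_1$ (your argument works uniformly even if $V_0$ or $V_1$ is empty), and its auxiliary remark. The paper's longer route buys essentially nothing beyond following Schaefer's original argument more closely.

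The ``delicate point'' you worry about is exactly what \cref{lem:transitivity} settles, since $\F\imp\{h,\neqq\}$.
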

\begin{proof}
    We distinguish two cases.
    \begin{itemize}
        \item \textbf{Case 1: every~$f \in \F$ is $0$-valid or $1$-valid.} Since~$\F$ is neither $0$-valid nor $1$-valid, there must be both an~$f_0 \in \F$ that is $0$-valid but not $1$-valid and some~$f_1 \in \F$ that is $1$-valid but not $0$-valid. Clearly, since~$f_0$ is $0$-valid,~$f_0(x, \ldots, x)$ is an essential implementation of~$\false(x)$. Likewise,~$f_1(x, \ldots, x)$ is an essential implementation of~$\true(x)$. As such,~$\F \imp \{ \true, \false \}$.

        \item \textbf{Case 2: There is some~$f \in \F$ that is neither $0$-valid nor $1$-valid.} Let~$s$ be a satisfying assignment for~$f$ and define $f'(x, y) := f \left[ \stack{s\inv(0)}{x} \comma \stack{s\inv(1)}{y}  \right]$. Since~$s$ is a satisfying assignment for~$f$, we find that~$f'(0,1) = 1$. Since~$f$ is neither $0$-valid nor $1$-valid, we find that $f'(0,0) = f'(1,1) = 0$. Thus, depending on the value of~$f'(1,0)$,~$f'$ is either equivalent to~$(\neg x \wedge y)$ or to~$(x \neq y)$. 
        
        In the former case, note that~$f'(x,y)$ with~$y$ treated as dummy variable is an essential implementation of~$\false(x)$ and that~$f'(x,y)$ with~$x$ treated as dummy variable is an essential implementation of~$\true(y)$. Thus, $f \imp\{\true, \false\}$. In the latter case, we have that~$f \imp\neqq$.
    \end{itemize}
    Suppose now that $\F$ does not essentially implement $\{\true, \false\}$. We know by the above case distinction that~$\F \imp \neqq$ and it remains to prove that every constraint in~$\F$ is complementive. Before doing so, note that~$\F$ also does not essentially implement just the constraint $\true$. If it did, it would also essentially implement $\false$, as~$\true(x) \wedge (x \neq y)$ is an essential implementation of~$\false(y)$. Likewise,~$\F$ does not essentially implement~$\false$.
    
    Now, suppose for contradiction that~$f$ is a constraint in~$\F$ over~$\X$ that is not complementive. Let~$s$ be such that it satisfies~$f$ but its complement~$\overline{s}$ does not. If~$s$ is~$K_0$ (the all-zero assignment) or~$K_1$ (the all-one assignment), then ~$f'(x):=f\left[ \stack{\X}{x} \right]$ must be equivalent to~$\true(x)$ or~$\false(x)$, as the complement of~$s$ does not satisfy~$f$. Since~$f \imp f'$ by \cref{obs:substituted-implementation}, this contradicts that~$f$ essentially implements neither~$\true$ nor~$\false$.
    
    As such, we assume from now on that~$s$ is neither~$K_0$, nor~$K_1$. Then, if we define $f'(x,y):=f\left[ \stack{s\inv(0)}{x} \comma \stack{s\inv(1)}{y} \right]$, we know that~$f'(0,1) = 1$ and~$f'(1,0) = 0$.

    We can see that~$f'(0,0) = 1$, since~$f'(x,y)$ with~$x$ treated as dummy variable would be an essential implementation of~$\true(y)$ if~$f'(0,0)=0$. This is not possible as we assumed $\true$ not to be essentially implemented by~$\F$. Likewise, we find that~$f'(1,1)=1$, since~$f'(x,y)$ with~$y$ treated as dummy variable would be an essential implementation of~$\false(x)$ if~$f'(1,1) = 0$. Thus,~$f'(x,y)$ is equivalent to~$(\neg x \vee y)$.

    However, because~$\F \imp \neqq$, we still find that~$\F \imp \true$, since $(x \vee \neg v) \wedge (x \neq v)$ is an essential implementation of~$\true(x)$ with dummy variable~$v$ and head~$(x \vee \neg v)$. This is a contradiction, so we conclude that~$\F$ is complementive if~$\F$ does not essentially implement $\{\true, \false\}$.
\end{proof}
To continue, we first focus on the case that~$\F \imp \{\true, \false\}$. In this setting, the following result provides a good starting point.
\begin{lemma}[Based on {\cite[Lemma 3.2]{Schaefer78}}] \label{lem:not-wp}
    Let $f$ be a Boolean constraint. If it is not weakly negative, then at least one of the following is true: $\{f, \true, \false \} \imp \neqq$ or $\{f, \true, \false \} \imp \orr_2$. If it is not weakly positive, then at least one of the following is true: $\{f, \true, \false \} \imp \neqq$ or $\{f, \true, \false \} \imp \nor_2$.
\end{lemma}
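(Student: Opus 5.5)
By symmetry (swap the roles of $0$ and $1$ throughout, i.e.\ replace $f$ by the constraint obtained by complementing all inputs), it suffices to prove the statement for $f$ not weakly negative. The plan follows Schaefer's Lemma 3.2: use the characterization in \cref{lem:wpos-wneg-characterization} to find a binary restriction of $f$, and then check that the implementations we obtain are essential.

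First, since $f$ is not weakly negative, \cref{lem:wpos-wneg-characterization} gives a set $V\subseteq\X$ that is $1$-closed and $1$-consistent, but such that $K_{1,V}$ does not satisfy $f$. Then $V\neq\X$, because $1$-consistency of $V=\X$ would mean $K_1$ satisfies $f$. By $1$-consistency there is a satisfying assignment $s$ with $s(V)\equiv 1$. Since $K_{1,V}$ is not satisfying, $s$ sets some variables of $\X\setminus V$ to $1$. Among all satisfying assignments that are $1$ on $V$, choose $s$ with a minimal set $W:=s^{-1}(1)\setminus V\neq\emptyset$. If $|W|=1$, say $W=\{x\}$, then $1$-closedness of $V$ gives a satisfying assignment that is $1$ on $V$ and $0$ on $x$. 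I expect I will need to refine the choice (a minimal $W$ with respect to both conditions) so that $W$ splits into two parts $W_1,W_2$. Each part must be zeroable individually by some satisfying assignment that is $1$ on $V$, but zeroing both together must fail.

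Next, define $g(a,b):=\exists(\text{rest})\; f\bigl[\stack{V}{1}\comma\stack{W_1}{a}\comma\stack{W_2}{b}\comma\stack{U}{0}\bigr]$, where $U=\X\setminus(V\cup W)$. By \cref{obs:substituted-implementation,obs:quantified-implementation,lem:implementation-with-constants} and transitivity (\cref{lem:transitivity}), $\{f,\true,\false\}\imp g$. I aim for $g(1,1)=1$ (witnessed by $s$), $g(0,0)=0$ (minimality/choice of $W$), and $g(1,0)=g(0,1)=1$ (the two witnesses from closedness). Then $g\equiv\orr_2$. The delicate point is making the witnesses agree with $0$ on $U$. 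I would handle this by quantifying existentially over the remaining coordinates rather than fixing them to $0$, and by taking minimality over all variables outside $V$. If only one of $g(1,0),g(0,1)$ holds, then $g$ is an implication $\orr_{2,1}$, and I need a further case.

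This case analysis is where I expect the main obstacle. If the natural restriction gives $g(0,0)=1$ or an implication rather than $\orr_2$, Schaefer's argument yields $\neqq$ instead: a restriction with $g(0,1)=g(1,0)=1$ and $g(0,0)=g(1,1)=0$. So the plan is to enumerate the four values of $g$ under the constraints forced by the construction and verify that each outcome is $\orr_2$, $\neqq$, or contradicts the choice of $V$, $W$, or $s$. The ``essential'' part is straightforward, since every implementation used is a single constraint application (the head) plus $\true/\false$ pins, which are handled by \cref{lem:implementation-with-constants}.
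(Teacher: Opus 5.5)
\textbf{There is a genuine gap: the restriction $g$ you construct cannot take the values you want.} With $V\to 1$ and $U\to 0$, the assignment behind $g(1,0)$ is $1$ exactly on $V\cup W_1$, so its set of ones outside $V$ is $W_1$. If $\emptyset\neq W_1\subsetneq W$, this assignment is non-satisfying by the very minimality of $W$ that you use to get $g(0,0)=0$. The same holds for $g(0,1)$. So your $g$ is forced to be $g(a,b)\equiv a\wedge b$, which is neither $\orr_2$ nor $\neqq$ and contradicts nothing about $V$, $W$ or $s$. Hence the case enumeration you plan cannot close. Quantifying existentially over $U$ instead breaks the other value. A satisfying assignment that is $1$ on $V$ and $0$ on $W$ but has ones inside $U$ does not conflict with the minimality of $W$, so $g(0,0)=0$ is then no longer guaranteed.

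\textbf{The missing idea} is to use a \emph{second} satisfying assignment and to split the variables by the \emph{pair} of values they receive, not by membership in $V,W,U$. Pick $z\in W$. By $1$-closedness there is a satisfying $t$ with $t\equiv 1$ on $V$ and $t(z)=0$. The componentwise minimum $s\wedge t$ is $1$ on $V$, and its ones outside $V$ lie in $W\setminus\{z\}$. So $s\wedge t$ does not satisfy $f$: it is either $K_{1,V}$ or it contradicts the minimality of $W$. Now substitute as follows, according to $(s(v),t(v))$:
\begin{itemize}
\item $0$ for variables with value $(0,0)$;
\item $1$ for those with $(1,1)$ (this class contains $V$);
\item $x$ for those with $(1,0)$;
\item $y$ for those with $(0,1)$.
\end{itemize}
The $x$-class contains $z$. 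The $y$-class is nonempty, since otherwise $s\wedge t=t$. The resulting $f'$ satisfies
\begin{itemize}
\item $f'(1,0)=1$, since this assignment is $s$;
\item $f'(0,1)=1$, since it is $t$;
\item $f'(0,0)=0$, since it is $s\wedge t$.
\end{itemize}
Hence $f'\equiv\orr_2$ or $f'\equiv\neqq$, depending on $f'(1,1)$. Moreover $\{f,\true,\false\}\imp f'$ by \cref{obs:substituted-implementation,lem:implementation-with-constants,lem:transitivity}. Note that the variables of $U$ on which $t$ is $1$ go to the free variable $y$, not to $0$; this resolves your ``delicate point''. This is exactly the paper's proof, whose set $W$ is the zero set $U$ of your $s$, chosen maximal. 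There, $\neqq$ arises simply as the case $f'(1,1)=0$ and needs no separate analysis. Your symmetry reduction and your remark that essentiality is automatic (one head plus constant pins) are fine.
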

\begin{proof}
    We prove the statement that assumes that $f$ is not weakly negative. The proof for the second statement follows analogously. It follows from \cref{lem:wpos-wneg-characterization} that there is some set of variables~$V$ that is $1$-consistent and $1$-closed for~$f$ but such that~$K_{1,V}$ does not satisfy~$f$. (Recall the definition of~$K_{1,V}$ from \cref{sec:preliminaries} as the assignment that assigns variables to~$1$ if and only if they are in~$V$). If we let~$\X$ denote the full set of variables that~$f$ is defined over, we can define~$\overline{V} := \X \setminus V$. Now, let~$W$ be a largest subset of~$\overline{V}$ such that $K_{0,W}$ satisfies~$f$. Note that~$|W| \geq 1$ because~$V$ is $1$-closed and that~$|W| <|\overline{V}|$ because~$K_{1,V}$ does not satisfy~$f$.

    Next, let~$z \in \overline{V} \setminus W$ and let~$s$ be a satisfying assignment for~$f$ such that~$s(z) = 0$ and $s(v) = 1$ for all~$v \in V$. Note that such an assignment exists because~$V$ is $1$-consistent and $1$-closed.

    Now, let~$W_0$ and~$W_1$ be the subsets of~$W$ that are assigned values of~$0$ and~$1$ by~$s$ respectively. Similarly, let~$\overline{W}_0$ and~$\overline{W}_1$ be the subsets of~$\X \setminus W$ that are assigned values of~$0$ and~$1$ by~$s$ respectively. Note that~$z \in \overline{W}_0$.

    Using the notation defined so far, we can define $f'(x,y) := f\left[ \stack{W_0}{0} \comma \stack{W_1}{x} \comma \stack{\overline{W}_0}{y} \comma \stack{\overline{W}_1}{1} \right]$. First note that $\{f, \true, \false\} \imp f'$ by \cref{lem:implementation-with-constants} and by transitivity of essential implementations. Next note that $f'(0,1) = f'(1,0) = 1$ because~$K_{0,W}$ and~$s$ satisfy~$f$. Also, by maximality of~$W$,~$f'(0,0) = 0$. Thus,~$f'$ is equivalent to either~$\orr_2$ or~$\neqq$ depending on the value of $f'(1,1)$.
\end{proof}
Thus, if~$\F$ is neither weakly positive nor weakly negative, then~$\{\F, \true, \false\}$ essentially implements~$\neqq$ or~$\{\orr_2, \nor_2\}$. This makes the following result quite useful.
\begin{lemma}[Based on {\cite[Negated Substitution Lemma]{Schaefer78}}] \label{lem:negated-substitution-lemma}
    Let~$f$ be a Boolean constraint over variables $\{x_1, \ldots, x_n\}$, let~$i \in [n]$, and let~$\fneq$ either be~$\{\neqq\}$ or~$\{\orr_2, \nor_2\}$. Then,~$\fneq \cup\{f\} \imp f\left[ \stack{x_i}{\neg x_i} \right]$.
\end{lemma}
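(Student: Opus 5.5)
We build an explicit essential implementation of $f[\stack{x_i}{\neg x_i}]$ whose head is an application of $f$ itself. Then the auxiliary constraints only have to enforce that a fresh dummy variable $v$ equals $\neg x_i$, and they must be satisfiable for every assignment to the original variables (Property \textbf{(I\ref{itm:I2})}).

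\textbf{Case $\fneq = \{\neqq\}$.} We take
\[
f(x_1, \ldots, x_{i-1}, v, x_{i+1}, \ldots, x_n) \wedge (v \neq x_i),
\]
with dummy variable $v$ and the $f$-application as head. For Property \textbf{(I\ref{itm:I1})}: the second constraint forces $v = \neg x_i$, so an assignment to $\{x_1,\ldots,x_n\}$ extends to a satisfying assignment exactly when $f$ is satisfied with $x_i$ replaced by $\neg x_i$. For Property \textbf{(I\ref{itm:I2})}: for any assignment, setting $v := \neg x_i$ satisfies the only non-head constraint.

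\textbf{Case $\fneq = \{\orr_2, \nor_2\}$.} We take
\[
f(x_1, \ldots, x_{i-1}, v, x_{i+1}, \ldots, x_n) \wedge (v \vee x_i) \wedge (\neg v \vee \neg x_i),
\]
again with the $f$-application as head. The conjunction $\orr_2(v,x_i) \wedge \nor_2(v,x_i)$ is equivalent to $v \neq x_i$, so Property \textbf{(I\ref{itm:I1})} follows exactly as before. For Property \textbf{(I\ref{itm:I2})}, choosing $v := \neg x_i$ satisfies both non-head clauses simultaneously, whatever the value of $x_i$.

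The point to get right is the choice of head. The obstacle is making \textbf{(I\ref{itm:I2})} hold. If the head were one of the $\neqq$, $\orr_2$ or $\nor_2$ clauses, the remaining $f$-application would have to be satisfiable for every assignment, which fails in general. Making $f$ the head avoids this, because the remaining constraints are jointly satisfiable via $v = \neg x_i$. A degenerate point is when $x_i$ occurs in $f$'s argument list several times. The notation $f[\stack{x_i}{\neg x_i}]$ substitutes every occurrence, so we place $v$ at every occurrence; the argument is unchanged. The implementation has constant size, so it is a valid essential implementation.
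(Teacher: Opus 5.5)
Your proposal is correct and follows essentially the same route as the paper. The paper uses the identical implementations $f(x_1,\ldots,x_{i-1},v,x_{i+1},\ldots,x_n)\wedge(v\neq x_i)$ and $f(x_1,\ldots,x_{i-1},v,x_{i+1},\ldots,x_n)\wedge(v\vee x_i)\wedge(\neg v\vee\neg x_i)$, with the $f$-application as head; you additionally spell out the \textbf{(I\ref{itm:I1})} and \textbf{(I\ref{itm:I2})} checks that the paper leaves as ``easy to verify.''
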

\begin{proof}
    It is easy to verify that $f(x_1, \ldots, x_{i-1}, v,x_{i+1}, \ldots, x_n) \wedge(v \neq x_i)$ is an essential implementation of~$f\left[ \stack{x_i}{\neg x_i} \right]$ with dummy variable~$v$ and the first constraint application as head. Likewise, because~$(x_i \neq v)$ is satisfied if and only if~$(v \vee x_i) \wedge (\neg v \vee \neg x_i)$ is satisfied, it is easy to see that $f(x_1, \ldots, x_{i-1}, v,x_{i+1}, \ldots, x_n) \wedge (v \vee x_i) \wedge (\neg v \vee \neg x_i)$ is an essential implementation of~$f\left[ \stack{x_i}{\neg x_i} \right]$ with dummy variable~$v$ and the first constraint application as head.
\end{proof}
The above result will prove to be useful several times in the remainder of this section and we start by using it to prove the following.
\begin{lemma} \label{lem:not-wp-imps-neq}
    Let~$\F$ be a constraint set that is neither weakly positive, nor weakly negative. Then,~$\F \cup \{\true, \false\} \imp \neqq$.
\end{lemma}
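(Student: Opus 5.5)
By \cref{lem:not-wp}, applied to a constraint $f_1 \in \F$ that is not weakly negative and to a constraint $f_2 \in \F$ that is not weakly positive, we obtain two facts. First, $\F \cup \{\true,\false\}$ essentially implements $\neqq$ or $\orr_2$. Second, it essentially implements $\neqq$ or $\nor_2$. If either alternative yields $\neqq$ we are done, so the only remaining case is $\F \cup \{\true,\false\} \imp \{\orr_2, \nor_2\}$. Here the plan is to build $\neqq$ from $\orr_2$ and $\nor_2$ by a direct essential implementation, and then conclude with \cref{lem:transitivity}.

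The natural candidate is $(x \vee y) \wedge (\neg x \vee \neg y)$. This satisfies Property~\textbf{(I\ref{itm:I1})}, but it fails Property~\textbf{(I\ref{itm:I2})}: whichever clause we choose as head, the remaining clause is not satisfied by every assignment to $\{x,y\}$. We therefore need a dummy variable to absorb the non-head part. I would try
\[
(x \vee v) \wedge (\neg v \vee \neg y)\wedge \ldots
\]
with $v$ forced to equal $\neg x$ only through the head. Concretely, take the head $\nor_2(x, v)$, i.e.\ $(\neg x \vee \neg v)$, together with the non-head constraint applications $(x \vee v)$ and $\orr_2(v, y)$, $\nor_2(v,y)$ arranged so that $v = \neg y$.

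This gets stuck at the same place: the non-head constraints $(v\vee y)\wedge(\neg v\vee\neg y)$ are exactly $v \neq y$, which is satisfiable for every $y$. Together with $(x \vee v)$, however, they force $x=1$ when $y=1$, so Property~\textbf{(I\ref{itm:I2})} fails on that part. The fix is to keep the non-head constraints as chains of dummies that are always extendable. I would use
\[
\orr_2(x,v)\wedge \nor_2(v,w)\wedge\orr_2(w,u)\wedge\ldots
\]
but it is simpler to use the constants. Note that $\orr_2(x,y)$ alone is already one half of $\neqq$, and $\nor_2$ is the other half. Instead I would invoke \cref{lem:negated-substitution-lemma} with $\fneq = \{\orr_2,\nor_2\}$ applied to $f = \eqq$. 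This reduces the task to showing $\F\cup\{\true,\false\} \imp \eqq$, and then $\eqq[\stack{y}{\neg y}] \equiv \neqq$.

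So the key remaining step is $\{\orr_2,\nor_2\} \imp \eqq$. Here I would imitate the implementation in \cref{lem:wp-imps-or3-1}, which uses a cycle through dummies with a single head. Obtaining $\orr_{2,1}$ from $\orr_2,\nor_2$ is itself a negated-substitution step: \cref{lem:negated-substitution-lemma} applied to $\orr_2$ gives $\orr_2[\stack{x}{\neg x}] = \orr_{2,1}$. Then $\orr_{2,1} \imp \eqq$ by the five-clause implementation in the proof of \cref{lem:wp-imps-or3-1}.

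Chaining these via \cref{lem:transitivity} gives $\F\cup\{\true,\false\}\imp\{\orr_2,\nor_2\}\imp \orr_{2,1}\imp\eqq$. One more application of \cref{lem:negated-substitution-lemma} then yields $\neqq$. The main obstacle I anticipate is checking that \cref{lem:negated-substitution-lemma} may legitimately be applied to constraints such as $\eqq$ that are themselves only implemented, not members of $\F$; transitivity should handle this. A further check is that substituting $\neg y$ in the equality indeed produces exactly $(x\neq y)$.
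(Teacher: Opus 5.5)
Your final route is correct and is essentially the paper's: after the case split from \cref{lem:not-wp}, the paper also obtains $\orr_{2,1}$ from $\{\orr_2,\nor_2\}$ via \cref{lem:negated-substitution-lemma} and then uses the same two-dummy five-clause gadget. The only difference is that it writes the gadget $(\neg v_1 \vee v_2) \wedge (\neg x \vee v_1) \wedge (y \vee v_1) \wedge (x \vee \neg v_2) \wedge (\neg y \vee \neg v_2)$ for $\neqq$ directly, which is exactly your $\eqq$-gadget from \cref{lem:wp-imps-or3-1} with $y$ negated inline. Your worry about applying \cref{lem:negated-substitution-lemma} to the merely implemented $\eqq$ is unfounded, since that lemma holds for an arbitrary constraint $f$ and \cref{lem:transitivity} composes the steps; the abandoned attempts in your second and third paragraphs should simply be dropped.
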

\begin{proof}
    \cref{lem:not-wp} implies that~$\F \cup \{\true, \false\} \imp \neqq$ or~$\F \cup \{\true, \false\} \imp \{\orr_2, \nor_2\}$. For the former case, this immediately shows the lemma, and for the latter case, we show that~$\{\orr_2, \nor_2\} \imp \neqq$. Since~$\orr_{2,1}$ is obtained by negating one of the two variables in an~$\orr_2$~constraint, it follows from \cref{lem:negated-substitution-lemma} that~$\{\orr_2, \nor_2\} \imp \orr_{2,1}$. Thus, it suffices to show that~$\{\orr_2, \nor_2, \nor_{2,1}\} \imp \neqq$.
    
    We show that $f:=(\neg v_1 \vee v_2) \wedge (\neg x \vee v_1) \wedge (y \vee v_1) \wedge(x \vee \neg v_2) \wedge (\neg y \vee \neg v_2)$ is an essential implementation of~$(x \neq y)$ with dummy variables~$v_1$, and~$v_2$ and the first constraint application as head. To see that $f$ is satisfiable if and only if~$s(x) \neq s(y)$, we distinguish two cases. 
    
    First, consider an assignment~$s \colon \{x,y,v_1,v_2\} \rightarrow \{0,1\}$ in which~$s(x) = 1$. For any such assignment to satisfy the second constraint application of~$f$, it must satisfy~$s(v_1) = 1$. Then, to satisfy the first constraint application of~$f$, it must satisfy~$s(v_2) = 1$, so to satisfy the fifth constraint application, it must satisfy~$s(y) = 0$. This assignment to the four variables satisfies all constraint applications of~$f$ (and is thus the only satisfying assignment of~$f$ in which~$x$ is assigned~$1$).
    
    For the second case, consider an assignment~$s \colon \{x,y,v_1,v_2\} \rightarrow \{0,1\}$ in which~$s(x) = 0$. For this assignment to satisfy the fourth constraint application of~$f$, it must satisfy~$s(v_2) = 0$. Then, to satisfy the first constraint application of~$f$, it must satisfy~$s(v_1) = 0$, and to satisfy the third constraint application of~$f$, it must satisfy~$s(y) = 1$. This assignment to the four variables satisfies all constraint applications of~$f$  (and is thus the only satisfying assignment of~$f$ in which~$x$ is assigned~$0$).
    
    Thus,~$f$ is satisfiable if and only if~$s(x) \neq s(y)$, which shows Property~\textbf{(I\ref{itm:I1})} to hold. To see that Property~\textbf{(I\ref{itm:I2})} is satisfied, note that any assignment to~$x$ and~$y$ can be extended to a satisfying assignment for the last four constraint applications of~$f$ by setting~$v_1$ to~$1$ and~$v_2$ to~$0$.
\end{proof}
So far, we have used the assumptions that constraints are not $0$-valid, not $1$-valid, not weakly positive, and not weakly negative to show that certain essential implementations must exist. We continue by doing the same for the assumption that a constraint is not affine.
\begin{lemma}[Based on {\cite[Lemma 3.3]{Schaefer78}}] \label{lem:not-affine}
    Let~$f$ be a Boolean constraint. If~$f$ is not affine, then there is some Boolean constraint~$g$ over four variables such that~$\{f, \neqq\} \imp g$, and~$g(0,0,0,0) = g(0,0,1,1) = g(0,1,0,1) = 1$, and~$g(0,1,1,0) = 0$.
\end{lemma}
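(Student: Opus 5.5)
We would follow Schaefer's original argument and check that every step is an essential implementation. By \cref{lem:affine-characterization}, since $f$ (over variables $\X$) is not affine, there are satisfying assignments $s_1,s_2,s_3$ of $f$ such that $s_1\oplus s_2\oplus s_3$ does not satisfy $f$. Partition $\X$ into at most eight classes $V_{abc}$ for $a,b,c\in\{0,1\}$, where $V_{abc}$ consists of the variables $x$ with $(s_1(x),s_2(x),s_3(x))=(a,b,c)$. Some classes may be empty; this is harmless, since the substitution notation simply ignores them.

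Introduce four new variables $w,x,y,z$. For each class we pick a literal $\lambda_{abc}$ over $\{w,x,y,z\}$ whose value is $a$ at $(0,0,0,0)$, $b$ at $(0,0,1,1)$, $c$ at $(0,1,0,1)$, and $a\oplus b\oplus c$ at $(0,1,1,0)$.
\begin{itemize}
\item For $a=0$, take $\lambda_{000}=w$, $\lambda_{001}=x$, $\lambda_{010}=y$ and $\lambda_{011}=z$. At $(0,1,1,0)$ these take the values $0,1,1,0$, which equal $b\oplus c$.
\item For $a=1$, take $\lambda_{1bc}=\neg\lambda_{0\bar b\bar c}$, where $\bar b=1-b$ and $\bar c=1-c$. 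A direct check shows this again yields $a,b,c,a\oplus b\oplus c$ at the four points.
\end{itemize}
Define $g(w,x,y,z)$ to be $f$ with every variable of $V_{abc}$ replaced by $\lambda_{abc}$. Then $g(0,0,0,0)=f(s_1)=1$, $g(0,0,1,1)=f(s_2)=1$, $g(0,1,0,1)=f(s_3)=1$, and $g(0,1,1,0)=f(s_1\oplus s_2\oplus s_3)=0$, as required.

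It remains to show $\{f,\neqq\}\imp g$, which we do in two stages. First, let $g'$ be the constraint obtained from $f$ by identifying all variables of each class $V_{abc}$ with a single fresh variable $u_{abc}$, and by identifying $u_{1bc}$ with $u_{0\bar b\bar c}$ as the same variable. By \cref{obs:substituted-implementation}, $f\imp g'$. At this point $g$ is obtained from $g'$ only by negating some variables, namely those standing in for the classes $V_{1bc}$. To realise this, we first keep the classes $V_{1bc}$ as separate variables, then apply \cref{lem:negated-substitution-lemma} with $\fneq=\{\neqq\}$ once per negated class, and only afterwards identify variables using \cref{obs:substituted-implementation}. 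Chaining these constantly many essential implementations via \cref{lem:transitivity} gives $\{f,\neqq\}\imp g$.

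The only delicate point is the order in which negation and identification are applied. Negating a variable after it has been merged with an unnegated one would be wrong. We therefore first collapse each class $V_{abc}$ to its own variable $u_{abc}$. Next we negate the four variables $u_{1bc}$ using \cref{lem:negated-substitution-lemma}, which applies to a single variable occurrence at a time; after collapsing, each $u_{abc}$ is one variable, so this is fine. Finally we identify $\neg u_{1bc}$'s underlying variable with $u_{0\bar b\bar c}$ and rename the results to $w,x,y,z$. Since each step is an essential implementation and the number of steps is bounded by a constant, transitivity yields the claim.
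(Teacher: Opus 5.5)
Your proof is correct and is essentially the paper's. The paper negates every variable that your $s_1$ sets to $1$ via \cref{lem:negated-substitution-lemma}, and then collapses the variables into four classes according to their values under the two shifted assignments; this produces exactly your $g$ with $\lambda_{1bc}=\neg\lambda_{0\bar b\bar c}$. The only difference is that you collapse each class to one variable before negating, using your final ordering rather than the premature merge in your first description of $g'$, which is equally valid and needs at most four negations.
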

\begin{proof}    
    Since~$f$ is not affine, it follows from \cref{lem:affine-characterization} that there are three satisfying assignments~$s_0, s_1, s_2$ for~$f$ such that~$s_0 \oplus s_1 \oplus s_2$ does not satisfy~$f$. Obtain~$f'$ from~$f$ by negating all occurrences of variables that are assigned~$1$ in~$s_0$. By \cref{lem:negated-substitution-lemma}, $\{f, \neqq\} \imp f'$. Note that an assignment~$s$ satisfies $f'$ if and only if~$s \oplus s_0$ satisfies~$f$. As such,~$s_0 \oplus s_0 = K_0$, $s_1':=s_1 \oplus s_0$, and~$s_2' := s_2 \oplus s_0$ are all satisfying assignments of~$f'$, whereas~$s_0 \oplus s_1 \oplus s_2 \oplus s_0 = s_1 \oplus s_2 = s_1' \oplus s_2'$ is not.

    For~$i,j \in \{0,1\}$, we define~$V_{i,j}$ to be the set of variables~$x$ that satisfy~$s_1'(x) = i$ and~$s_2'(x) = j$. Then, we can define~$g(w,x,y,z) = f'\left[ \stack{V_{0,0}}{w} \comma \stack{V_{0,1}}{x} \comma \stack{V_{1,0}}{y} \comma \stack{V_{1,1}}{z} \right]$. By \cref{obs:substituted-implementation},~$g$ is essentially implemented by~$f'$ and, by transitivity, it is also implemented by $\{f, \neqq\}$. Observe that~$g(0,0,0,0) = g(0,0,1,1) = g(0,1,0,1) = 1$ because $K_0$,~$s_1'$ and~$s_2'$ satisfy~$f'$. Finally, observe that~$g(0,1,1,0)=0$ because $s_1' \oplus s_2'$ does not satisfy~$f'$.
\end{proof}
By including the constraint~$\false$, the above result can be extended as follows.
\begin{lemma}[Based on {\cite[Lemma 3.3]{Schaefer78}}] \label{lem:not-affine-uncomplmentive}
    Let~$f$ be a Boolean constraint. If it is not affine, then~$\{f, \false, \neqq\} \imp \{\orr_2, \nor_2\}$.
\end{lemma}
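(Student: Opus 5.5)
Start from \cref{lem:not-affine}: since $f$ is not affine, $\{f,\neqq\}\imp g$ for a four-variable constraint $g$ with $g(0,0,0,0)=g(0,0,1,1)=g(0,1,0,1)=1$ and $g(0,1,1,0)=0$. Fix the first argument to $0$ using \cref{lem:implementation-with-constants} (available since $\false$ is in our set), giving $h(x,y,z):=g(0,x,y,z)$ with $\{f,\false,\neqq\}\imp h$ by transitivity (\cref{lem:transitivity}). Then $h(0,0,0)=h(0,1,1)=h(1,0,1)=1$ and $h(1,1,0)=0$. The goal is to extract from $h$ a two-variable constraint equivalent to $\orr_2$ or $\nor_2$; the other one then comes for free by negating both arguments via \cref{lem:negated-substitution-lemma} with $\fneq=\{\neqq\}$ (since $\nor_2(x,y)=\orr_2(\neg x,\neg y)$).

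Following Schaefer, I will identify or negate variables of $h$ so as to land on a binary constraint. Consider $h'(x,y):=h(x,y,\neg z)$ with $z$ replaced by a variable forced unequal to another, concretely $k(x,y):=h(x,y,v)\wedge(\ldots)$; more simply, substitute $z\mapsto\neg x$ or $z\mapsto \neg y$ is awkward, so instead use $z$ through $\neqq$ to realize $k(x,y):=h(x,y,\neg(x\oplus y))$ is not directly expressible. I will therefore do a case split on the unknown values $h(0,0,1)$, $h(1,1,1)$, $h(0,1,0)$, $h(1,0,0)$. Case on $h(1,1,1)$: if it is $1$, then $k(x,y):=h(x,y,y)$ (via \cref{obs:substituted-implementation}) has $k(0,0)=1$, $k(0,1)=1$, $k(1,1)=1$, and $k(1,0)=h(1,0,0)$; if that is $0$, $k\equiv\orr_{2,1}$-type, i.e.\ $(x\to y)$, and negating $y$ by \cref{lem:negated-substitution-lemma} gives $\nor_2$. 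Otherwise use the symmetric substitution $h(x,y,x)$, whose value at $(0,1)$ is $h(0,1,0)$. Remaining subcases (both $h(1,0,0)=h(0,1,0)=1$, or $h(1,1,1)=0$) are handled by $k(x,y):=h(x,x,y)$ or $h(x,y,\neg x)$ realized through a dummy and $\neqq$: e.g.\ if $h(1,1,1)=0$ then $k(x,y)=h(x,x,y)$ has $k(0,0)=1,k(1,1)=0,k(1,0)=0$, and $k(0,1)=h(0,0,1)$, giving $\false(x)$-like or $\nor_2$-like behaviour; similarly check each remaining subcase yields $\orr_2$, $\nor_2$ or an implication, all convertible via negated substitution. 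In every branch I must make sure the obtained binary constraint is not degenerate (not a constant or unary), which is exactly what the fixed values $h(0,0,0)=h(0,1,1)=h(1,0,1)=1$, $h(1,1,0)=0$ guarantee.

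Each step is a substitution, constant fixing, or negated substitution, all of which are essential implementations by \cref{obs:substituted-implementation}, \cref{lem:implementation-with-constants}, and \cref{lem:negated-substitution-lemma}, so transitivity closes the argument. The main obstacle is the exhaustive case analysis over the four free values of $h$, ensuring every branch produces a genuine two-variable $\orr$/$\nor$/implication rather than something trivial; I would organize it by first testing $h(x,y,y)$ and $h(x,y,x)$, and falling back to $h(x,x,y)$, verifying the truth tables directly.
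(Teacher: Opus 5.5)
\textbf{There is a genuine gap.} Your reduction to $h(x,y,z)=g(0,x,y,z)$ is correct, but the case analysis after it fails, and it cannot be completed with the tools you allow yourself.

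First, a concrete failing branch. In the branch $h(1,1,1)=0$, your substitution $k(x,y)=h(x,x,y)$ gives $k(1,0)=h(1,1,0)=0$ and $k(1,1)=h(1,1,1)=0$. So $k$ is either $\false(x)$ or $\false(x)\wedge\false(y)$. Both are degenerate, and neither is ``$\nor_2$-like''. Your claim that the four known values of $h$ prevent degeneracy is therefore false.

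Second, and more fundamentally, some $h$ defeats your whole toolkit. Substitution, constant fixing and negated substitution (and any composition of them) always produce a \emph{single} application of $h$ whose arguments are constants or literals over the two remaining variables $a,b$. Take $h$ with satisfying set exactly $\{000,011,101\}$. This case really occurs: for instance $f=h$ itself is non-affine, since $000\oplus 011\oplus 101=110$ is not satisfying. For such an application, the binary relation obtained is the preimage of $h^{-1}(1)$ under an affine map $\{0,1\}^2\to\{0,1\}^3$, each of whose coordinates is a constant or a (possibly negated) single variable.
\begin{itemize}
\item If the map is not injective, every fibre has the same size, 2 or 4, so the preimage has even size.
\item If it is injective, a preimage of size $3$ forces the image plane to contain all of $\{000,011,101\}$. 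The plane is then $\{000,011,101,110\}$. Every coordinate is non-constant on this plane, so by pigeonhole two coordinates are literals of the same variable, hence equal or complementary throughout. But in this plane every pair of coordinates takes all four value combinations, a contradiction.
\end{itemize}
So no branch of your scheme yields $\orr_2$, $\nor_2$ or $\orr_{2,1}$ here, because each of these has exactly three satisfying pairs.

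\textbf{The missing ingredient is existential quantification} (\cref{obs:quantified-implementation}), which is what the paper uses. With it you need no information about $h(0,0,1)$ or $h(1,1,1)$ at all:
\begin{itemize}
\item If $h(0,1,0)=0$, then $\exists x\colon h(x,y,z)$ is equivalent to $(\neg y\vee z)$, using $h(0,0,0)=h(1,0,1)=h(0,1,1)=1$ and $h(0,1,0)=h(1,1,0)=0$.
\item If $h(1,0,0)=0$, then symmetrically $\exists y\colon h(x,y,z)$ is equivalent to $(\neg x\vee z)$.
\item If both values are $1$, fixing $z$ to $0$ via \cref{lem:implementation-with-constants} gives $h(x,y,0)\equiv\nor_2$.
\end{itemize}
In every case you obtain $\orr_{2,1}$ or $\nor_2$. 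Then \cref{lem:negated-substitution-lemma} with $\neqq$ supplies both $\orr_2$ and $\nor_2$, as you intended. For the problematic $h$ above, the first case applies.
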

\begin{proof}
    Suppose, for contradiction, that~$\{f, \false, \neqq\}$ does not essentially implement $\{\orr_2, \nor_2\}$. Since~$\neqq \in \{f, \false, \neqq\}$, we know by \cref{lem:negated-substitution-lemma} that~$\{f, \false, \neqq\}$ essentially implements the entire set~$\{\orr_2, \orr_{2,1}, \nor_2\}$ if it essentially implements at least one of the constraints in that set. Thus, we obtain that~$\{f, \false, \neqq\}$ does not essentially implement any of the constraints~$\{\orr_2, \orr_{2,1}, \nor_2\}$.

    By \cref{lem:not-affine}, there is a Boolean constraint~$g$ over four variables  with~$g(0,0,0,0) = g(0,0,1,1) = g(0,1,0,1) = 1$ and~$g(0,1,1,0) = 0$ such that~$\{f, \false, \neqq\} \imp g$. By \cref{lem:implementation-with-constants},~$\{g, \false\}$ essentially implements the constraint $g'(x,y,z):= g(0,x,y,z)$, which satisfies~$g(0,0,0) = g(0,1,1) = g(1,0,1) = 1$, and~$g(1,1,0) = 0$.

    If~$g'(0,1,0) = 0$, then $\exists x \colon g'(x,y,z)$ is equivalent to $(\neg y \vee z)$. By \cref{obs:quantified-implementation},~$g'$ essentially implements this constraint and, by transitivity of essential implementations, so does~$\{f, \false, \neqq\}$. This contradicts the assumption that~$\{f, \false, \neqq\}$ does not essentially implement~$\orr_{2,1}$. Hence,~$g'(0,1,0) = 1$.

    Likewise, if~$g'(1,0,0) = 0$, then $\exists y \colon g'(x,y,z)$ is equivalent to~$(\neg x \vee z)$. By \cref{obs:quantified-implementation},~$g'$ essentially implements this constraint and, by transitivity of essential implementations, so does~$\{f, \false, \neqq\}$. This contradicts the assumption that~$\{f, \false, \neqq\}$ does not essentially implement~$\orr_{2,1}$. Hence,~$g'(1,0,0) = 1$.

    Now, we find that~$g'(x,y,0)$ is equivalent to~$(\neg x \vee \neg y)$. Since $\{f, \false, \neqq\}$ essentially implements this constraint, this contradicts the assumption that~$\{f, \false, \neqq\}$ does not essentially implement~$\nor_{2}$. Thus, we conclude that~$\{f, \false, \neqq\} \imp \{\orr_2, \nor_2\}$.
\end{proof}
Now, the last few results can be combined into the following statement.
\begin{lemma}\label{lem:uncomplementive-imps-or+nor}
    Let~$\F$ be a constraint set that is neither weakly positive, nor weakly negative, nor affine. Then,~$\F \cup \{\true, \false\} \imp \{\orr_2, \nor_2, \neqq\}$.
\end{lemma}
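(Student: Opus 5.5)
The plan is to chain \cref{lem:not-wp-imps-neq} and \cref{lem:not-affine-uncomplmentive} using transitivity of essential implementations (\cref{lem:transitivity}).

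\textbf{Step 1.} Since $\F$ is neither weakly positive nor weakly negative, \cref{lem:not-wp-imps-neq} gives $\F \cup \{\true, \false\} \imp \neqq$. The set $\F \cup \{\true,\false\}$ also trivially implements each of its own members, in particular $\false$ and every $f \in \F$. Hence $\F \cup \{\true,\false\} \imp \{f, \false, \neqq\}$ for every $f \in \F$.

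\textbf{Step 2.} Because $\F$ is not affine, there is some $f \in \F$ that is not affine; fix such an $f$. By \cref{lem:not-affine-uncomplmentive}, $\{f, \false, \neqq\} \imp \{\orr_2, \nor_2\}$.

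\textbf{Step 3.} Applying \cref{lem:transitivity} to Steps 1 and 2 yields $\F \cup \{\true,\false\} \imp \{\orr_2, \nor_2\}$. Combining this with the implementation of $\neqq$ from Step 1 (implementing a set just means implementing each of its members) gives $\F \cup \{\true, \false\} \imp \{\orr_2, \nor_2, \neqq\}$.

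The only point needing care is the intermediate set $\{f, \false, \neqq\}$. It is implemented by $\F \cup \{\true,\false\}$ using the reflexivity remark $f \imp f$ made after \cref{lem:transitivity}, together with the remark that enlarging the implementing set preserves implementations. Note also that, since $\F$ is not affine, the set of non-affine members of $\F$ is nonempty, so $f$ in Step 2 exists. I expect no real obstacle beyond this bookkeeping.
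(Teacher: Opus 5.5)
Your proposal is correct and matches the paper's proof: it chains \cref{lem:not-wp-imps-neq} with \cref{lem:not-affine-uncomplmentive} via transitivity. Fixing a non-affine $f \in \F$ and invoking reflexivity and monotonicity just makes explicit the bookkeeping that the paper's three-line proof leaves implicit.
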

\begin{proof}
    By \cref{lem:not-wp-imps-neq},~$\F \cup \{\true, \false \} \imp \neqq$. By \cref{lem:not-affine-uncomplmentive} we have~$\F \cup \{\false, \neqq\} \imp \{\orr_2, \nor_2\}$. By transitivity of essential implementations, we have~$\F \cup \{\true, \false\} \imp \{\orr_2, \nor_2, \neqq\}$.
\end{proof}
With the next few results, we aim to prove that a constraint set that, additionally, is not bijunctive must essentially implement the constraints~$\orr_3$ and~$\nor_3$. We start with the following lemma.
\begin{lemma}[Based on {\cite[Lemma 3.4]{Schaefer78}}] \label{lem:not-bijunctive}
    Let~$f$ be a Boolean constraint that is not bijunctive. Then, there is some Boolean constraint~$h$ over four variables such that~$\{f, \neqq\} \imp h$, and $h(0,0,0,0) = h(0,0,1,1) = h(0,1,0,1) = 1$, and $h(0,0,0,1) = 0$.
\end{lemma}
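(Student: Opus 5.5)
The plan mirrors the proof of \cref{lem:not-affine}, replacing the affine characterization with the bijunctive one from \cref{lem:bijunctive-characterization}. Since $f$ is not bijunctive, there is a satisfying assignment $s_0$ with change sets $U, V$ for $(f,s_0)$ such that $U \cap V$ is not a change set. Thus $s_0$, $s_0 \oplus K_{1,U}$ and $s_0 \oplus K_{1,V}$ satisfy $f$, while $s_0 \oplus K_{1,U\cap V}$ does not.

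First normalize $s_0$ to the all-zero assignment. Let $f'$ be obtained from $f$ by negating every variable that $s_0$ sets to $1$. By repeated use of \cref{lem:negated-substitution-lemma} with $\fneq = \{\neqq\}$ and transitivity (\cref{lem:transitivity}), we get $\{f, \neqq\} \imp f'$. Moreover, $s$ satisfies $f'$ if and only if $s \oplus s_0$ satisfies $f$. Hence $K_0$, $s_1' := K_{1,U}$ and $s_2' := K_{1,V}$ satisfy $f'$, while $K_{1,U\cap V} = s_1' \wedge s_2'$ (the coordinatewise AND) does not.

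Next, group the variables by their values under $(s_1', s_2')$. For $i,j \in \{0,1\}$, let $V_{i,j}$ be the set of variables $x$ with $s_1'(x)=i$ and $s_2'(x)=j$, and set
\[
h(w,x,y,z) := f'\left[ \stack{V_{0,0}}{w} \comma \stack{V_{0,1}}{x} \comma \stack{V_{1,0}}{y} \comma \stack{V_{1,1}}{z} \right].
\]
By \cref{obs:substituted-implementation} and transitivity, $\{f,\neqq\} \imp h$. Reading off the assignments gives the required values:
\begin{itemize}
    \item $K_0$ yields $h(0,0,0,0)=1$;
    \item $s_1'$ is $1$ exactly on $V_{1,0}\cup V_{1,1}$, so $h(0,0,1,1)=1$;
    \item $s_2'$ is $1$ exactly on $V_{0,1}\cup V_{1,1}$, so $h(0,1,0,1)=1$;
    \item $K_{1,U\cap V}$ is $1$ exactly on $V_{1,1}$, so $h(0,0,0,1)=0$.
\end{itemize}

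The one delicate point is that some $V_{i,j}$ may be empty. In that case the corresponding argument of $h$ is a dummy coordinate on which $h$ does not depend, and the substitution notation, which allows replacing an empty set, should still give a well-defined four-ary constraint with the stated values. The main check is therefore that the Schaefer characterization yields exactly the triple $(s_0, U, V)$ with the stated failure. Everything else is the same bookkeeping as in \cref{lem:not-affine}.
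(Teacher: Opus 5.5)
Your proposal is correct and is essentially the paper's proof: normalize $s_0$ to $K_0$ via \cref{lem:negated-substitution-lemma}, then identify the four blocks $\X\setminus(U\cup V)$, $U\setminus V$, $V\setminus U$, $U\cap V$ to obtain $h$. Your roles of $x$ and $y$ are swapped relative to the paper, which is immaterial since both $h(0,0,1,1)=1$ and $h(0,1,0,1)=1$ are required. Your concern about empty blocks is also harmless: only $V_{0,0}$ can be empty, because $\emptyset$ is always a change set and $U\cap V$ failing to be one forces $U\not\subseteq V$ and $V\not\subseteq U$.
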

\begin{proof}
    Let~$\X$ be the set of variables that~$f$ is defined over. Since~$f$ is not bijunctive, it follows from \cref{lem:bijunctive-characterization} that there is an assignment~$s$ that satisfies~$f$ and two change sets~$U,V \subseteq \X$ for~$(f, s)$ such that~$U \cap V$ is not a change set for~$(f,s)$. We obtain~$f'$ from~$f$ by negating all occurrences of variables that are assigned~$1$ by~$s$. By \cref{lem:negated-substitution-lemma}, $\{f, \neqq\} \imp f'$. Observe now that~$K_0$ satisfies~$f'$ and that~$U$ and~$V$ are change sets for $(f', K_0)$, whereas~$U\cap V$ is not.

    Now, we define $h(w,x,y,z) = f'\left[ \stack{\X \setminus(U \cup V)}{w} \comma~~ \stack{U \setminus V}{x} \comma~~ \stack{V \setminus U}{y} \comma~~ \stack{U \cap V}{z} \right]$ and we note that~$f'\imp h$ by \cref{obs:substituted-implementation}. By transitivity of essential implementations, also~$f \imp h$. Now, observe that $h(0,0,0,0) = 1$ because~$K_0$ satisfies~$f$ and observe that $h(0,0,1,1) = h(0,1,0,1) = 1$ because~$U$ and~$V$ are change sets for~$(f', K_0)$. Finally, observe that $h(0,0,0,1)=0$ because~$U \cap V$ is not a change set for~$(f', K_0)$.
\end{proof}
By including the constraints~$\false$,~$\orr_2$, and~$\nor_2$, the above result can be extended as follows.
\newcommand{\he}{h_{\mathrm{E}}}
\newcommand{\ho}{h_{\mathrm{O}}}
\begin{lemma}[Based on {\cite[Lemmas 3.4 and 3.5]{Schaefer78}}] \label{lem:not-bijunctive-implements-or3}
    Let~$f$ be a Boolean constraint. If~$f$ is not bijunctive, then~$\{f, \neqq, \orr_2, \nor_2, \false\} \imp \{\orr_3, \nor_3\}$.
\end{lemma}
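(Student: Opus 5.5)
The plan is to follow Schaefer's route (Lemmas 3.4 and 3.5): first extract a small ternary relation from $f$, then reduce it by case analysis to a single $3$-clause. Once one $3$-clause with some sign pattern is essentially implemented, every other $3$-clause follows. By \cref{lem:negated-substitution-lemma} with $\fneq=\{\neqq\}$, negating any argument of an essentially implemented constraint is again essentially implemented. Hence from any one of $\orr_{3,0},\orr_{3,1},\orr_{3,2},\orr_{3,3}$ we get $\orr_3$ and $\nor_3$, closing by \cref{lem:transitivity}.

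First I apply \cref{lem:not-bijunctive} to obtain $h$ with $\{f,\neqq\}\imp h$, where $h(0,0,0,0)=h(0,0,1,1)=h(0,1,0,1)=1$ and $h(0,0,0,1)=0$. By \cref{lem:implementation-with-constants} with $\false$, I fix $w=0$ and get a ternary $h'(x,y,z):=h(0,x,y,z)$ with $\{f,\neqq,\false\}\imp h'$. It satisfies $h'(0,0,0)=h'(0,1,1)=h'(1,0,1)=1$ and $h'(0,0,1)=0$. The target is $(x\vee y\vee\neg z)$, which rejects exactly $(0,0,1)$. I then do a case analysis on the four undetermined values $h'(0,1,0)$, $h'(1,0,0)$, $h'(1,1,0)$, $h'(1,1,1)$:
\begin{itemize}
\item If all four are $1$, then $h'\equiv\orr_{3,1}$ and we are done.
\item If $h'(1,1,1)=0$ or $h'(1,1,0)=0$, I substitute constants via \cref{lem:implementation-with-constants}, using $\true$ obtained from $\false$ and $\neqq$ as in \cref{lem:not-0-valid-1-valid}. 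Alternatively I identify variables via \cref{obs:substituted-implementation}, or negate via \cref{lem:negated-substitution-lemma}, to move to a ternary relation with fewer rejected tuples that still rejects only a ``non-bijunctive'' pattern.
\item If $h'(0,1,0)=0$ or $h'(1,0,0)=0$, the analogous negated substitution gives a relation that again has three satisfying tuples pairwise at distance~$2$ and a rejected tuple at distance~$1$ from all of them. This is symmetric to the original situation.
\end{itemize}

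The main obstacle is the step where $h'$ rejects extra tuples beyond $(0,0,1)$. Schaefer handles this by conjoining binary clauses ($\orr_2$, $\nor_2$, $\orr_{2,1}$) on the original variables to carve out the $3$-clause. That is not allowed here, because by Property~\textbf{(I\ref{itm:I2})} every non-head constraint must be satisfiable for every assignment to $\X$. What I would try first is to keep $h'$ as the head and route every binary side constraint through a fresh dummy copy. For example, I would take
\[
h'(x,y,v)\wedge(\neg z\vee v)\wedge(\neg v\vee z)\text{-style gadgets},
\]
built from $\orr_2,\nor_2,\neqq$ as in the proof of \cref{lem:not-wp-imps-neq}, so that the non-head part always admits an extension (e.g.\ by setting the dummy suitably). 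Property~\textbf{(I\ref{itm:I1})} would then still force the dummy to mimic the original variable whenever the head is satisfied. If a given case resists this, I would instead use \cref{obs:quantified-implementation} on a larger product relation $h'(x,y,v)\wedge h'(\cdot)$ sharing dummies, choosing the quantified relation to be exactly the $3$-clause. Finally I would verify (I1) and (I2) for each case's gadget and conclude via transitivity.
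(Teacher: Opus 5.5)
Your preliminary steps are sound: obtaining $h$ from \cref{lem:not-bijunctive}, fixing $w=0$ with $\false$, and at the end moving between sign patterns of a $3$-clause via \cref{lem:negated-substitution-lemma}. You also correctly diagnose the obstacle, namely that binary clauses on the original variables violate Property~\textbf{(I\ref{itm:I2})}. But the proposal stops exactly there, so the core of the proof is missing.

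The case analysis on $h'(0,1,0),h'(1,0,0),h'(1,1,0),h'(1,1,1)$ is never carried out:
\begin{itemize}
\item Substituting constants into, or identifying variables of, the ternary $h'$ lowers its arity, so by itself it cannot yield a $3$-clause.
\item Your third bullet only returns you to an isomorphic configuration. Negations are automorphisms of the cube, and the only tuple within distance one of all of $(0,0,0),(0,1,1),(1,0,1)$ is their majority $(0,0,1)$. So there is no measure of progress, and the reduction can loop.
\item The sample gadget $h'(x,y,v)\wedge(\neg z\vee v)\wedge(\neg v\vee z)$ just forces $v=z$, so it is $h'(x,y,z)$ in disguise. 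More generally, a side clause on a dummy that the non-head constraints force to copy an original variable fails \textbf{(I\ref{itm:I2})} for the same reason as before.
\end{itemize}
What you need is a gadget in which each binary clause contains a dummy that remains free to satisfy it once the head is dropped.

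The paper supplies exactly this, and thereby avoids any case analysis. It also negates $z$, taking $h'(x,y,z):=h(0,x,y,\neg z)$, so that $h'$ accepts the three weight-one tuples and rejects $(0,0,0)$. Conjoining $\nor_2$ on all three pairs then gives the exactly-one-of-three relation $h_{\mathrm{E}}$ \emph{regardless} of the four unknown values. This $h_{\mathrm{E}}$ is not itself essentially implemented, but it serves as a building block. Namely, $\orr_3(x,y,z)$ is the projection onto $x,y,z$ of $h_{\mathrm{E}}(x,v_1,v_4)\wedge h_{\mathrm{E}}(y,v_2,v_4)\wedge h_{\mathrm{E}}(v_1,v_2,v_5)\wedge h_{\mathrm{E}}(v_3,v_4,v_6)\wedge(z\vee v_3)\wedge(\neg z\vee\neg v_3)$. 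Take the single copy $h'(x,v_1,v_4)$ as head. Then \textbf{(I\ref{itm:I2})} only has to be checked for the unique falsifying assignment $x=y=z=0$, since all other assignments extend fully by \textbf{(I\ref{itm:I1})}. That assignment extends via $v_2=v_3=1$ and all other dummies $0$. Without this idea, or fully verified gadgets for every remaining case, your argument does not establish the lemma.
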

\begin{proof}
    First, we prove that~$\{f, \neqq, \orr_2, \nor_2, \false\} \imp \orr_3$. By \cref{lem:not-bijunctive}, there is a Boolean constraint~$h(w,x,y,z)$ with $h(0,0,0,0) = h(0,0,1,1) = h(0,1,0,1) = 1$ and $h(0,0,0,1) = 0$ such that $\{f, \neqq\} \imp h$. By \cref{obs:substituted-implementation,lem:negated-substitution-lemma}, $h \imp h'$ where $h'(x,y,z) := f\left[ \stack{w}{0} \comma \stack{z}{\neg z} \right]$. It follows that $h'(0,0,1) =  h'(0,1,0) = h'(1,0,0) = 1$ and that~$h'(0,0,0) = 0$.

    Now, since $(\neg x \vee \neg y) \wedge (\neg x \vee \neg z) \wedge(\neg y \vee \neg z)$ is satisfied if and only if at most one of~$x$,~$y$, and~$z$ is $1$, we have that $h'(x,y,z) \wedge (\neg x \vee \neg y) \wedge (\neg x \vee \neg z) \wedge(\neg y \vee \neg z)$ is satisfied if and only if exactly one of~$x$,~$y$, and~$z$ is $1$. We use~$\he(x,y,z)$ as shorthand for this conjunction. Although the above conjunction is not an essential implementation of~$\he$, we can use it in another implementation that is essential.

    In particular, one can verify by a case distinction that $\orr_3(x,y,z) \equiv \exists v_1, v_2, v_3, v_4, v_5, v_6: \he(x, v_1, v_4) \wedge \he(y, v_2, v_4) \wedge \he(v_1, v_2, v_5) \wedge \he(v_3, v_4,v_6) \wedge (z \vee v_3) \wedge (\neg z \vee \neg v_3)$. This shows that the above satisfies Property~\textbf{(I\ref{itm:I1})} of being an essential implementation of~$\orr_3(x,y,z)$ when treating~$v_1, \ldots, v_6$ as dummy variables. Next, we show that it also satisfies Property~\textbf{(I\ref{itm:I2})} when correctly picking which constraint application should be its head.
    
    Now, recall that the expression $\he(x, v_1, v_4)$ is written as $h'(x,v_1,v_4) \wedge (\neg x \vee \neg v_1) \wedge (\neg x \vee \neg v_4) \wedge(\neg v_1 \vee \neg v_4)$ in expanded form. Then, the above implementation of $\orr_3(x,y,z)$ is essential when treating $h'(x,v_1,v_4)$ as its head. In the previous paragraph, we established that all assignments to~$\{x,y,z\}$ that satisfy~$\orr_3(x,y,z)$ can be extended to satisfy the entire implementation. Observe that the only remaining assignment to~$\{x, y, z\}$~---~the one that assigns all three variables to~$0$ --- can be extended to a satisfying assignment of the entire implementation minus its head, by setting~$v_2$ and~$v_3$ to~$1$ and~$v_1$,~$v_4$,~$v_5$, and~$v_6$ to~$0$.

    Hence, this proves that~$\{f, \neqq, \orr_2, \nor_2, \false\} \imp \orr_3$. Next, by \cref{lem:negated-substitution-lemma}, $\{\orr_3, \neqq\} \imp \nor_3$. By transitivity, $\{f, \neqq, \orr_2, \nor_2, \false\} \imp \nor_3$ as well.
\end{proof} 
In the next two lemmas, we show that the acquired constraints~$\orr_3$,~$\nor_3$, and~$\neqq$ suffice to essentially implement~$\nae_3$.
\begin{lemma} \label{lem:or3+neq-imps-nae3}
    $\{\orr_3, \nor_3, \neqq\} \imp \nae_3$.
\end{lemma}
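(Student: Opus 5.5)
The plan is to exhibit one explicit essential implementation of $\nae_3(x,y,z)$. We use $\nae_3(x,y,z) \equiv \orr_3(x,y,z) \wedge \nor_3(x,y,z)$ and attach both clauses to a fresh dummy ``switch'' variable $p$. The head will be the only constraint that forces $p=1$; once the head is removed, $p=0$ switches both clauses off.

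\textbf{Target formula.} We want, with dummy variables $p,q,r$,
\[
\orr_3(p,p,p) \wedge (\neg p \vee x \vee q) \wedge (\neg q \vee y \vee z) \wedge (\neg p \vee \neg x \vee r) \wedge (\neg r \vee \neg y \vee \neg z),
\]
with head $\orr_3(p,p,p)$, which is just $\true(p)$. The clauses containing negated literals are not in $\{\orr_3,\nor_3,\neqq\}$ directly. We obtain them through \cref{lem:negated-substitution-lemma} with $\fneq=\{\neqq\}$, applied repeatedly to $\orr_3$ or $\nor_3$: for instance $(\neg p\vee x\vee q)$ arises from $\orr_3$ by negating one position. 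Composition is justified by transitivity (\cref{lem:transitivity}), so each non-head clause is itself an essential implementation over $\{\orr_3,\nor_3,\neqq\}$. When expanded, all constraint applications coming from these non-head clauses stay non-head. The head $\orr_3(p,p,p)$ is already a genuine $\orr_3$ application and needs no expansion.

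\textbf{(I\ref{itm:I1}).} Given the head, $p=1$. The second and third clauses then say $\exists q\colon (x\vee q)\wedge(\neg q\vee y\vee z)$, which holds iff $x\vee y\vee z$ (take $q=1$ exactly when $x=0$). The last two clauses similarly say $\neg x\vee\neg y\vee\neg z$. Together these are exactly $\nae_3(x,y,z)$.

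\textbf{(I\ref{itm:I2}).} For any assignment to $x,y,z$, set $p=q=r=0$. Every non-head clause then contains a satisfied literal ($\neg p$, $\neg q$ or $\neg r$). The dummy variables introduced inside the \cref{lem:negated-substitution-lemma} gadgets can always be completed, because those gadgets are essential implementations and satisfy (I\ref{itm:I1}) for the clause they implement.

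\textbf{Main obstacle.} The main point of care is choosing the head so that (I\ref{itm:I2}) holds. The naive choice, $\orr_3(x,y,z)\wedge\nor_3(x,y,z)$ with either clause as head, fails: the remaining clause alone is violated by $000$ or by $111$. The switch variable $p$ is what fixes this. The remaining work is the routine check that the substituted negation gadgets preserve both properties, which follows from \cref{lem:transitivity}.
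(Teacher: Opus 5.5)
Your construction is correct, but it takes a different route from the paper's.

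The paper first uses \cref{lem:neq-implements-eq} to obtain $\eqq$. It then takes $(v=z)\wedge(x\vee y\vee v)\wedge(\neg x\vee\neg y\vee\neg v)$ with head $(v=z)$. In effect it replaces $z$ by a fresh copy $v$ that only the head ties to $z$. Without the head, what remains is $\nae_3(x,y,v)$, and setting $v=\neg x$ always satisfies it, so (I2) is immediate.

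You instead use a guard variable. The head is $\orr_3(p,p,p)\equiv\true(p)$, and every clause is weakened by $\neg p$. This widens the clauses, so you need the dummies $q,r$ to split them back to arity three. You also need \cref{lem:negated-substitution-lemma} to produce the mixed-sign clauses $\orr_{3,1}$ and $\orr_{3,2}$. As a result, $\neqq$ enters your proof through negation gadgets, whereas in the paper it enters through an equality.

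Your verification holds up:
\begin{itemize}
\item For (I1), eliminating $q$ yields $x\vee y\vee z$, and eliminating $r$ yields $\neg x\vee\neg y\vee\neg z$.
\item For (I2), setting $p=q=r=0$ satisfies every non-head clause, and the inner gadgets are then completed via their own (I1).
\item The composition follows from \cref{lem:transitivity}, because the head $\orr_3(p,p,p)$ is already an $\orr_3$ application and is not expanded.
\end{itemize}

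The paper's gadget is smaller. It exploits the specific fact that one coordinate of $\nae_3$ can always be completed, and it uses only unnegated $\orr_3$ and $\nor_3$ applications. Your switch construction is more generic: it works for any target expressible as a conjunction of clauses that you can still implement after adding one extra negated literal. The price is more dummy variables and the negation gadgets.
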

\begin{proof}
    By \cref{lem:neq-implements-eq}, $\neqq \imp \eqq$, so it suffices to prove that $\{\orr_3, \nor_3, \eqq\} \imp \nae_3$. We show that $(v = z) \wedge(x \vee y \vee v) \wedge(\neg x \vee \neg y \vee \neg v)$, with dummy variable~$v$ and head~$(v=z)$ is an essential implementation of $\nae_3$. Clearly, $(v = z) \wedge(x \vee y \vee v) \wedge(\neg x \vee \neg y \vee \neg v)$ is satisfiable if and only if $(x \vee y \vee z) \wedge (\neg x \vee \neg y \vee \neg z)$ is satisfied, which, in turn, is satisfied if and only if~$x$,~$y$, and~$z$ are not all equal. Hence, Property~\textbf{(I\ref{itm:I1})} is satisfied.

    Next, note that any assignment to~$x$,~$y$, and~$z$ extends to a satisfying assignment for $(x \vee y \vee v) \wedge(\neg x \vee \neg y \vee \neg v)$ by setting $v = \neg x$. Hence, Property~\textbf{(I\ref{itm:I2})} is satisfied.
\end{proof}
Now, the last five results combine to show the following.
\begin{lemma} \label{lem:true-false-imps-nae3+neq}
    Let~$\F$ be a constraint set that is neither weakly positive, nor weakly negative, nor affine, nor bijunctive. Then,~$\F \cup \{\true, \false\} \imp \{\nae_3, \neqq\}$.
\end{lemma}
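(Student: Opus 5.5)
The plan is to chain the preceding lemmas using transitivity (\cref{lem:transitivity}) and the monotonicity remark that enlarging the implementing set preserves essential implementations. Throughout, write $\F^+ := \F \cup \{\true, \false\}$.

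First, since $\F$ is neither weakly positive, nor weakly negative, nor affine, \cref{lem:uncomplementive-imps-or+nor} gives $\F^+ \imp \{\orr_2, \nor_2, \neqq\}$. Because $\F^+$ trivially essentially implements each of its own members (via $f \imp f$), we get $\F^+ \imp \F^+ \cup \{\orr_2, \nor_2, \neqq\}$.

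Second, since $\F$ is not bijunctive, we can fix some $f \in \F$ that is not bijunctive. Applying \cref{lem:not-bijunctive-implements-or3} to $f$ yields $\{f, \neqq, \orr_2, \nor_2, \false\} \imp \{\orr_3, \nor_3\}$. The left-hand side is contained in $\F^+ \cup \{\orr_2, \nor_2, \neqq\}$, so transitivity gives $\F^+ \imp \{\orr_3, \nor_3\}$.

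Third, combining this with $\F^+ \imp \neqq$ gives $\F^+ \imp \{\orr_3, \nor_3, \neqq\}$. \Cref{lem:or3+neq-imps-nae3} then provides $\{\orr_3, \nor_3, \neqq\} \imp \nae_3$, and one more application of transitivity gives $\F^+ \imp \nae_3$. Together with $\F^+ \imp \neqq$, this is the claim.

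There is no real obstacle here. The only care point is bookkeeping: transitivity (\cref{lem:transitivity}) is stated for whole sets, so at each step the intermediate set must be enlarged to include previously obtained constraints (and the original members of $\F^+$). This is legitimate because each constraint essentially implements itself and supersets preserve essential implementations. One should also check that the lemma about non-bijunctive constraints needs only a single non-bijunctive $f \in \F$, which exists by assumption.
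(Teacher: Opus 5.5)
Your proposal is correct and matches the paper's proof: you chain \cref{lem:uncomplementive-imps-or+nor}, then \cref{lem:not-bijunctive-implements-or3} applied to a single non-bijunctive $f \in \F$, then \cref{lem:or3+neq-imps-nae3}, all via transitivity. The only difference is that you spell out the set-enlargement bookkeeping (using $f \imp f$ and monotonicity under supersets), which the paper leaves implicit.
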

\begin{proof}
    By \cref{lem:uncomplementive-imps-or+nor},~$\F \cup \{\true, \false\} \imp \{ \orr_2, \nor_2, \neqq\}$. Combining this with the assumption that~$\F$ contains at least one constraint that is not bijunctive, it follows from \cref{lem:not-bijunctive-implements-or3} that~$\F \cup \{\true, \false\} \imp\{\orr_3, \nor_3\}$. Then, \cref{lem:or3+neq-imps-nae3} implies that~$\F \cup \{\true, \false\} \imp \nae_3$, completing the proof.
\end{proof}
The lemma above covers all required steps to prove hardness for the first case that \cref{lem:not-0-valid-1-valid} outlines. The lemma below covers the second case. Although we re-use some of the previous lemmas to prove it, this is where our approach deviates most strongly from that of Schaefer~\cite{Schaefer78}. In his work, the second case followed easily from the first one as his reductions only needed to preserve the satisfiability between formulas. Since our work requires more control over how the solution space of formulas are related to one another, we spend a larger effort on this second case.
\begin{lemma} \label{lem:complementive-imps-nae3}
    Let $\F$ be a Boolean constraint set such that $\F$ is not affine, and $\F$ is not bijunctive, and every $f \in \F$ is complementive. Then,~$\{\F, \neqq\} \imp \nae_3$.
\end{lemma}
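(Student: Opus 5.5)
The plan is to imitate the proof of \cref{lem:not-affine-uncomplmentive}, using an extra variable $w$ in place of the constant~$0$, which we cannot implement because everything here is complementive. Write $x \oplus w$ for the value of $x$ \emph{relative to} $w$. A complementive constraint $R(w,x_1,\ldots,x_p)$ is determined by its restriction to $w=0$, and that restriction describes a constraint on the relative values $x_i \oplus w$. The key observation is that $\nae_3(w,x,y)$ is exactly the complementive constraint whose $w=0$ restriction is $\orr_2(x,y)$. It is therefore the ``relative $\orr_2$'', and it suffices to essentially implement some relative $2$-clause.

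First, take the $4$-ary constraint $g$ from \cref{lem:not-affine}, with $\{f,\neqq\} \imp g$. I will check that every operation used there keeps constraints complementive: substitution (\cref{obs:substituted-implementation}), negated substitution via $\neqq$ (\cref{lem:negated-substitution-lemma}), and existential quantification (\cref{obs:quantified-implementation}). Hence $g$ is complementive, and its first coordinate can play the role of $w$. Its $w=0$ restriction $g'(x,y,z)=g(0,x,y,z)$ satisfies $g'(0,0,0)=g'(0,1,1)=g'(1,0,1)=1$ and $g'(1,1,0)=0$.

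Next, run the case analysis of \cref{lem:not-affine-uncomplmentive} without fixing~$w$.
\begin{itemize}
\item If $g'(0,1,0)=0$, then $\exists x\colon g(w,x,y,z)$ is complementive and equals the relative $(\neg y \vee z)$. By \cref{obs:quantified-implementation} it is essentially implemented.
\item If $g'(1,0,0)=0$, then symmetrically $\exists y\colon g(w,x,y,z)$ is the relative $(\neg x\vee z)$.
\item Otherwise, substitute $z:=w$ via \cref{obs:substituted-implementation}. Then $g(w,x,y,w)$ is the relative $(\neg x \vee \neg y)$.
\end{itemize}
In every case some relative $2$-clause on $(w,a,b)$ is essentially implemented by $\{f,\neqq\}$. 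Negating a variable $a$ through \cref{lem:negated-substitution-lemma}, i.e.\ replacing $a$ by $\neg a$ via a $\neqq$ gadget, flips the relative literal $a\oplus w$. So one or two applications turn any relative $2$-clause into the relative $\orr_2(a,b)$. This constraint is $\nae_3(w,a,b)$, and renaming variables gives $\{\F,\neqq\}\imp\nae_3$ by \cref{lem:transitivity}.

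I do not expect to need non-bijunctivity in this route. If the complementivity bookkeeping fails, the fallback is to do the same relativization with the $h$ from \cref{lem:not-bijunctive}. One would then mimic \cref{lem:not-bijunctive-implements-or3} relative to $w$ to get a relative $\orr_3$, i.e.\ $\nae_4$, and from it $\nae_3$ by identifying variables.

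The main obstacle is checking rigorously that each intermediate constraint really is complementive, so that its $w=1$ half is the complement of its $w=0$ half. In particular, the $\neqq$-gadgets of \cref{lem:negated-substitution-lemma} must preserve this property, and the chosen heads must keep property~\textbf{(I\ref{itm:I2})} intact. Both facts should follow from the earlier observations together with transitivity, but this is where I would look most carefully.
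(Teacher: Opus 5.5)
Your proof is correct, and it takes a genuinely different route from the paper's.

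\textbf{The paper's route.} The paper argues by contradiction and uses both hypotheses. From the non-affine witness it forms $g'(w,x,y,z)=g(\neg w,x,y,\neg z)$. It then either obtains $\nae_3$ as $\exists z\colon g'$, or learns that $g'(0,0,0,1)=1$. In the latter case it brings in the non-bijunctive witness $h$ of \cref{lem:not-bijunctive}. It shows that $\nae_3$ then arises either as $g'(w,x,y,v)\wedge h(w,x,y,v)$, with dummy $v$ and head $g'$, or as $\exists w\colon h(w,x,y,\neg z)$.

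\textbf{Your route.} You use the identity $R(w,\vec{x})=R(0,\vec{x}\oplus w)$ for complementive $R$, so that a fresh variable $w$ stands in for the constant $0$, which is unavailable here. You then rerun the three-way case split of \cref{lem:not-affine-uncomplmentive} on $g$ alone and finish with at most two $\neqq$-negations. The cases are exhaustive, and each yields the relative $2$-clause you claim. For instance, in the last case $g(w,x,y,w)\equiv\nae_3(\neg w,x,y)$.

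\textbf{Your flagged obstacle.} The bookkeeping you worry about is routine. Negating coordinates, identifying coordinates, and projecting all preserve closure under complement. Property~\textbf{(I2)} is already guaranteed by the cited observations, \cref{lem:negated-substitution-lemma}, and \cref{lem:transitivity}.

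\textbf{What each route buys.} Your route is shorter and direct. It also shows that the non-bijunctivity hypothesis is superfluous. This is no accident. Writing $\mathrm{maj}$ for coordinatewise majority, one has $s_1\oplus s_2\oplus s_3=\mathrm{maj}(\mathrm{maj}(s_1,s_2,\overline{s_3}),\mathrm{maj}(s_1,\overline{s_2},s_3),\mathrm{maj}(\overline{s_1},s_2,s_3))$. Hence a complementive constraint that is closed under majority (that is, bijunctive) is automatically affine. The paper's route avoids the relativization idea, at the cost of an extra hypothesis and a longer case analysis. The extra hypothesis is harmless in its only application, \cref{lem:not-in-any-category-imps-nae5}.
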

\begin{proof}
    We prove the claim by contradiction, so suppose that~$\{\F, \neqq\}$ does not essentially implement~$\nae_3$.

    Let~$f_1 \in \F$ be a constraint that is not affine. By \cref{lem:not-affine}, there is a constraint~$g$ over four variables such that~$\{f_1, \neqq\} \imp g$ and $g(0,0,0,0) = g(0,0,1,1) = g(0,1,0,1) = 1$, and~$g(0,1,1,0) = 0$. Now, if we define~$g'(w,x,y,z) := g(\neg w, x, y, \neg z)$ we know that $g'(1,0,0,1) = g'(1,0,1,0) = g'(1,1,0,0) = 1$ and~$g'(1,1,1,1) = 0$ and, by \cref{lem:negated-substitution-lemma}, that~$\{g, \neqq\} \imp g'$. Moreover, because $g$ --- and by extension $g'$ --- can be expressed as a conjunction of applications of the complementive constraints~$f_1$ and~$\neqq$, $g'$ is also complementive. As such, we find that $g'(0,1,1,0) = g'(0,1,0,1) = g'(0,0,1,1) = 1$ and $g'(0,0,0,0) = 0$.

    Now, suppose for contradiction that~$g'(0,0,0,1) = 0$. Then, since~$g'$ is complementive, also~$g'(1,1,1,0) = 0$. Now, we show that $g^\ast(w,x,y) := \exists z: g'(w,x,y,z)$ is equivalent to~$\nae(w,x,y)$. First note that all extensions of the assignments~$w=x=y=0$ and~$w=x=y=1$ result in assignments that do not satisfy~$g'$. Next, observe that for every other assignment to~$w$,~$x$, and~$y$ there is an assignment to~$z$ such that the resulting extension is known to satisfy~$g'$. Hence,~$g^\ast$ and~$\nae_3$ are indeed equivalent. Moreover,~$g'\imp g^\ast$ by \cref{obs:quantified-implementation}. Since~$\{f_1, \neqq\}\imp g'$ this contradicts the assumption that~$\{\F, \neqq\}$ does not essentially implement~$\nae_3$. Thus, we find that~$g'(0,0,0,1) = 1$, and, because~$g'$ is complementive, also~$g'(1,1,1,0) = 1$.

    Next, let~$f_2 \in \F$ be a constraint that is not bijunctive. Then, by \cref{lem:not-bijunctive}, there is a constraint~$h$ over four variables such that~$\{f_2, \neqq\} \imp h$ and $h(0,0,0,0) = h(0,0,1,1) = h(0,1,0,1) = 1$, and~$h(0,0,0,1) = 0$. Since~$h$ can be written as conjunction of the complementive constraints~$f_2$ and~$\neqq$, $h$ is also complementive. As such, we find that $h(1,1,1,1) = h(1,1,0,0) = h(1,0,1,0) = 1$ and~$h(1,1,1,0) = 0$.

    Now, suppose for contradiction that~$h(1,0,0,1) = 1$. Then, since~$h$ is complementive, also~$h(0,1,1,0) = 1$. We continue by showing that $g'(w,x,y,v) \wedge h(w,x,y,v)$, with dummy variable~$v$ and the first constraint application as its head, is an essential implementation of~$\nae(w,x,y)$. We start by showing that Property~\textbf{(I\ref{itm:I1})} holds for this implementation and we do so in two parts.

    First, we show that if~$w=x=y=0$ or~$w=x=y=1$ there is no~$v$ such that the extension satifies~$g'(w,x,y,v) \wedge h(w,x,y,v)$. This follows from the fact that $g'(0,0,0,0) = h(0,0,0,1) = g'(1,1,1,1) = h(1,1,1,0) = 0$.

    Secondly, we show that for all other assignments to~$w$,~$x$, and~$y$ (i.e.: all assingments in which these are not all equal) there is a~$v$ such that the resulting extension satisfies~$g'(w,x,y,v) \wedge h(w,x,y,v)$. This follows from the facts that~$g'(0,0,1,1) = h(0,0,1,1)=1$, and~$g'(0,1,0,1) = h(0,1,0,1) = 1$, and~$g'(0,1,1,0) = h(0,1,1,0) = 1$, and~$g'(1,0,0,1) = h(1,0,0,1) = 1$, and~$g'(1,0,1,0) = h(1,0,1,0) = 1$, and~$g'(1,1,0,0) = h(1,1,0,0) = 1$.

    So indeed, an assignment to~$w$,~$x$, and~$y$ satisfies~$\nae(w,x,y)$ if and only if it admits an extension that satisfies~$g'(w,x,y,v) \wedge h(w,x,y,v)$. To show that this latter conjunction is indeed an essential implementation of $\nae(x,y,z)$, it remains to show that Property~\textbf{(I\ref{itm:I2})} is satisfied for it. For this, we need to show that every assignment to to~$w$,~$x$, and~$y$ admits an extension that satisfies~$h(w,x,y,v)$. This is indeed the case as witnessed by the fact that $h(0,0,0,0) = h(0,0,1,1) = h(0,1,0,1) = h(0,1,1,0) = h(1,0,0,1) = h(1,0,1,0) = h(1,1,0,0) = h(1,1,1,1) = 1$.

    This shows that~$\{h,g'\} \imp \nae_3$. As~$\{\F, \neqq\} \imp \{h, g'\}$, this contradicts the assumption that $\{\F, \neqq\}$ does not essentially implement~$\nae_3$. Thus, we find that~$h(1,0,0,1) = 0$. Because~$h$ is complementive, also $h(0,1,1,0) = 0$. To reach our final contradiction, we use this insight to prove that $h \imp \nae_3$.

    To this end, we define~$h'(w,x,y,z) = h(w,x,y,\neg z)$ so that $h'(0,0,0,1) = h'(1,1,1,0) = h'(0,0,1,0) = h'(1,1,0,1) = h'(0,1,0,0) = h'(1,0,1,1) = 1$ and $h'(1,1,1,1) = h'(0,0,0,0) = h'(1,0,0,0) = h'(0,1,1,1) = 0$. By \cref{lem:negated-substitution-lemma},~$\{h, \neqq\} \imp h'$. Now, it is not hard to verify that $h^\ast(x,y,z) := \exists w: h'(w,x,y,z)$ is equivalent to~$\nae(x,y,z)$. By \cref{obs:quantified-implementation}, $h' \imp h^\ast$, and because $\{\F, \neqq\} \imp h'$ this contradicts that $\{\F, \neqq\}$ does not essentially implement~$\nae_3$. Thus, we conclude that this set does essentially implement~$\nae_3$.
\end{proof}
Now, the results above culminate in proving the main result of this section.
\begin{lemma} \label{lem:not-in-any-category-imps-nae5}
    Let $\F$ be a Boolean constraint set that is neither $0$-valid, nor $1$-valid, nor weakly positive, nor weakly negative, nor affine, nor bijunctive. Then, $\F \imp \{\nae_5, \neqq, \linebreak[1] \eqq\}$.
\end{lemma}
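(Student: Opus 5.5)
The plan is to split along the two cases of \cref{lem:not-0-valid-1-valid}, which applies because $\F$ is neither $0$-valid nor $1$-valid. In each case I will first obtain $\F \imp \{\nae_3, \neqq\}$. Then \cref{lem:neq-implements-eq} gives $\F \imp \eqq$, and \cref{lem:nae3-implements-nae5} gives $\F \imp \nae_5$. All of these steps are chained using transitivity (\cref{lem:transitivity}).

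\textbf{Case (\ref{itm:hardness-when-true-false}): $\F \imp \{\true, \false\}$.} Since $\F$ is neither weakly positive, nor weakly negative, nor affine, nor bijunctive, \cref{lem:true-false-imps-nae3+neq} yields $\F \cup \{\true, \false\} \imp \{\nae_3, \neqq\}$. Every constraint of $\F$ is essentially implemented by $\F$ itself (via $f \imp f$), and $\true, \false$ are implemented by assumption. Hence $\F \imp \F \cup \{\true, \false\}$, and transitivity gives $\F \imp \{\nae_3, \neqq\}$.

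\textbf{Case (\ref{itm:hardness-when-complementive}): $\F \imp \neqq$ and every $f \in \F$ is complementive.} Here \cref{lem:complementive-imps-nae3} applies directly, since $\F$ is not affine, not bijunctive, and complementive. It gives $\{\F, \neqq\} \imp \nae_3$. Combined with $\F \imp \F \cup \{\neqq\}$ and transitivity, this yields $\F \imp \nae_3$, and $\F \imp \neqq$ holds by assumption.

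In both cases, \cref{lem:nae3-implements-nae5} and \cref{lem:neq-implements-eq} then finish the proof. The only delicate point is the bookkeeping in transitivity. Implementing a set that contains $\F$ plus some auxiliary constraints requires that each auxiliary constraint be implemented by $\F$, and that each $f\in\F$ be implemented trivially with itself as head. \cref{lem:transitivity} is stated for sets, so this is routine. I do not expect a real obstacle: all substantive work sits in the earlier lemmas.
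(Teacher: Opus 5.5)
Your proposal is correct and follows the paper's proof essentially verbatim. You split via \cref{lem:not-0-valid-1-valid}, obtain $\F \imp \{\nae_3, \neqq\}$ from \cref{lem:true-false-imps-nae3+neq} or \cref{lem:complementive-imps-nae3} respectively, and finish with \cref{lem:neq-implements-eq,lem:nae3-implements-nae5}; your transitivity bookkeeping (using $f \imp f$ to get $\F \imp \F \cup \{\true,\false\}$, resp.\ $\F \imp \F \cup \{\neqq\}$) only makes explicit what the paper leaves implicit.
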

\begin{proof}
    If~$\F \imp \{\true, \false\}$, then we know by \cref{lem:true-false-imps-nae3+neq} that $\F \imp \{\nae_3, \neqq\}$. Otherwise, we obtain from \cref{lem:not-0-valid-1-valid} that every constraint~$f \in \F$ is complementive and that $\F \imp \neqq$. Then, we find that~$\F \imp \nae_3$ by \cref{lem:complementive-imps-nae3}. Thus, in any case, we have that $\F \imp \{\nae_3, \neqq\}$. Finally, by \cref{lem:neq-implements-eq,lem:nae3-implements-nae5}, we obtain that $\F \imp \{\nae_5, \neqq, \eqq\}$.
\end{proof}

\section{Conclusion and Discussion}
\label{sec:conclusion}
We extended the framework of search-space reduction via essential objects beyond the setting of graph problems for which it was originally defined. Investigating the task of detecting~$c$-essential constraint applications for \minF has resulted in a dichotomy that characterizes for which constraint sets~$\F$ this is polynomial-time solvable. We note the similarity between our dichotomy theorem and the dichotomy theorem from Bonnet et al.~\cite{BonnetEM16} regarding the constant-factor approximability of \minF in FPT time. Both dichotomies yield positive results for exactly the same classes of constraint sets. We leave it to future work to investigate whether this is a coincidence or a consequence of a deeper connection between the two tasks.

Other directions for follow-up research include extending the results of this dichotomy to CSPs beyond the Boolean domain. In fact, successfully applying the idea of~$c$-essentiality to a problem domain different from graph theory has solidified the framework as a robust method by which meaningful and non-trivial new results can be attained across problem domains. As such, one could also attempt to find results using it in even different areas such as scheduling or computational geometry.

While our dichotomy theorem characterizes when $c$-essential detection is possible for \emph{some} constant~$c$, it does not shed any light on what the smallest value of~$c$ is for which this task is tractable, nor on the relation between this value and the choice of constraint set~$\mathcal{F}$. We leave this to future work. A concrete question in this direction is to determine the smallest  constant~$c_\F$ for which \textsc{$c_\F$-essential Detection for \minF} is possible for bijunctive constraint languages~$\F$. Our approach in \cref{lem:bijunctive} of solving several cut problems in the implication graph gives a guarantee of~$c_\F = (2d^2+1)$ whenever each constraint in~$\F$ can be expressed as a 2CNF-formula with~$d$ clauses, but we do not expect this to be best-possible. Perhaps a discrete relaxation of the cut problem, like those studied in earlier work on \textsc{Almost $2$-SAT}~\cite{IwataWY16,IwataYY18}, can provide a better bound?

A final open question concerns the theoretical basis for the impossibility of \textsc{$c_\F$-essential Detection} for affine constraint languages~$\mathcal{F}$. We currently rely on the Unique Games Conjecture to rule out $c_\F$-detection algorithms for this family, via known lower bounds for \mincsp{$\{\eqq, \neqq\}$}~\cite{Khot02}. Can the same lower bound be established relative to the standard assumption P~$\neq$~NP, for example via known hardness of approximation results for \textsc{Nearest Codeword}~\cite{BhattiproluGR25}?

\bibliography{references}

@book{AroraB09,
  author       = {Sanjeev Arora and
                  Boaz Barak},
  title        = {Computational Complexity - {A} Modern Approach},
  publisher    = {Cambridge University Press},
  year         = {2009},
  url          = {http://www.cambridge.org/catalogue/catalogue.asp?isbn=9780521424264},
  isbn         = {978-0-521-42426-4},
  bibsource    = {dblp computer science bibliography, https://dblp.org},
  doi          = {10.1017/CBO9780511804090}
}

@article{BhattiproluGR25,
  author       = {Vijay Bhattiprolu and
                  Venkatesan Guruswami and
                  Xuandi Ren},
  title        = {{PCP}-free {APX}-Hardness of Nearest Codeword and Minimum Distance},
  journal      = {Electron. Colloquium Comput. Complex.},
  volume       = {{TR25}},
  eid          = {{TR25-029}},
  year         = {2025},
  url          = {https://eccc.weizmann.ac.il/report/2025/029},
  eprinttype   = {ECCC},
  eprint       = {TR25-029},
}

@article{AspvallPT79,
  author       = {Bengt Aspvall and
                  Michael F. Plass and
                  Robert Endre Tarjan},
  title        = {A Linear-Time Algorithm for Testing the Truth of Certain Quantified
                  Boolean Formulas},
  journal      = {Inf. Process. Lett.},
  volume       = {8},
  number       = {3},
  pages        = {121--123},
  year         = {1979},
  doi          = {10.1016/0020-0190(79)90002-4},
  bibsource    = {dblp computer science bibliography, https://dblp.org}
}

@article{RamanujanS17,
  author       = {M. S. Ramanujan and
                  Saket Saurabh},
  title        = {Linear-Time Parameterized Algorithms via Skew-Symmetric Multicuts},
  journal      = {{ACM} Trans. Algorithms},
  volume       = {13},
  number       = {4},
  pages        = {46:1--46:25},
  year         = {2017},
  doi          = {10.1145/3128600},
}

@inproceedings{IwataYY18,
  author       = {Yoichi Iwata and
                  Yutaro Yamaguchi and
                  Yuichi Yoshida},
  editor       = {Mikkel Thorup},
  title        = {{0/1/All CSPs}, Half-Integral {A}-Path Packing, and Linear-Time {FPT}
                  Algorithms},
  booktitle    = {59th {IEEE} Annual Symposium on Foundations of Computer Science, {FOCS}
                  2018, Paris, France, October 7-9, 2018},
  pages        = {462--473},
  publisher    = {{IEEE} Computer Society},
  year         = {2018},
  doi          = {10.1109/FOCS.2018.00051},
}

@article{IwataWY16,
  author       = {Yoichi Iwata and
                  Magnus Wahlstr{\"{o}}m and
                  Yuichi Yoshida},
  title        = {Half-integrality, LP-branching, and {FPT} Algorithms},
  journal      = {{SIAM} J. Comput.},
  volume       = {45},
  number       = {4},
  pages        = {1377--1411},
  year         = {2016},
  doi          = {10.1137/140962838},
}

@techreport{AchterbergBGRW16,
  author      = {Tobias Achterberg and Robert E. Bixby and Zonghao Gu and Edward Rothberg and Dieter Weninger},
  title       = {Presolve Reductions in Mixed Integer Programming},
  institution = {ZIB},
  address     = {Takustr.7, 14195 Berlin},
  number      = {16-44},
	url = {http://nbn-resolving.de/urn:nbn:de:0297-zib-60370},
  urn         = {urn:nbn:de:0297-zib-60370},
  year        = {2016}
}

@article{BumpusJK24,
  author       = {Benjamin Merlin Bumpus and
                  Bart M. P. Jansen and
                  Jari J. H. de Kroon},
  title        = {Search-Space Reduction via Essential Vertices},
  journal      = {{SIAM} J. Discret. Math.},
  volume       = {38},
  number       = {3},
  pages        = {2392--2415},
  year         = {2024},
  doi          = {10.1137/23M1589347},
  bibsource    = {dblp computer science bibliography, https://dblp.org}
}

@inproceedings{Schaefer78,
  author       = {Thomas J. Schaefer},
  editor       = {Richard J. Lipton and
                  Walter A. Burkhard and
                  Walter J. Savitch and
                  Emily P. Friedman and
                  Alfred V. Aho},
  title        = {The Complexity of Satisfiability Problems},
  booktitle    = {Proceedings of the 10th Annual {ACM} Symposium on Theory of Computing,
                  May 1-3, 1978, San Diego, California, {USA}},
  pages        = {216--226},
  publisher    = {{ACM}},
  year         = {1978},
  doi          = {10.1145/800133.804350},
  bibsource    = {dblp computer science bibliography, https://dblp.org}
}

@article{KhannaSTW00,
  author       = {Sanjeev Khanna and
                  Madhu Sudan and
                  Luca Trevisan and
                  David P. Williamson},
  title        = {The Approximability of Constraint Satisfaction Problems},
  journal      = {{SIAM} J. Comput.},
  volume       = {30},
  number       = {6},
  pages        = {1863--1920},
  year         = {2000},
  doi          = {10.1137/S0097539799349948},
  bibsource    = {dblp computer science bibliography, https://dblp.org}
}

@inproceedings{Khot02,
  author       = {Subhash Khot},
  editor       = {John H. Reif},
  title        = {On the power of unique 2-prover 1-round games},
  booktitle    = {Proceedings on 34th Annual {ACM} Symposium on Theory of Computing,
                  May 19-21, 2002, Montr{\'{e}}al, Qu{\'{e}}bec, Canada},
  pages        = {767--775},
  publisher    = {{ACM}},
  year         = {2002},
  doi          = {10.1145/509907.510017},
  bibsource    = {dblp computer science bibliography, https://dblp.org}
}

@article{KratschMW16,
  author       = {Stefan Kratsch and
                  D{\'{a}}niel Marx and
                  Magnus Wahlstr{\"{o}}m},
  title        = {Parameterized Complexity and Kernelizability of Max Ones and Exact
                  Ones Problems},
  journal      = {{ACM} Trans. Comput. Theory},
  volume       = {8},
  number       = {1},
  pages        = {1:1--1:28},
  year         = {2016},
  doi          = {10.1145/2858787},
  bibsource    = {dblp computer science bibliography, https://dblp.org}
}

@inproceedings{KratschW10,
  author       = {Stefan Kratsch and
                  Magnus Wahlstr{\"{o}}m},
  editor       = {Samson Abramsky and
                  Cyril Gavoille and
                  Claude Kirchner and
                  Friedhelm Meyer auf der Heide and
                  Paul G. Spirakis},
  title        = {Preprocessing of Min Ones Problems: {A} Dichotomy},
  booktitle    = {Automata, Languages and Programming, 37th International Colloquium,
                  {ICALP} 2010, Bordeaux, France, July 6-10, 2010, Proceedings, Part
                  {I}},
  series       = {Lecture Notes in Computer Science},
  volume       = {6198},
  pages        = {653--665},
  publisher    = {Springer},
  year         = {2010},
  doi          = {10.1007/978-3-642-14165-2_55},
  bibsource    = {dblp computer science bibliography, https://dblp.org}
}

@article{KimKPW25,
  author       = {Eun Jung Kim and
                  Stefan Kratsch and
                  Marcin Pilipczuk and
                  Magnus Wahlstr{\"{o}}m},
  title        = {Flow-Augmentation {III:} {C}omplexity Dichotomy for {B}oolean {CSPs} Parameterized
                  by the Number of Unsatisfied Constraints},
  journal      = {{SIAM} J. Comput.},
  volume       = {54},
  number       = {4},
  pages        = {1065--1137},
  year         = {2025},
  doi          = {10.1137/23M1553698},
  bibsource    = {dblp computer science bibliography, https://dblp.org}
}

@article{JansenV26,
  author       = {Bart M. P. Jansen and
                  Ruben F. A. Verhaegh},
  title        = {Search-space reduction via essential vertices revisited: {V}ertex multicut
                  and cograph deletion},
  journal      = {J. Comput. Syst. Sci.},
  volume       = {156},
  pages        = {103730},
  year         = {2026},
  doi          = {10.1016/J.JCSS.2025.103730},
  bibsource    = {dblp computer science bibliography, https://dblp.org}
}

@inproceedings{BonnetEM16,
  author       = {{\'{E}}douard Bonnet and
                  L{\'{a}}szl{\'{o}} Egri and
                  D{\'{a}}niel Marx},
  editor       = {Piotr Sankowski and
                  Christos D. Zaroliagis},
  title        = {Fixed-Parameter Approximability of {B}oolean {MinCSPs}},
  booktitle    = {24th Annual European Symposium on Algorithms, {ESA} 2016, Aarhus,
                  Denmark, August 22-24, 2016},
  series       = {LIPIcs},
  volume       = {57},
  pages        = {18:1--18:18},
  publisher    = {Schloss Dagstuhl - Leibniz-Zentrum f{\"{u}}r Informatik},
  year         = {2016},
  doi          = {10.4230/LIPICS.ESA.2016.18},
  bibsource    = {dblp computer science bibliography, https://dblp.org}
}

@article{FordF56, 
    title={Maximal Flow Through a Network}, 
    volume={8}, 
    DOI={10.4153/CJM-1956-045-5}, 
    journal={Canadian Journal of Mathematics}, 
    author={Ford, L. R. and Fulkerson, D. R.}, 
    year={1956}, 
    pages={399–404}
}

@book{FominLSM19,
  title={Kernelization: theory of parameterized preprocessing},
  author={Fomin, Fedor and Lokshtanov, Daniel and Saurabh, Saket and Zehavi, Meirav},
  year={2019},
  publisher={Cambridge University Press}
}

@article{KratschW20,
  author       = {Stefan Kratsch and
                  Magnus Wahlstr{\"{o}}m},
  title        = {Representative Sets and Irrelevant Vertices: {N}ew Tools for Kernelization},
  journal      = {J. {ACM}},
  volume       = {67},
  number       = {3},
  pages        = {16:1--16:50},
  year         = {2020},
  doi          = {10.1145/3390887},
  bibsource    = {dblp computer science bibliography, https://dblp.org}
}

@article{JansenW24,
  author       = {Bart M. P. Jansen and
                  Michal Wlodarczyk},
  title        = {Optimal Polynomial-Time Compression for {B}oolean Max {CSP}},
  journal      = {{ACM} Trans. Comput. Theory},
  volume       = {16},
  number       = {1},
  pages        = {4:1--4:20},
  year         = {2024},
  doi          = {10.1145/3624704},
  bibsource    = {dblp computer science bibliography, https://dblp.org}
}

@book{DFUVol7,
  title =	{The Constraint Satisfaction Problem: Complexity and Approximability},
  series =	{Dagstuhl Follow-Ups},
  ISBN =	{978-3-95977-003-3},
  ISSN =	{1868-8977},
  year =	{2017},
  volume =	{7},
  author =	{Krokhin, Andrei and Zivny, Stanislav},
  publisher =	{Schloss Dagstuhl -- Leibniz-Zentrum f{\"u}r Informatik},
  address =	{Dagstuhl, Germany},
  URN =		{urn:nbn:de:0030-drops-69752},
  doi =		{10.4230/DFU.Vol7.15301}
}

\end{document}